\newlength{\smallfigwidth}
\newlength{\smallfigheight}
\newlength{\smallfigsep}
\newlength{\legendheight}
\setlist{nolistsep}
\setlist[itemize]{leftmargin=*}
\setlist[enumerate]{leftmargin=*}
\algnewcommand\algorithmicinput{\textbf{Input:}}
\algnewcommand\Input{\item[\algorithmicinput]}
\algnewcommand\algorithmicoutput{\textbf{Output:}}
\algnewcommand\Output{\item[\algorithmicoutput]}
\newtheorem{definition}{Definition}
\newtheorem{lemma}[definition]{Lemma}
\newtheorem{theorem}[definition]{Theorem}
\newtheorem{example}[definition]{Example}
\newtheorem{claim}[definition]{Claim}
\newcommand{\abs}[1]{\left|#1\right|}
\renewcommand{\lg}{\log}
\renewcommand{\varepsilon}{\epsilon}
\newcommand{\poly}{\operatorname{poly}}
\newcommand{\polylg}{\operatorname{polylg}}
\newcommand{\f}{\mathbf{f}}
\newcommand{\y}{\mathbf{y}}
\newcommand{\x}{\mathbf{x}}
\newcommand{\tO}{\tilde{O}}
\newcommand{\Gun}{G^{\textrm{un}}}
\newcommand{\tDelta}{\delta}
\newcommand{\ppp}{\mathbf{p}}
\newcommand{\qq}{\mathbf{q}}
\newcommand{\DD}{\mathbf{D}}
\newcommand{\WW}{\mathbf{W}}
\newcommand{\AAA}{\mathbf{A}}
\newcommand{\II}{\mathbf{I}}
\newcommand{\LLL}{\mathbf{L}}
\newcommand{\LL}{\ensuremath{\mathbf{\mathcal{L}}}}
\newcommand{\trvv}{\mathbf{v}'}
\newcommand{\trii}{\mathbf{b}}
\newcommand{\vv}{\mathbf{v}}
\newcommand{\mm}{\mathbf{m}}
\newcommand{\ww}{\mathbf{w}_1}
\newcommand{\1}{\mathbf{1}}
\newcommand{\Exp}[1]{\mathbf{E}\left[ #1 \right]}
\newcommand{\Var}[1]{\mathbf{Var}\left[ #1 \right]}
\newcommand{\Prob}[1]{\mathbf{Pr}\left( #1 \right)}
\DeclareRobustCommand{\ALG}{%
	\ifmmode
		\operatorname{ON}
	\else
		\text{ON}\xspace
	\fi
}
\DeclareRobustCommand{\OFF}{%
	\ifmmode
		\operatorname{OFF}
	\else
		\text{OFF}\xspace
	\fi
}
\DeclareRobustCommand{\APPROXALGO}{%
	\ifmmode
		\operatorname{APPROX}
	\else
		\text{APPROX}\xspace
	\fi
}
\providecommand{\abs}[1]{\ensuremath{\left\lvert#1\right\rvert}}
\providecommand{\norm}[1]{\ensuremath{\lVert#1\rVert}}
\def\vol{\operatorname{vol}}
\newcommand{\betain}{\beta_{\text{in}}}
\newcommand{\pinner}{p_\text{intra}}
\newcommand{\pcross}{p_\text{cross}}
\newcommand{\psign}{p_\text{sign}}
\newcommand{\qsign}{q_\text{sign}}
\newcommand{\rwseeded}{\textsc{rw-seeded}\xspace}
\newcommand{\rwunseeded}{\textsc{rw-unseeded}\xspace}
\newcommand{\rwuseeded}{\textsc{rw-u-seeded}\xspace}
\newcommand{\rwuunseeded}{\textsc{rw-u-unseeded}\xspace}
\newcommand{\FOCG}{\textsc{FOCG}\xspace}
\newcommand{\polarSeeds}{\textsc{polarSeeds}\xspace}
\newcommand{\WikiSmall}{\textsf{WikiS}\xspace}
\newcommand{\WikiMedium}{\textsf{WikiM}\xspace}
\newcommand{\WikiLarge}{\textsf{WikiL}\xspace}
\newcommand{\betainner}{\beta^{\text{inner}}}
\newcommand{\betaout}{\beta_{\text{out}}}
\title{Sublinear-Time Clustering Oracle for Signed Graphs}
\date{}
\author[1]{Stefan Neumann}
\author[2]{Pan Peng\footnote{Correspondence to: Pan Peng <\href{mailto:ppeng@ustc.edu.cn}{ppeng@ustc.edu.cn}>.}}
\affil[1]{KTH Royal Institute of Technology, Stockholm, Sweden}
\affil[2]{School of Computer Science and Technology, University of Science and Technology of China, Hefei, China}
\begin{document}

\maketitle

\begin{abstract}
Social networks are often modeled using \emph{signed} graphs, where vertices correspond to users and edges have a sign that indicates whether an interaction between users was positive or negative.  The arising signed graphs typically contain a clear community structure in the sense that the graph can be partitioned into a small number of \emph{polarized} communities, each defining a sparse cut and indivisible into smaller polarized sub-communities.  We provide a \emph{local clustering oracle} for signed graphs with such a clear community structure, that can answer membership queries, i.e., ``\emph{Given a vertex~$v$, which community does~$v$ belong to?}'', in sublinear time by reading only a small portion of the graph. Formally, when the graph has bounded maximum degree and the number of communities is at most $O(\log n)$, then with $\tilde{O}(\sqrt{n}\poly(1/\varepsilon))$ preprocessing time, our oracle can answer each membership query in $\tilde{O}(\sqrt{n}\poly(1/\varepsilon))$ time, and it correctly classifies a $(1-\varepsilon)$-fraction of vertices w.r.t.\ a set of hidden planted ground-truth communities. Our oracle is desirable in applications where the clustering information is needed for only a small number of vertices.  Previously, such local clustering oracles were only known for \emph{un}signed graphs; our generalization to signed graphs requires a number of new ideas and gives a novel spectral analysis of the behavior of \emph{random walks with signs}.  We evaluate our algorithm for constructing such an oracle and answering membership queries on both synthetic and real-world datasets, validating its performance in practice.
\end{abstract}

\section{Introduction}
Finding clusters (or communities) in graphs is a well-studied and fundamental problem in computer
science. While classically this problem has been studied in unsigned graphs,
several recent works have focused on \emph{signed} graphs, where each edge has
a sign indicating whether the interaction between two nodes was friendly or
hostile. This setting has been motivated by polarization in social networks,
where the users form groups that have mostly friendly interactions within each
group but there may exist hostile interactions between opposing groups (see,
e.g., \cite{bonchi2019discovering,xiao2020searching,ordozgoiti20finding,atay20cheeger}).

More concretely, in a signed graph $G=(V,E,\sigma)$, each edge $e=(u,v)\in E$ is
associated with a sign $\sigma(e) \in \{+,-\}$ indicating a positive (friendly)
or negative (hostile) relation between the two vertices $u$ and $v$. To model
polarization, \citet{harary1953notion} proposed the notion of balanced
graphs: a graph is \emph{balanced} if it can be partitioned into two
subsets $V_1$ and $V_2$ such that the induced subgraphs $G[V_1]$ and $G[V_2]$
only contain positive edges, while all edges with one endpoint in $V_1$ and the other
endpoint in $V_2$ have a negative sign.
For example, in a social network like
Twitter the groups $V_1$ and $V_2$ could correspond to users of opposing
opinions (e.g., Democrats and Republicans) that have a conflict but that behave
nicely within their respective groups.

To detect polarization in social networks, several recent works have aimed at
finding \emph{nearly-balanced} communities inside signed graphs, i.e., their
goal was to find induced subgraphs that after the removal of only few edges
become balanced and are sparsely connected to the
outside~\cite{bonchi2019discovering,mercado2019spectral,xiao2020searching}.
Often the resulting communities are \emph{minimal} in the sense that they are
nearly-balanced and they cannot be further divided into smaller nearly-balanced
communities. We will also refer to these communities as \emph{polarized}.
The main drawback of many existing methods for finding polarized communities is
that they are inherently \emph{global}, i.e., they need to process the full
graph and return a partitioning of \emph{all} vertices.
In practice, however, the graphs are often so large that methods
which aim to cluster all vertices have prohibitively high running times.
Additionally, when mining social networks, the full graph is often not available
because social network providers, such as Twitter, limit the amount of data that
is available due to privacy constraints.

Fortunately, in many settings we only require the community membership
information \emph{for a small number of vertices}, or we just want to know if
two given vertices belong to the same cluster or not. This could be the case,
for example, when an analyst wishes to find out whether two users are part of
the same polarized discussion or not. Furthermore, even in settings when the amount of
data is limited (like in the Twitter example above), it seems feasible to
explore the local neighborhood  (e.g., by performing random walks) of each user that shall be
classified.

\textbf{Our contributions.}
We provide a \emph{local clustering oracle for signed graphs}.
The oracle preprocesses a small part of the graph and after the preprocessing
finished, for a queried vertex~$v$ it can answer the following query:
\begin{itemize}
    \item \textsc{WhichCluster}($v$): Returns which cluster $v$ belongs to.
\end{itemize}
Here, we assume that there is a set of (hidden) ground-truth clusters and
\textsc{WhichCluster}($v$) returns the index of the cluster $v$ belongs to. Ideally, when two nodes $u,v$ belong to the same ground-truth cluster,
then the queries \textsc{WhichCluster}($u$) and \textsc{WhichCluster}($v$) will
return the same result, and if $u,v$ belong to different clusters, the results
will be different.

Both the preprocessing time as well as the query time of our clustering oracle
are \emph{sublinear} in the size of the input graph.
This is particularly useful when the clustering information is only required for
a small number of vertices.  More concretely, our oracle has preprocessing
time\footnote{Here, $\tO(f(n))$ denotes running times $O(f(n)\cdot\polylg(n))$.}
$\tO(\sqrt{n} \poly(1/\varepsilon))$, where $n$ is the number of vertices in the
graph and $\varepsilon>0$ is an error parameter. The query procedure has
query time $\tO(\sqrt{n} \poly(1/\varepsilon))$. Such clustering oracles have
been previously studied for unsigned
graphs~\cite{peng20robust} from a theoretical point of view but none were known
for signed graphs and it was unclear how well they perform in practice. 

In a nutshell, the query procedure \textsc{WhichCluster}($v$) performs
$\tO(\sqrt{n} \poly(1/\varepsilon))$ random walks of length $O(\lg n)$ starting
at $v$ and then aggregates the information from these random walks into a sparse
vector $\mm_v$ with $\tO(\sqrt{n} \poly(1/\varepsilon))$ non-zero entries.
We define a pseudometric $\Delta$ on the space of these vectors and show
that the vectors of vertices from the same community have ``small''
$\Delta$-distance, while the vectors of vertices from different communities
have ``large'' $\Delta$-distance.
We describe the details in Sec.~\ref{sec:algorithm}.

\begin{example}
One possible application of our oracle is the clustering of an online social
networks such as Twitter. The user interactions on Twitter can be interpreted as
a signed graph with some hidden communities. Now it is possible to label (e.g.,
		by hand) a small number of users for several nearly-balanced communities. These users can be used as seed nodes for the oracle and then the oracle can be used to efficiently classify users based on which community they belong to. This addresses the issue that the Twitter graph is too large to cluster it completely. Additionally, as researchers we do not have access to the full Twitter graph but it seems feasible to perform a small number of short random walks from each user that shall be classified.
\end{example}

We provide a theoretical analysis of the oracle for bounded-degree graphs with a
constant (or logarithmic) number of ``well-behaved'' nearly-balanced communities. We show that
when we apply \textsc{WhichCluster}($v$) to all vertices, then the oracle
classifies a $(1-\varepsilon)$-fraction of the vertices correctly.  See
Thm.~\ref{thm:oracle} for the formal statement of our result.
To obtain this result, we relate the spectrum of the graph's normalized signed
Laplacian to the random walks performed by the query procedures. Therefore, we
give a novel spectral analysis
of the behavior of \emph{random walks with signs}, which essentially keep track
whether the random walk traversed an even or an odd number of negative edges.
Then we relate the
$\Delta$-distance of signed random walk vectors to the eigenvalues/eigenvectors of the
signed Laplacian.  See Sec.~\ref{sec:main-technical-contribution} for a
precise statement of our technical contribution and
App.~\ref{sec:analysis-overview} for an overview of our analysis.
	
\emph{Our theoretical contributions.} Our theoretical results are inspired by
sublinear clustering oracles for \emph{un}signed graphs, and
some notions and lemmas look superficially similar to the counterparts in
unsigned graphs  (e.g., \cite{czumaj15testing,peng20robust}). However, to
generalize these oracles to signed graphs we need several new ideas.
We will now briefly discuss these new ideas.

First, the clustering oracle in unsigned graphs is based on the following
intuitive idea: a random walk starting from a randomly chosen vertex of some
cluster $U$ will first be trapped in $U$, and the corresponding
distribution converges to the uniform distribution on $U$ (for simplicity, we
assume the graph is regular); later, the random walk moves out of $U$ and then
the distribution converges to the uniform distribution on the whole graph. In
signed graphs, this intuition is no longer true. In particular, the distribution
of a signed random walk does not necessarily converge to a stationary
distribution (if it exists). Interestingly, we show that (informally) in a
polarized cluster $U$ with a bipartition $V_1$ and $V_2$ corresponding to the
two opposing groups, a signed random walk converges to either a scaled version
of the uniform distribution on $V_1$, or a scaled version of the uniform
distribution on $V_2$. To this end, we show that (see Lem.
\ref{lem:small-diff}) if we map vertices to the spectral embedding
defined by the first $k$ eigenvectors of the signed normalized Laplacian matrix
of the graph, then the embedded points are centered around \emph{two opposite}
centers. To contrast, in unsigned graphs, the spectral embedding of most
vertices in the same cluster are close to one \emph{single} center. To show the
existence of two opposite centers, we develop a new property relating the
eigenvectors and the polarized clusters, which may be useful for future work on
clustering in signed graphs.

Second, for unsigned oracles, it suffices to consider the $\ell_2^2$-distance
between two random walk distributions starting from any two vertices
$u,v$ to decide if $u,v$ are similar or not (i.e., if they belong to the same
cluster or not). For signed oracles, since each polarized cluster has two
opposite centers, we need to introduce a pseudometric distance $\tilde{\Delta}$ 
between the corresponding vectors to compare the similarity of two vertices. That is, for any two vertices $u,v$ with random walk vectors $\mm_u,\mm_v$, we define %
\[
	\tilde{\Delta}_{u,v}
	:=\min\{\norm{\mm_u-\mm_v}_2^2,\norm{\mm_u+\mm_v}_2^2\}.
\]
Intuitively, if $u,v$ belong to the same polarized cluster with a bipartition $V_1,V_2$, then either the distance $\norm{\mm_u-\mm_v}_2^2$ is small (corresponding to the case that $u,v$ belong to the same part in the bipartition), or $\norm{\mm_u+\mm_v}_2^2$ is small (corresponding to the case that $u,v$ belong to two different parts). Furthermore, if $u,v$ belong to two different clusters, then neither of these two distances is small.

Third, to  characterize the cluster structure of a signed graph~$G$, it is
somehow natural to use the \emph{signed bipartiteness ratio} (see
Sec.~\ref{sec:preliminaries}), a signed analogue of conductance in unsigned
graphs. However, we find that one \emph{cannot} use the signed bipartiteness
ratio of a graph to characterize the \emph{inside}
structure of a potential polarized cluster (see
App.~\ref{sec:motivation-inner}). We resolve this by introducing a new notion called
\emph{inner signed bipartiteness ratio} $\betainner(G)$ of $G$ that is a
minimization function by considering all vertex subsets of at most half the
total volume of the graph (see Eqn.~\eqref{def:innersignedbipartiteness} and
Def.~\ref{def:clustering}).

\emph{Our practical contributions.}
We provide the first implementations of signed \emph{and unsigned} oracles.
While our signed oracles with theoretical guarantees can only distinguish
between different communities, we also provide a heuristic extension which
allows for queries of the type: ``In which opposing group of a community is
vertex $v$?'' In practice, such a query could be used, e.g., to decide whether a
user in a social network is a Democrat or a Republican.

We evaluate our algorithms on synthetic and on real-world datasets
(Sec.~\ref{sec:experiments}) and show that our oracles are practical.
Our methods work well \emph{even when the graphs do not satisfy
the bounded degree assumption from our theoretical analysis}. Furthermore, our
algorithms outperform existing methods when the graphs contain large
communities.  We further provide novel real-world datasets which contain signed
graphs with a small number of large ground-truth communities; to the best of our
knowledge, these are the first public datasets with this property and we make
them freely available.

\textbf{Related work.}
Finding communities in signed graphs has received a lot of attention.
One line of works models polarized communities as (nearly) balanced
subgraphs in a signed
graph~\cite{kunegis2010spectral,chiang12scalable,bonchi2019discovering,cucuringu19sponge,cucuringu20regularized,mercado2019spectral,xiao2020searching,chu16finding,chiang14prediction,chiang12scalable}.
\citet{xiao2020searching}
provide a local algorithm for finding nearly balanced subgraphs.
The algorithm of~\cite{xiao2020searching} requires a set of seed nodes and
returns a subgraph with small signed bipartiteness ratio; we compare our
algorithm against this work in the experiments.  
\citet{ordozgoiti20finding} find large (exactly) balanced subgraphs.
In another line of work, polarized communities were modeled using $k$-way
balanced graphs~\cite{chiang12scalable} and signed stochastic block
models~\cite{mercado16clustering,mercado2019spectral} or using correlation
clustering~\cite{bansal04correlation,bianchi12correlation}; these results are
not directly comparable to our work since we consider $k$~disjoint (nearly)
$2$-way balanced subgraphs while these works try to find a single partitioning
of the graph that reveals $k$~communities.  Interestingly, many of these works
are based on spectral graph theory (e.g.,
\cite{kunegis2010spectral,chiang12scalable,xiao2020searching,ordozgoiti20finding,mercado16clustering,mercado2019spectral}).

\citet{jung16personalized,jung20random} use signed random walks with
restarts to rank users in social networks but, unlike in our work, they do not
relate the signed random walks to the spectrum of the signed graph.

Sublinear-time algorithms for clustering
\emph{unsigned} graphs have been studied using the notion of conductance (rather
than the signed bipartiteness ratio).
\citet{czumaj15testing} gave a property testing algorithm 
which can decide whether a graph is $k$-clusterable or is \emph{far from} being
$k$-clusterable in sublinear time. %
Interestingly, the algorithm by
\citet{czumaj15testing} can be adapted to a sublinear-time
clustering oracle. %
\citet{peng20robust} extended
this %
to a robust clustering oracle that reports the clustering
information of graphs with noisy partial information.
\citet{chiplunkar18testing} and \citet{gluch21spectral} provided
further improvements.

Intriguingly, the algorithm by \citet{pons06computing} is also based on
clustering the vectors of short random walks and is very popular in
practice (e.g., it is implemented in the \emph{igraph} software
package~\cite{csardi06igraph}). The similarity measure used in \citet{pons06computing} is quite similar to the one which was
independently proposed by \citet{czumaj15testing}, though the latter is focusing on using a small number of random walks to estimate the measure (rather than computing it directly) and thus achieving a sublinear-time algorithm. Therefore, one can view the
results of \citet{czumaj15testing} and of this paper as a further theoretical
justification for the practical success of the work of \citet{pons06computing}.

\section{Preliminaries}
\label{sec:preliminaries}

Let $G=(V,E,\sigma)$ be an unweighted signed graph with $n$~vertices, $m$~edges
and edge signs $\sigma(e)\in\{+,-\}$. The degree $d_G(u)$ of a vertex $u$ is 
$d_G(u) = \abs{ \{ v \colon (u,v)\in E\}}$; note that the degree does not take
into account the signs of the edges. For any set $S\subseteq V$, let 
$\vol_G(S)=\sum_{u\in S}d_G(v)$. 
The volume of $G$ is $\vol(G) = \sum_{u\in V} d_G(u)$.

For $V_1,V_2 \subseteq V$, we set
$E(V_1,V_2)=\{(u,v)\in E \colon u\in V_1, v\in V_2\}$.  Furthermore, we set
$E^+(V_1,V_2)=\{(u,v)\in E(V_1,V_2) : \sigma(uv)=+\}$ and
$E^-(V_1,V_2)=\{(u,v)\in E(V_1,V_2) : \sigma(uv)=-\}$.
When $V_1=V_2$, we set $E(V_1)=E(V_1,V_1)$. To maintain consistency with
previous works, we set $\abs{E(V_1)}$ to twice the number of edges in $G[V_1]$
(but we do not make this change for $\abs{E(V_1,V_1)}$).
We define $\abs{E^+(V_1)}$ and $\abs{E^-(V_1)}$ analogously to
$\abs{E(V_1)}$.

\textbf{Signed bipartiteness ratio.} Following the work of
\citet{xiao2020searching}, we use the \emph{signed bipartiteness ratio} to
capture polarization between two opposing groups in a graph.
A pair $(V_1,V_2)$ is a \emph{sub-bipartition} of $V$ if $\emptyset \neq V_1\cup V_2
\subseteq V$ and $V_1\cap V_2 = \emptyset$. For a sub-bipartition $(V_1,V_2)$ of $V$ we
set $\vol_G(V_1,V_2) = \sum_{u\in V_1\cup V_2} d_G(u)$.
Now the \emph{signed bipartiteness ratio} of $(V_1,V_2)$ is given by
\begin{align*}
	\beta_G(V_1,V_2)
	:= \frac{e_G(V_1,V_2)}{\vol_G(V_1,V_2)},
\end{align*}
where 
\begin{align*}
	e_G(V_1,V_2) =&
		2\abs{E_G^+(V_1,V_2)}
		+ \abs{E_G^-(V_1)}
		+ \abs{E_G^-(V_2)} \\
		&+ \abs{E_G(V_1\cup V_2, \overline{V_1\cup V_2})}.
\end{align*}

Observe that when $\beta_G(V_1,V_2)$ is small then the induced subgraph
$G[V_1\cup V_2]$ is close to balanced (i.e., $G[V_1]$ and $G[V_2]$ contain only
few negative edges and there are mostly negative edges between $V_1$ and $V_2$),
\emph{and} the vertices in $V_1\cup V_2$ are sparsely connected to
the rest of the graph (i.e., there are few edges from $V_1\cup V_2$ to
$V\setminus (V_1\cup V_2)$).

For a set of vertices $\emptyset\neq U\subseteq V$, we define
\emph{the signed bipartiteness ratio of $U$ in $G$} as 
$\beta_G(U) :=  \min_{(V_1,V_2) \colon V_1 \cup V_2 = U} \beta_G(V_1,V_2)$,
where the minimum is taken over all partitions $(V_1,V_2)$ of $U$.
For a graph $G$, we define the \emph{(classic) signed bipartiteness ratio} of
$G$ as  
\begin{align*}
	\beta(G)
	&:= \min_{\emptyset\neq U\subseteq V} \beta_G(U) \\
	&= \min_{(V_1,V_2): \textrm{sub-bipartition of $V$}} \beta_G(V_1,V_2).
\end{align*}
Observe that a set of vertices $U$ is balanced \emph{if and only if} $U$ can be
partitioned into subsets $V_1$ and $V_2$ such that $\beta_{G[U]}(V_1,V_2)=0$ 
\emph{if and only if} $\beta(G[U])=0$.
Thus, one can interpret the signed bipartiteness ratio as a measure for how
close a certain subgraph is to being balanced.
The sets $V_1$ and $V_2$ that partition $U$ are sometimes called
\emph{biclusters}.

\textbf{Spectral signed graph theory.}
Next, we introduce definitions for the spectral analysis of signed graphs. 
We use bold letters to denote vectors and matrices. Let
$\DD$ be the $n\times n$ diagonal degree matrix of $G$, i.e., $\DD_{uu}=d_G(u)$ for
all $u\in V$. Let $\AAA^\sigma$ be the $n\times n$ signed adjacency matrix, i.e.,
$\AAA^\sigma_{uv}=\sigma(uv)$ if $(u,v)\in E$, and $\AAA^\sigma_{uv}=0$, otherwise.
Let $\II$ be the $n\times n$ identity matrix.

We call $\LLL^\sigma:=\DD-\AAA^\sigma$ the \emph{signed (unnormalized) Laplacian}
matrix, and $\LL^\sigma:=\II-\DD^{-1/2}\AAA^{\sigma}\DD^{-1/2}$ the \emph{signed
normalized Laplacian} matrix.  It is well-known that all eigenvalues of
$\LL^\sigma$ are in the interval~$[0,2]$ and we list them in non-decreasing
order as $0\leq \lambda_1\leq \cdots \leq \lambda_n\leq 2$. 

For $k\in[n]$, the \emph{$k$-way signed bipartiteness ratio of $G$} is
\begin{align*}
	\beta_k(G) 
	&:= \min_{ U_1,\dots,U_k } \max_{i=1,\dots,k} \beta_G(U_i) \\
	&= \min_{ \{(V_{2i-1},V_{2i})\}_{i=1}^k } \max_{i=1,\dots,k} \beta_G(V_{2i-1},V_{2i}).
\end{align*}
Here, the minima are taken over all possible choices of $k$ non-empty, disjoint
sets $U_i\subset V$ and disjoint sub-bipartitions $(V_{2i-1},V_{2i})$,
respectively.  Intuitively, $\beta_k(G)$ is small iff $G$ contains $k$ disjoint
communities that are close to balanced iff $G$ contains $k$ polarized communities.

\citet{atay20cheeger} provided a
Cheeger-type inequality that relates the $k$-way signed bipartiteness ratio to
the eigenvalues $\lambda_k$ of the signed normalized Laplacian $\LL^\sigma$.
\begin{theorem}[Higher-order Signed Cheeger Inequality~\cite{atay20cheeger}]
\label{thm:signed-cheeger}
	There exists a constant $C_1$ such that
	for all signed graphs $G$ and $k\in[n]$,
		$\frac{\lambda_k}{2}
		\leq \beta_k(G)
		\leq C_1 k^3 \sqrt{\lambda_k}$.
\end{theorem}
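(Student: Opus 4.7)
The plan is to follow the template of Lee--Oveis Gharan--Trevisan's higher-order Cheeger inequality for unsigned graphs, adapting each step to signed graphs through the identity
\[
f^{\top} \LLL^{\sigma} f \;=\; \sum_{\{u,v\}\in E^{+}} (f(u)-f(v))^{2} \;+\; \sum_{\{u,v\}\in E^{-}} (f(u)+f(v))^{2},
\]
which is the crucial bridge between the spectrum of $\LL^{\sigma}$ and the combinatorial quantity $\beta_{G}$.

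For the easy direction $\lambda_{k}/2 \leq \beta_{k}(G)$, I would fix $k$ disjoint sub-bipartitions $(V_{2i-1},V_{2i})$ realizing $\beta_{k}(G)$ and use the $\{+1,-1,0\}$-valued test vectors $f_{i}$ taking value $+1$ on $V_{2i-1}$, $-1$ on $V_{2i}$, and $0$ elsewhere. A direct case analysis in the identity above shows $f_{i}^{\top} \LLL^{\sigma} f_{i} \leq 2\, e_{G}(V_{2i-1},V_{2i})$ while $f_{i}^{\top} \DD f_{i} = \vol_{G}(V_{2i-1},V_{2i})$, so the degree-weighted Rayleigh quotient of each $f_{i}$ is at most $2\beta_{G}(V_{2i-1},V_{2i}) \leq 2\beta_{k}(G)$. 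Since the $f_{i}$ have pairwise disjoint supports, the vectors $\DD^{1/2} f_{i}$ span a $k$-dimensional subspace of $\mathbb{R}^{n}$, and Courant--Fischer applied to $\LL^{\sigma}$ then yields $\lambda_{k} \leq 2\beta_{k}(G)$.

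For the hard direction I would mirror the three-step LOT proof. Step 1: form the spectral embedding $F : V \to \mathbb{R}^{k}$, $F(v) = \DD^{-1/2}(f_{1}(v),\dots,f_{k}(v))$, where $f_{1},\dots,f_{k}$ is an orthonormal (with respect to $\DD$) basis of the bottom-$k$ eigenspace of $\LL^{\sigma}$, and record $\sum_{u} d_{G}(u)\|F(u)\|_{2}^{2} = k$ and $\sum_{i} R(f_{i}) \leq k\lambda_{k}$. Step 2: a spreading/localization argument that extracts $k$ functions $\psi_{1},\dots,\psi_{k}$ with pairwise disjoint supports and $R(\psi_{i}) = \poly(k)\,\lambda_{k}$. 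Step 3: round each $\psi_{i}$ by a \emph{signed} sweep cut $V_{1}(t) = \{v : \psi_{i}(v) > t\}$, $V_{2}(t) = \{v : \psi_{i}(v) < -t\}$, and show via the usual integrated Cauchy--Schwarz argument that some $t$ achieves $\beta_{G}(V_{1}(t),V_{2}(t)) = O(\sqrt{R(\psi_{i})})$; in the numerator, every positive edge $\{u,v\}$ is controlled by $|\psi_{i}(u) - \psi_{i}(v)| \cdot (|\psi_{i}(u)| + |\psi_{i}(v)|)$ and every negative edge by $|\psi_{i}(u) + \psi_{i}(v)| \cdot (|\psi_{i}(u)| + |\psi_{i}(v)|)$, matching the two terms of the bridging identity, while the boundary contribution $|E_{G}(V_{1}\cup V_{2},\overline{V_{1}\cup V_{2}})|$ is absorbed by the parallel sweep on $|\psi_{i}|$. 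Composing the three steps yields $\beta_{k}(G) = O(k^{3}\sqrt{\lambda_{k}})$.

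The hard part will be Step 2. The new phenomenon compared with the unsigned case is that a polarized cluster has \emph{two antipodal} centers in the spectral embedding (as emphasized by the authors around Lem.~\ref{lem:small-diff}), so the localization has to partition $V$ by proximity to \emph{pairs} $\pm z_{1},\dots,\pm z_{k}$ drawn isotropically from the embedding, and each $\psi_{i}$ has to inherit a sign matching the side of its antipode, so that the threshold set produced in Step 3 is a genuine sub-bipartition rather than a single block. Certifying that the two antipodal centers are well-separated whenever there is a spectral gap between $\lambda_{k}$ and $\lambda_{k+1}$, while simultaneously keeping both the Dirichlet form $\psi_{i}^{\top} \LLL^{\sigma} \psi_{i}$ and the mass $\psi_{i}^{\top} \DD \psi_{i}$ under control after the smooth truncation, is the genuinely signed technical core of the argument, and is where I would concentrate most of the effort.
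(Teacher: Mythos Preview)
The paper does not prove Theorem~\ref{thm:signed-cheeger}; it is quoted as a black-box result from \cite{atay20cheeger} and is used only as an input to Lemmas~\ref{lem:cheeger-eigenvalues} and~\ref{lem:small-diff}. There is therefore no ``paper's own proof'' to compare your proposal against.

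That said, your sketch is broadly the right template for how the cited result is established. The easy direction via the $\{+1,-1,0\}$ indicator vectors and Courant--Fischer is exactly correct. For the hard direction, the three-step LOT scheme with a signed sweep in Step~3 is indeed the route taken in \cite{atay20cheeger}. One remark on your Step~2: you frame the difficulty as needing to pair up antipodal centers and as requiring a gap between $\lambda_{k}$ and $\lambda_{k+1}$. Neither is actually needed for the theorem as stated. The localization step works purely on the \emph{radial} part $\|F(\cdot)\|_{2}$ of the spectral embedding (which is sign-agnostic) to produce disjointly supported bumps, and the sign information from $F$ is reinstated only in Step~3 via the signed threshold. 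No eigenvalue gap is assumed in Theorem~\ref{thm:signed-cheeger}; the two-center structure you allude to is a separate phenomenon exploited later in Lemma~\ref{lem:small-diff} under the clusterability hypothesis, not in the Cheeger inequality itself.
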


Finally, $\WW=\frac{\II+\DD^{-1}\AAA^\sigma}{2}$ is the \emph{walk matrix} that
corresponds to lazy random walks in a signed graph $G$. Additionally, for
$t\in\mathbb{N}$ and $v\in V$, we set $\ppp_v^t = \1_v \WW^t$, where $\1_v$ is the
$n$-dimensional indicator vector that has a $1$ in the $v$'th entry and is $0$
in all other entries.

\section{Main Result and Algorithm}
\label{sec:result}

In this section, we formally present our main result and give the details of our
clustering oracle. To state our theorem, we first need to introduce two new
definitions.
For a signed graph $G=(V,E,\sigma)$, we define the
\emph{inner signed bipartiteness ratio} of $G$ as
\begin{align}
\label{def:innersignedbipartiteness}
	\betainner(G)
	:= \min_{\emptyset \neq U \subseteq V \colon \vol(U) \leq \frac{1}{2} \vol(G)}
		\beta_{G}(U).
\end{align}
Note that we only consider subsets~$U$
\emph{of volume at most $\frac{1}{2}\vol(G)$}; this is in contrast with the
definition of $\beta(G)$, in which the minimum is taken over \emph{all possible}
subsets~$U$. %
The definition (\ref{def:innersignedbipartiteness}) resembles the inner conductance that has been used to study the clusterability of unsigned graphs (e.g., \cite{gharan2014partitioning,czumaj15testing}), though in contrast with $\beta(G)$, it is \emph{not} directly associated with the signed Cheeger inequality. 

Next, we define the notion of clusterability under which we will obtain our
theoretical results.
\begin{definition}
\label{def:clustering}
	Let $k\in\mathbb{N}$, $\betain,\betaout\in\mathbb{R}_{> 0}$ and let
	$G=(V,E,\sigma)$ be an unweighted signed graph.
	We say that $G$ is \emph{$(k,\betain, \betaout)$-clusterable} if there
	exists a partition of $V$ into $k$ disjoint subsets
	$(U_1,\dots,U_k)$ such that
	$\beta_G(U_i) \leq \betaout$ and
	$\betainner(G[U_i]) \geq \betain$ for all $i\in[k]$.
	Each subset $U_i$ is called a
	\emph{$(\betain,\betaout)$-cluster} and the corresponding partitioning is
	called a \emph{$(k,\betain,\betaout)$-clustering}.
	Furthermore, if each subset $U_i$ satisfies that $|U_i|\geq
	\Omega(\frac{n}{k})$, then we call the partition $(U_1,\dots,U_k)$ a
	\emph{balanced}  $(k,\betain,\betaout)$-clustering.
\end{definition} 

Let us briefly explain this definition; it is handy to think of $\betaout$ as
very small and $\betaout\ll\betain$. The first condition that
$\beta_G(U_i) \leq \betaout$ for all $i=1,\dots,k$ ensures that the graph
contains $k$~communities which are nearly-balanced and that can be viewed as
polarized communities.  The second condition ($\betainner(G[U_i])\geq\betain$
for all $i=1,\dots,k$) ensures that each of the nearly-balanced communities is
minimal in the sense that it cannot be further decomposed into more balanced
communities.

We remark that at first glance it might be surprising that in
Definition~\ref{def:clustering}, we use $\betainner(G[U_i])$ instead of
$\beta(G[U_i])$ to measure the indivisibility of $G[U_i]$ into smaller
nearly-balanced communities.  However, in App.~\ref{sec:motivation-inner} we
show that if $\beta_G(U_i)$ is small, then $\beta(G[U_i])$ is also small. %
This indicates that $\beta(G[U])$ is not an appropriate measure for this characterization.

Now we state our main result.  We consider signed graphs $G$ of degree at
most $d$, where $d$ is a constant throughout the paper. We assume that we have
query access to the adjacency list of $G$, i.e., for any vertex $v$ and an index
$i\leq d$, we can query the $i$-th neighbor of $v$ in constant time if it exists
(if no such neighbor exists we get a special symbol `$\bot$').  Let
$P\triangle Q$ denote the symmetric difference. In the following, we let $d>10$,
$k\geq 1$, $\varepsilon \in (0,1)$ and $\betain\in(0,1)$.
Let $n$ be an integer such that $n\geq
\frac{1800k^2\log(k)}{\gamma\varepsilon}$, where $\gamma\in(0,1]$ is a constant.
\begin{restatable}{theorem}{oracle}
\label{thm:oracle}
	Let $G=(V,E,\sigma)$ be a signed graph with $|V|=n$ vertices and maximum
	degree at most $d$. Suppose that $G$ has a balanced
	$(k,\betain,\betaout)$-clustering $U_1,\cdots, U_k$,
	$\betaout< \frac{\varepsilon \betain^2 }{C'\log(k) k^7d^3\log n}$,
	where $C'$ is some sufficiently large constant,
	and $\abs{U_i}\geq \gamma \frac{n}{k}$ for all $i=1,\dots,k$. There exists
	an algorithm that has query access to the adjacency list of $G$ and
	constructs a clustering oracle
	in $O(\sqrt{n}\cdot \poly(\frac{kd\cdot\log n}{\varepsilon\betain}))$
	preprocessing time.  Furthermore, with probability at least
	$0.9$, the following hold:
	\begin{enumerate}
		\item Using the oracle, the algorithm can answer any
		\textsc{WhichCluster} query in
			$O(\sqrt{n}\cdot \poly(\frac{kd\cdot\log n}{\varepsilon\betain}))$ time. 
		\item Let $P_i:=\{u\in V: \textsc{WhichCluster}(u)=i\}$, $i\in [k]$,
			be the clusters defined by \textsc{WhichCluster}. Then there exists a
			permutation $\pi: [k]\to [k]$ such that for all $i\in[k]$, $\abs{P_{\pi(i)}\triangle U_{i}}\leq O(\varepsilon/\lg k)|U_i|$.
	\end{enumerate}
\end{restatable}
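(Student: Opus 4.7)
The plan is to prove the theorem by a three-layer argument: (i) a structural spectral statement about signed graphs with a $(k,\betain,\betaout)$-clustering, (ii) a transfer of that structure to the signed random-walk vectors $\ppp_v^t=\1_v\WW^t$ via the pseudometric $\tilde{\Delta}$, and (iii) an $\tO(\sqrt{n})$-time sample-and-estimate implementation of the oracle on top of the previous two steps.

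First, I would establish a spectral gap for $\LL^\sigma$. Since $\beta_k(G)\le \max_i \beta_G(U_i)\le \betaout$, the higher-order signed Cheeger inequality (Thm.~\ref{thm:signed-cheeger}) gives $\lambda_k\le 2\betaout$. A lower bound on $\lambda_{k+1}$ is more delicate: intuitively, $\lambda_{k+1}$ small would (via the upper Cheeger bound) produce $k+1$ disjoint low-$\beta_G$ sub-bipartitions, which, together with the volume constraint in the definition of $\betainner$, must collide with some $G[U_i]$ and violate $\betainner(G[U_i])\ge\betain$. Making this rigorous is one of the key steps and requires the ``inner'' variant (\ref{def:innersignedbipartiteness}) rather than $\beta$ itself—this is exactly why App.~\ref{sec:motivation-inner} is needed. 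The output is a bound of the form $\lambda_{k+1}\gtrsim \betain^2/\poly(k)$, giving a usable gap $\lambda_{k+1}/\lambda_k\gg 1$ under the theorem's hypothesis $\betaout\ll \betain^2/(\log(k)k^7 d^3\log n)$.

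Second, using this gap, I would prove a \emph{two-opposite-centers} spectral structure. Let $f_1,\dots,f_k$ be orthonormal eigenvectors of $\LL^\sigma$ for $\lambda_1,\dots,\lambda_k$ and define the spectral embedding $F(v)=\DD^{-1/2}(f_1(v),\dots,f_k(v))$. I would show that, modulo an $O(\varepsilon/\log k)$-fraction of exceptions per cluster, there exist vectors $\mu_1,\dots,\mu_k\in\mathbb{R}^k$ such that $F(v)\approx +\mu_i$ for $v\in V_1^i$ and $F(v)\approx -\mu_i$ for $v\in V_2^i$, where $(V_1^i,V_2^i)$ is the sub-bipartition attaining $\beta_G(U_i)$; moreover $\norm{\mu_i-\mu_j}$ and $\norm{\mu_i+\mu_j}$ are both bounded away from zero for $i\neq j$. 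The proof approximates each $f_j$ on $U_i$ by the signed indicator $\1_{V_1^i}-\1_{V_2^i}$ using a standard variational argument (any low-energy eigenvector is nearly constant on each low-$\beta$ side), plus the spectral gap to handle the component outside the top $k$ eigenspace. Expanding $\WW^t=\frac{1}{2^t}\sum_{j}(2-\lambda_j)^t \phi_j\phi_j^\top$ in the eigenbasis of $\DD^{-1/2}\AAA^\sigma\DD^{-1/2}$ (with eigenvalues $1-\lambda_j$) then yields, for $t=\Theta(\log n/\betain^2)$, that $\DD^{-1/2}\ppp_v^t$ is $\sigma(v)\mu_i$ up to $\poly(1/n)$ error for all but an $\varepsilon$-fraction of $v\in U_i$, where $\sigma(v)\in\{+1,-1\}$ indicates the side of the bipartition. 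Consequently, $\tilde{\Delta}_{u,v}$ is $\ll$ when $u,v\in U_i$ and $\gg$ when $u,v$ lie in different clusters, by the very definition of $\tilde{\Delta}$ as a $\min$ of $\norm{\mm_u\pm\mm_v}_2^2$.

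Third, I would turn this into a sublinear oracle. Preprocessing samples $s=\tO(\sqrt{n}\poly(1/\varepsilon\betain))$ seeds uniformly; by the balance assumption $|U_i|\ge \gamma n/k$, with constant probability each $U_i$ is hit by $\Omega(s/k)$ seeds and a good majority of those lie in its ``well-embedded'' subset. For each seed $v$ we run $\tO(\sqrt{n})$ lazy signed random walks of length $t=O(\log n)$, recording signed visit counts to obtain a sparse estimate $\mm_v$ with $\tO(\sqrt{n})$ support; a Chebyshev/variance argument shows $\mm_v$ approximates $\ppp_v^t$ well in the $\ell_2^2$ sense, so $\tilde{\Delta}$ is preserved up to the slack we can afford. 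We then cluster seeds into $k$ groups by the standard ``pick farthest seed, form an $r$-ball under $\tilde{\Delta}$'' procedure with thresholds read off from the structural bound of step two. On a query $v$, we similarly compute $\mm_v$ and output the index of the nearest seed group under $\tilde{\Delta}$. Correctness of a $(1-O(\varepsilon/\log k))$-fraction per cluster follows by combining the structural concentration with the random-walk estimation error and a union bound over queries, yielding the guarantee in item~2.

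The main obstacle is the structural step two: proving the ``two opposite centers'' lemma and controlling the $\varepsilon$-fraction of exceptions in the signed setting, because the usual single-center argument for unsigned graphs breaks down and because bounding $\lambda_{k+1}$ requires the bespoke $\betainner$ quantity rather than $\beta$. Once that novel spectral statement is in hand, the random-walk estimation and the seed-clustering are fairly standard adaptations of the unsigned templates of \cite{czumaj15testing,peng20robust} with $\tilde{\Delta}$ in place of $\ell_2^2$.
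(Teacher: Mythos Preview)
Your three-layer plan is essentially the paper's structure: the spectral gap (Lemma~\ref{lem:cheeger-eigenvalues}) via $\betainner$, the two-opposite-centers lemma (Lemma~\ref{lem:small-diff}), and then the sample-and-estimate oracle (Lemma~\ref{lem:estimate-dot-product}). The one substantive divergence is how you handle the between-cluster case. You assert that $\norm{\mu_i\pm\mu_j}$ are bounded away from zero for $i\neq j$ and derive ``$\tilde\Delta_{u,v}$ large'' from the spectral embedding. The paper does \emph{not} do this: Lemma~\ref{lem:dissimilar-distributions} is a direct combinatorial trapping argument, showing $\y_{V_1,V_2}\WW^t\1_{V_1,V_2}^\top\ge 1-2t\betaout$ and hence that most of the absolute mass of $\1_v\WW^t$ stays inside $v$'s cluster provided $t\le \alpha/(8d\betaout)$; a Cauchy--Schwarz step then lower-bounds $\Delta_{uv}$. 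This trapping argument is what imposes the \emph{upper} bound on $t$ and, together with the lower bound $t\gtrsim \log n/\betain^2$ from Lemma~\ref{lem:close}, is exactly where the $\log n$ in the hypothesis on $\betaout$ comes from. Your purely spectral route could in principle work (the near-orthogonality of normalized signed indicators should force the $\mu_i$ to be near-orthonormal), but you have not supplied that argument, and it is not the paper's.

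Two smaller corrections: the paper samples only $s=O(k\log k/\gamma)$ seeds, not $\tO(\sqrt{n})$; the $\sqrt{n}$ factor lives entirely in the number $R$ of random walks per vertex used inside \textsc{EstDotProd}. And the seeds are clustered by building a similarity graph $H$ on $S$ and verifying it is a union of $k$ cliques, not by a farthest-point/$r$-ball procedure.
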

The theorem asserts that if the input graph $G$ has bounded degree and satisfies
the assumptions from Def.~\ref{def:clustering} with $\betaout\ll\betain$,
then our clustering oracle has preprocessing and query time
$\tO(\sqrt{n}\poly(1/\varepsilon))$. Furthermore, the second item
implies that if we pick $\varepsilon$ small enough, we can make the number
of misclassified vertices arbitrarily small. In particular, for any
$\delta>0$ we can pick $\varepsilon$ such that the oracle classifies at least a
$(1-\delta)$-fraction of the vertices correctly.

\subsection{The Algorithm}
\label{sec:algorithm}
Now we present the implementation of our clustering oracle. The main building
block of our oracle are lazy signed random walks which we will discuss first.
Based on a sequence of random walks starting at a vertex $u$, we will define a
sparse vector $\mm_u$. We will then use the vectors $\mm_u$ and $\mm_v$ to estimate
distances $\tDelta_{uv}$ between vertices $u$ and $v$ with the intuition that
$u$ and $v$ are in the same cluster iff $\tDelta_{uv}$ is small.  We will also
discuss the preprocessing of the oracle and the query procedures.

\textbf{Lazy signed random walks.} We introduce lazy signed random
walks. Intuitively, a lazy signed random walk is a lazy random
walk on the unsigned version of the graph that keeps track of the sign of the
walk. Here, the sign of the walk is the multiplication of the signs of
all traversed edges.

More formally, a \emph{lazy signed random walk} of length $t$ from a
vertex $u$ proceeds as follows. Initially, at step $T=0$, we start at vertex
$u_0:=u$ with sign $s_0:=+$. Suppose that at step $0\leq T< t$ we
are at vertex $u_T$ with sign $s_T\in\{+,-\}$. Then at the step $T+1$, with
probability~$\frac12$ we stay at $u_T$ and keep the sign unchanged (i.e.,
$u_{T+1}=v_t, s_{T+1}=s_T$), and with the remaining $\frac12$ probability, we
choose a random neighbor $v$ of $u_T$ with probability $\frac{1}{d_G(u_T)}$, and
move to $v$ and update $u_{T+1}=v, s_{T+1}=\sigma(e_{T+1}) s_T$, where
$e_{T+1}=\{u_T,v\}$.  Thus, if a walk traverses the edges $e_1,\dots,e_{t}$ then
the final sign of the walk is $\prod_{i=1}^{t}\sigma(e_i)$.
Later, we will set the number of steps to $t=\Theta(\lg n)$.

\textbf{Vectors from sequences of random walks.} Next, given a start vertex $u$,
we describe how to obtain a sparse vector $\mm_u\in\mathbb{R}^n$ based on a
sequence of lazy signed random walks.  In Sec.~\ref{sec:analysis-intuition},
we will argue that $\mm_u$ essentially serves as a (sparse) approximation of the
vector $\ppp_u^t \DD^{-1/2}$, where $\ppp_u^t = \1_u \WW^t$, $\WW$ is the walk matrix and $\DD$
is the degree matrix as defined in Sec.~\ref{sec:preliminaries}.

Suppose that we perform $R$ lazy signed random walks of length~$t$ from
vertex~$u$ and let $v_1,\dots,v_R$ be the vertices at which these random walks
finish with respective signs $s_1,\dots,s_R$. Now we define two vectors
$\mm_u^+,\mm_u^-\in\mathbb{R}_{\geq 0}^n$ as follows. We set $\mm_u^+(v)$ to the fraction of
random walks that ended in vertex $v$ with sign $+$, i.e.,
$\mm_u^+(v)=\frac{\abs{\{ i \colon v_i=v, \, s_i=+ \}}}{R}$ for all $v\in V$.
Similarly, we set 
$\mm_u^-(v)=\frac{\abs{\{ i \colon v_i=v, \, s_i=- \}}}{R}$.  Finally, we set
$\mm_u = (\mm_u^+ - \mm_u^-) \DD^{-1/2}$. 

Note that $\mm_u$ can have positive \emph{and negative} entries.
Furthermore, the $R$ random walks can end in at most~$R$
different vertices, and thus $\mm_u$ has at most $R$~non-zero
entries. Later, we will set $R=\tO(\sqrt{n} \poly(1/\varepsilon))$.

We now introduce our key subroutine \textsc{EstDotProd}($u$,$v$,$t$,$\alpha$)
and the pseudocode of the routine is presented in
Alg.~\ref{alg:estimate-dot-product}.
Consider two vertices $u$ and $v$, the random walk length $t$ and a technical parameter
$\alpha$ that we will set in the proof of Thm.~\ref{thm:oracle}. Then
\textsc{EstDotProd}($u$,$v$,$t$,$\alpha$) computes the two vectors $\mm_u$
and $\mm_v$ and calculates their dot product. This is repeated $h=O(\lg n)$
times and then the median of these dot products is returned. We will later
(Lem.~\ref{lem:estimate-dot-product}) show that the output of
\textsc{EstDotProd}($u$,$v$,$t$,$\alpha$) gives an approximation of
$\langle \ppp_u^t \DD^{-1/2}, \ppp_v^t \DD^{-1/2}\rangle$ with small error.

\textbf{Preprocessing.}
We present the preprocessing phase of the oracle 
in Alg.~\ref{alg:mainoracle}. 
Here, we make use of the fact (see Sec.~\ref{sec:analysis-intuition}
for details) that when two vertices $u$ and $v$ are from the same cluster, then
\begin{align*}
	\Delta_{uv}
		:= \min\{
			&\norm{\ppp_u^t \DD^{-1/2}- \ppp_v^t \DD^{-1/2}}^2_2, \\
			&\norm{\ppp_u^t \DD^{-1/2}+ \ppp_v^t \DD^{-1/2}}^2_2
			\}
\end{align*}
should be small, whereas if $u$ and $v$ are from different clusters then
$\Delta_{uv}$ should be large. However, since our algorithm has no
access to the vectors $\ppp_u^t$ and $\ppp_v^t$, we will need to use
\textsc{EstDotProd} to obtain an estimate $\tDelta_{uv}$ as approximation of
$\Delta_{uv}$.

The preprocessing starts by sampling a set $S$~of $O(k \log k)$ vertices.
Essentially this ensures that from each of the
$k$~clusters, $S$~contains at least one vertex.
Now for all pairs of vertices $u,v\in S$, we compute $\tDelta_{uv}$ as
approximation of $\Delta_{uv}$ by rewriting
the norms inside the definition $\Delta_{uv}$ as dot products and then
estimating each of these dot products using \textsc{EstDotProd}. More concretely,
we observe that
\begin{equation}
\label{eq:Delta_uv_dot_product}
	\hspace{-0.15cm}
\begin{aligned}
	\Delta_{uv}
	= \min\{ Y_{uu} + Y_{vv} - 2 Y_{uv},
			 Y_{uu} + Y_{vv} + 2 Y_{uv} \},
\end{aligned}
\end{equation}
where $Y_{ab} = \langle \ppp_a^t \DD^{-1/2}, \ppp_b^t \DD^{-1/2} \rangle$ for
$a,b\in V$.
Next, we let $X_{ab} = \textsc{EstDotProd}(a,b)$
and we define
\begin{align}
\label{eq:delta_uv}
	\hspace{-0.5cm}
	\tDelta_{uv} = \min\{ X_{uu}+X_{vv}-2X_{uv}, X_{uu}+X_{vv}+2X_{uv} \}.
\end{align}

We cluster the vertices in $S$ as follows. We create an auxiliary graph $H$
with vertex set $S$ and without edges. Then we add edges for each pair of
vertices $u$ and $v$ such that $\tDelta_{uv}$ is ``small'', i.e., if
$\tDelta_{uv} < \frac{1}{2dn}$.
In our proof of Thm.~\ref{thm:oracle}, we will show that if the conditions of
the theorem hold, then $H$ consists of $k$~cliques corresponding to the
$k$~clusters in the $(k,\betain,\betaout)$-clustering. Thus, the
preprocessing will correctly identify at least one vertex from
each cluster.

\textbf{Query procedure.}
For a query \textsc{WhichCluster}($v$), we proceed similarly to
how we clustered the vertices in $S$ in the preprocessing. More concretely,
given $v$ as input to the query, we compute $\tDelta_{uv}$ for all $u \in S$.
If $\tDelta_{uv}\leq\frac{1}{2dn}$ for some $u\in S$ then we return that $v$
belongs to the cluster of~$u$.  For the full details, see
Alg.~\ref{alg:answeringquery}.

\subsection{Main Technical Contribution}
\label{sec:main-technical-contribution}
To prove Thm.~\ref{thm:oracle}, our global proof strategy is as follows.
First, we show that two
vertices $u$ and $v$ are from the same cluster \emph{if and only if}
$\Delta_{uv}$ is small (see Lem.s~\ref{lem:close}
and~\ref{lem:dissimilar-distributions}). Second, we show that
$X_{uv}$ does not introduce too much error for estimating 
$\langle \ppp_v^t \DD^{-1/2}, \ppp_u^t \DD^{-1/2}\rangle$
(see Lem.~\ref{lem:estimate-dot-product}).
This then implies that with a large probability $\delta_{uv}$ is close to
$\Delta_{uv}$ and, thus, $\delta_{uv}$ is small
\emph{iff} $u$ and $v$ are from the same cluster.
We give more intuition and details of our proof strategy
in App.~\ref{sec:analysis-overview}.

This strategy is similar to the one by \citet{czumaj15testing} for unsigned
graphs. However, even though our global proof strategy is similar, we still have
to contribute a significant amount of new ideas to extend the clustering oracle
from the unsigned to the signed setting. We will now discuss some of these
challenges in more detail.

One particular challenge was showing that $\Delta_{uv}$ is small if $u$
and $v$ are from the same cluster (see Lem.~\ref{lem:close}).
To prove this, we need two technical lemmas that constitute our
main technical contribution.  One of them (Lem.~\ref{lem:small-diff}) provides
a connection between bicluster membership and the entries of the
eigenvectors of the signed normalized Laplacian. We believe this result is of
independent interest and will find further applications in the analysis of signed
graphs.

Let $\vv_1,\cdots,\vv_n$ be the orthonormal row eigenvectors of $\LL^\sigma$
s.t.\ $\lambda_i \vv_i = \vv_i \LL^\sigma$. Thus $\vv_i\vv_j^\top = 1$ if
$i=j$ and $\vv_i \vv_j^\top = 0$ otherwise. Let $\trvv_i= \vv_i\DD^{-1/2}$. For a
subset $U\subseteq V$, let $\mu_U=\vol_{G[U]}(U)$ be the total volume of the
induced graph $G[U]$, i.e., the sum of degrees of all vertices in $G[U]$. 

The first lemma says there is a gap between $\lambda_k$ and $\lambda_{k+1}$ if a
graph is $(k,\betain,\betaout)$-clusterable, which allows us to
use the first $k$ eigenvectors $\vv_1,\dots,\vv_k$ to bound
$\Delta_{uv}$.
The proof makes use of our new definition of inner signed bipartiteness ratio of
a graph. We defer the proof details to App.~\ref{app:proofofeigenvalues}.
\begin{restatable}{lemma}{eigenvalues}
\label{lem:cheeger-eigenvalues}
	If $G$ is signed and
	$(k,\betain,\betaout)$-clusterable, then
	$\lambda_i \leq 2\betaout$ for all $i\leq k$ and
	$\lambda_i \geq \frac{\betain^2}{C_1^2 (k+1)^6}$ for all $i\geq k+1$, where
	$C_1$ is the constant from Thm.~\ref{thm:signed-cheeger}.
\end{restatable}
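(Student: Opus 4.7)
The plan is to prove the two inequalities separately. The upper bound $\lambda_i \leq 2\betaout$ for $i \leq k$ is immediate from the lower-bound half of the higher-order Cheeger inequality (Thm.~\ref{thm:signed-cheeger}): the given clusters $U_1,\dots,U_k$ and their witnessing sub-bipartitions certify $\beta_k(G) \leq \max_i \beta_G(U_i) \leq \betaout$, so $\lambda_i \leq \lambda_k \leq 2\beta_k(G) \leq 2\betaout$.

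For the lower bound $\lambda_{k+1} \geq \betain^2/(C_1^2(k+1)^6)$, I argue by contradiction using the \emph{upper-bound} half of Thm.~\ref{thm:signed-cheeger}. Suppose $\lambda_{k+1} < \betain^2/(C_1^2(k+1)^6)$; then $\beta_{k+1}(G) \leq C_1(k+1)^3\sqrt{\lambda_{k+1}} < \betain$, so there exist $k+1$ disjoint sub-bipartitions $(V_{2i-1},V_{2i})$ with $\beta_G(V_{2i-1},V_{2i}) < \betain$. Writing $W_i := V_{2i-1}\cup V_{2i}$, my goal is to extract from these $k+1$ sets a witness to $\betainner(G[U_j]) < \betain$ for some $j$, contradicting the clusterability assumption.

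The extraction uses pigeonhole on the ``primary'' cluster of each $W_i$: for each $i \in [k+1]$, let $j^*(i) \in [k]$ maximize $\vol_G(W_i \cap U_{j^*(i)})$. Two indices $i_1 \neq i_2$ must share a common $j = j^*(i_1) = j^*(i_2)$. Since $W_{i_1}$ and $W_{i_2}$ are disjoint subsets of $V$, their intersections with $U_j$ partition a subset of $U_j$, so at least one of them -- say $W_{i_1}$ -- has $\vol_G(W_{i_1}\cap U_j) \leq \vol_G(U_j)/2$; additionally, by the maximality of $j^*$, $\vol_G(W_{i_1}) \leq k\cdot \vol_G(W_{i_1}\cap U_j)$. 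I then take the restricted sub-bipartition $(A_1,A_2) := (V_{2i_1-1}\cap U_j,\, V_{2i_1}\cap U_j)$ and show $\beta_{G[U_j]}(A_1,A_2) < \betain$ on a set of volume at most $\vol_{G[U_j]}(U_j)/2$.

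The main obstacle is the edge accounting for $\beta_{G[U_j]}(A_1,A_2)$. The numerator $e_{G[U_j]}(A_1,A_2)$ is bounded by $e_G(V_{2i_1-1},V_{2i_1})< \betain\vol_G(W_{i_1}) \leq k\betain\vol_G(W_{i_1}\cap U_j)$, since bad-sign edges inside $G[U_j]$ are among those in $G$, and crossing edges of $A_1\cup A_2$ inside $U_j$ have their other endpoint in $U_j\setminus W_{i_1}\subseteq V\setminus W_{i_1}$, hence are already counted in the original ratio. For the denominator, $\vol_{G[U_j]}(A_1\cup A_2)$ differs from $\vol_G(W_{i_1}\cap U_j)$ only by edges from $W_{i_1}\cap U_j$ leaving $U_j$; these split into edges leaving $W_{i_1}$ (bounded by $\betain \vol_G(W_{i_1})$) and edges staying inside $W_{i_1}$ but crossing the boundary of $U_j$ (bounded by $|E_G(U_j,V\setminus U_j)| \leq \betaout \vol_G(U_j)$). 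In the regime where $\betaout$ is small relative to $\betain/k$ -- implicit in the clusterable setting where $\betaout \ll \betain$ -- these error terms are absorbed, yielding $\beta_{G[U_j]}(A_1,A_2) < \betain$, which together with the volume bound contradicts $\betainner(G[U_j]) \geq \betain$.
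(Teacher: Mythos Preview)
Your upper bound for $\lambda_i$, $i\le k$, is correct and matches the paper.

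Your lower-bound argument, however, has a genuine gap. After restricting $W_{i_1}$ to the single cluster $U_j$, your numerator bound reads
\[
e_{G[U_j]}(A_1,A_2)\;\le\; e_G(V_{2i_1-1},V_{2i_1}) \;<\; \betain\,\vol_G(W_{i_1}) \;\le\; k\,\betain\,\vol_G(W_{i_1}\cap U_j),
\]
while your denominator is at most $\vol_G(W_{i_1}\cap U_j)$. This yields $\beta_{G[U_j]}(A_1,A_2)\lesssim k\betain$, not $<\betain$. The factor $k$ is intrinsic to your restriction step: the pigeonhole you use only guarantees $\vol_G(W_{i_1})\le k\,\vol_G(W_{i_1}\cap U_j)$, and indeed $\vol_G(W_{i_1})\ge \vol_G(W_{i_1}\cap U_j)$ always, so to get the ratio below $\betain$ you would need $\vol_G(W_{i_1})<\vol_{G[U_j]}(A_1\cup A_2)\le \vol_G(W_{i_1}\cap U_j)$, which is impossible. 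No assumption on $\betaout$ helps here---the loss is unrelated to $\betaout$---and in any case the lemma carries no hypothesis that $\betaout\ll\betain$; its lower bound depends on $\betain$ alone.

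The paper repairs exactly this loss by a different pigeonhole and by \emph{not} restricting to a single cluster. Given any $k+1$ disjoint sets, it finds one set $V_{i_0}$ with $\vol(V_{i_0}\cap C_i)\le\tfrac12\vol(C_i)$ for \emph{every} cluster $C_i$ (if no such set existed, two of the $k+1$ sets would each occupy more than half of the same cluster, contradicting disjointness). Then, for an arbitrary bipartition $(L,R)$ of $V_{i_0}$, it decomposes $L_i=L\cap C_i$, $R_i=R\cap C_i$, invokes $\betainner(G[C_i])\ge\betain$ on each piece, and sums:
\[
\beta_G(L,R)\;\ge\;\frac{\sum_i e_{G[C_i]}(L_i,R_i)}{\sum_i \vol(L_i\cup R_i)}\;\ge\;\betain.
\]
This shows $\beta_G(V_{i_0})\ge\betain$ and hence $\beta_{k+1}(G)\ge\betain$ directly; the Cheeger inequality then gives $\lambda_{k+1}\ge \betain^2/(C_1^2(k+1)^6)$. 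Summing over all clusters is precisely what avoids the factor-$k$ leakage that your single-cluster restriction incurs.
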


For our second lemma consider a cluster $U$ with $\betainner(G[U])\geq \betain$
and $\beta_G(U)\leq \betaout$. Then there exists a partition of $U$ into
biclusters $V_1$ and $V_2$ such that $\beta_G(V_1,V_2)=\beta_G(U)\leq \betaout$,
i.e., $V_1$ and $V_2$ correspond to the two polarized groups inside cluster $U$.
Intuitively, our lemma asserts that for most vertices $u\in U$, the sign of the
entry $\trvv_i(u)$ ($i=1,\dots,k$) reveals whether $u\in V_1$ or $u\in
V_2$.  In other words, we show that each vector $\trvv_i$ ($i=1,\dots,k$),
approximately reveals a biclustering of $U$ into two polarized communities.

Slightly more precisely, we will show that if $i\leq k$ then for each
``typical'' vertex pair $u,v\in U$, it holds that $\trvv_i(u)\approx \trvv_i(v)$
if $u,v\in V_1$ or $u,v\in V_2$, and $\trvv_i(u)\approx - \trvv_i(v)$ otherwise.
To do so, we establish a novel connection between the structure of each balanced
cluster and the geometric embedding from these $k$ eigenvectors: we relate the
signed indicator vector $\1_{V_1,V_2}$ to the first eigenvector $\ww$ of (the
normalized signed Laplacian of) the subgraph $G[U]$, and we use the variational
characterization of the second eigenvector of $G[U]$ to analyze the $\trvv_i$
restricted on~$U$.  Here, $\1_{V_1,V_2}\in \mathbb{R}^{U}$ is the vector with
$\1_{V_1,V_2}(u)=1$ if $u\in V_1$ and $\1_{V_1,V_2}(u)=-1$ if $u\in V_2$.
We present the proof of the lemma in App.~\ref{app:proofsmalldiff}.

\begin{restatable}{lemma}{smalldiff}
\label{lem:small-diff}
Let $\alpha\in (0,1)$.  Suppose $G=(V,E,\sigma)$ is signed and
	$(k,\betain,\betaout)$-clusterable. Let $U$ be a cluster of $G$ with
	$\betainner(G[U])\geq \betain$ and $\beta_G(U)\leq \betaout$.
	Then there exists a partition $V_1,V_2$ of $U$, a subset $\widetilde{U}$ of
	$U$, and constants $c_i, 1\leq i\leq k$, such that $|c_i|\leq 3d$,
	$|\widetilde{U}|\geq (1-\alpha)|U|$, and for each $i\leq k$, 
\begin{itemize}
\item if $u\in V_1\cap \widetilde{U}$, then 
$\abs{\trvv_i(u)-c_i \cdot \frac{1}{\sqrt{\mu_U}}} \leq \frac{64d C_1}{\betain}\cdot\sqrt{\frac{\betaout}{\alpha \cdot \mu_U}}$
\item if $u\in V_2\cap \widetilde{U}$, then 
$\abs{\trvv_i(u)+ c_i \cdot \frac{1}{\sqrt{\mu_U}}} \leq \frac{64d C_1}{\betain}\cdot\sqrt{\frac{\betaout}{\alpha \cdot \mu_U}}$
\end{itemize}
	where $C_1$ is the constant from Thm.~\ref{thm:signed-cheeger}.
\end{restatable}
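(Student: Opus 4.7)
The plan is to prove Lem.~\ref{lem:small-diff} in three stages: (i) establish a spectral gap for the signed normalized Laplacian $\LL^\sigma_{G[U]}$ of the induced subgraph $G[U]$; (ii) show that the first eigenvector of $\LL^\sigma_{G[U]}$ is $L^2$-close to the degree-scaled signed indicator of the bipartition $(V_1,V_2)$ witnessing $\beta_G(U)$; and (iii) argue that, for each $i\leq k$, the restriction $\trvv_i|_U$ is close to a scalar multiple of this first eigenvector, because Lem.~\ref{lem:cheeger-eigenvalues} forces $\vv_i$ to have small Rayleigh quotient globally and this bound transfers to $G[U]$.

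For stage (i), apply Thm.~\ref{thm:signed-cheeger} directly to $G[U]$. Since $\beta_{G[U]}(V_1,V_2)\leq\beta_G(V_1,V_2)=\beta_G(U)\leq\betaout$ (passing to the subgraph only removes the boundary-edge contribution in the numerator), the smallest eigenvalue of $\LL^\sigma_{G[U]}$ is at most $2\betaout$. For the second eigenvalue, observe that among any two disjoint sub-bipartitions of $V(G[U])$ at least one has volume $\leq\mu_U/2$, so its $\beta$-value is $\geq\betainner(G[U])\geq\betain$; hence $\beta_2(G[U])\geq\betain$, and Thm.~\ref{thm:signed-cheeger} gives $\lambda_2(\LL^\sigma_{G[U]})\geq\betain^2/(64C_1^2)$. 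For stage (ii), let $\mathbf{g}=\1_{V_1,V_2}$ (the $\pm 1$ indicator on $U$) and $\f=\DD_U^{1/2}\mathbf{g}$, where $\DD_U$ is the degree matrix of $G[U]$. A direct expansion yields $\f\LL^\sigma_{G[U]}\f^\top=\mathbf{g}\LLL^\sigma_{G[U]}\mathbf{g}^\top=2\,e_{G[U]}(V_1,V_2)$ and $\|\f\|_2^2=\mu_U$, so the Rayleigh quotient of $\f/\|\f\|_2$ is $\leq 2\betaout$. Decomposing $\f/\|\f\|_2=a\ww+(\text{orthogonal part})$ in the eigenbasis of $\LL^\sigma_{G[U]}$ and invoking the spectral gap gives that $\f/\|\f\|_2$ lies within squared $L^2$-distance $O(\betaout C_1^2/\betain^2)$ of $a\ww$ with $a^2\geq 1-O(\betaout C_1^2/\betain^2)$.

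Stage (iii) is the main step. For $i\leq k$, set $\xx_i:=\trvv_i|_U\,\DD_U^{1/2}$. The identity $\trvv_i\LLL^\sigma_G\trvv_i^\top=\vv_i\LL^\sigma\vv_i^\top=\lambda_i\leq 2\betaout$ (by Lem.~\ref{lem:cheeger-eigenvalues}), together with the observation that restricting the signed unnormalized energy to edges inside $G[U]$ only decreases it, gives $\xx_i\LL^\sigma_{G[U]}\xx_i^\top\leq 2\betaout$. Writing $\xx_i=a_i\ww+\mathbf{r}_i$ with $\mathbf{r}_i\perp\ww$, the spectral gap yields $\|\mathbf{r}_i\|_2^2\leq 2\betaout/\lambda_2(\LL^\sigma_{G[U]})=O(\betaout C_1^2/\betain^2)$. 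Combining with stage (ii), $\xx_i$ lies within squared $L^2$-distance $O(\betaout C_1^2/\betain^2)$ of $(a_i/a)\cdot\f/\|\f\|_2$. Setting $c_i:=a_i/a$ and undoing the substitutions $\xx_i=\trvv_i|_U\DD_U^{1/2}$ and $\f/\|\f\|_2=\DD_U^{1/2}\mathbf{g}/\sqrt{\mu_U}$, this becomes
\[
\sum_{u\in U} d_{G[U]}(u)\bigl(\trvv_i(u)-c_i\mathbf{g}(u)/\sqrt{\mu_U}\bigr)^2 = O(\betaout C_1^2/\betain^2).
\]
The bound $|c_i|\leq 3d$ follows from $\|\vv_i\|_2=1$ (so $\sum_u d_G(u)\trvv_i(u)^2\leq 1$) combined with the fact that $\mu_U$ and $\vol_G(U)$ differ by at most an $O(\betaout)$ fraction (the boundary has size $\leq\betaout\vol_G(U)$). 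Finally, a Markov argument converts the $L^2$-bound into pointwise bounds: for each $i$, the set $B_i$ of $u\in U$ failing the pointwise bound with threshold $\tau=\frac{64dC_1}{\betain}\sqrt{\betaout/(\alpha\mu_U)}$ has $|B_i|\leq\alpha|U|/k$, and a union bound over $i\leq k$ gives $\widetilde U:=U\setminus\bigcup_i B_i$ of size $\geq(1-\alpha)|U|$.

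The main obstacle will be stage (iii): carefully turning the \emph{global} Rayleigh bound on $\vv_i$ into a tight \emph{restricted} bound on the component of $\xx_i$ orthogonal to $\ww$ in $G[U]$, so that the final pointwise error carries only a single factor $1/\betain$ (and not $1/\betain^2$) inside the square root. A secondary subtlety is that the bipartition $(V_1,V_2)$ and exceptional set $\widetilde U$ must be chosen to work simultaneously for all $i\leq k$---the bipartition is intrinsic to $U$ (the one witnessing $\beta_G(U)$), while the per-$i$ exceptional sets $B_i$ are combined by a union bound after applying Markov with threshold $\alpha/k$ per eigenvector.
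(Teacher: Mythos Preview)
Your proposal is correct and follows essentially the same route as the paper: establish a spectral gap for $\LL^\sigma_{G[U]}$ via Thm.~\ref{thm:signed-cheeger} (using $\beta_{G[U]}(V_1,V_2)\leq\betaout$ for $\lambda_1$ and $\beta_2(G[U])\geq\betainner(G[U])\geq\betain$ for $\lambda_2$), show that both $\DD_U^{1/2}\1_{V_1,V_2}/\sqrt{\mu_U}$ and $\trvv_i|_U\DD_U^{1/2}$ have Rayleigh quotient $\leq 2\betaout$, use the gap to bound their components orthogonal to the first eigenvector $\ww$, combine by triangle inequality, and finish with Markov. The only cosmetic difference is that the paper invokes the variational characterization $\lambda_2(H)=\min_{\f}\max_c\frac{\sum_{(u,v)}(\f(u)-\sigma(uv)\f(v))^2}{\sum_u(\f(u)-c\,\ww(u))^2 d_H(u)}$ to bound the distance to $c\,\ww$, whereas you phrase the same step as an orthogonal eigendecomposition $\xx_i=a_i\ww+\mathbf r_i$; these are equivalent.

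One remark on your final union bound: the paper actually defines only two exceptional sets, $B_1$ (where $\trvv_i$ is far from $c_{1,i}\ww$) and $B_2$ (where $\trii$ is far from $c_2\ww$), and takes $\widetilde U=U\setminus(B_1\cup B_2)$ with $|B_1|,|B_2|\le\alpha|U|/2$. As written in the paper, $B_1$ still depends on $i$, so a union over $i\le k$ is indeed needed (you are right to flag this); however, with the threshold $\tau=\frac{64dC_1}{\betain}\sqrt{\betaout/(\alpha\mu_U)}$ fixed by the lemma statement, Markov does not in general give $|B_i|\le\alpha|U|/k$ without an extra factor in the constant. This is a minor bookkeeping issue that affects both your write-up and the paper's, not a flaw in the strategy.
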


\section{Experiments}
\label{sec:experiments}
We experimentally evaluated our algorithms on a MacBook
Pro with 16~GB RAM and a 2~GHz Quad-Core Intel Core~i5.  Our
algorithms were implemented in C++11.%
We always performed 8~\textsc{WhichCluster}-queries in
parallel. %
See App.~\ref{sec:implementation} for more implementation details.
Our source code is available on
github.\footnote{\url{https://github.com/StefanResearch/signed-oracle}\label{fn:github}}

\textbf{Quality measure.}
In our evaluation, we consider a
set of ground-truth clusters $U_1,\dots,U_k$ and the output of an algorithm
$\tilde{U}_1,\dots,\tilde{U}_s$. We assume w.l.o.g.\ that $s\geq k$.
The \emph{accuracy} of the clustering
is $\min_{\pi} \frac{1}{m} \sum_{i=1}^k \abs{U_i \cap \tilde{U}_{\pi(i)}}$,
where $\pi\colon[k]\to[s]$ is an injective function and $m=\sum_i
\abs{U_i}$ is the number of elements in the ground-truth clusters. Thus, the
accuracy %
measures how many elements were classified correctly; since $\pi$ is injective,
each $U_i$ must be mapped to a different $\tilde{U}_j$.  When a cluster
$\tilde{U}_j$ contains an element $v \not\in \bigcup_i U_i$, we remove $v$ from
$\tilde{U}_j$ (this can happen %
when we do not have ground-truth information for all vertices).

\textbf{Algorithms.}
First, we consider our signed oracles \rwseeded and \rwunseeded, resp., obtain
ground-truth seed nodes or randomly picked vertices in the preprocessing (see
App.~\ref{sec:implementation}).  Second, we implemented two \emph{un}signed
oracles which operate on the underlying unsigned graph (see
App.~\ref{sec:implementation}); we denote them \rwuseeded and \rwuunseeded,
depending on their initialization.

As baselines we use \FOCG by \citet{chu16finding} and
\polarSeeds by \citet{xiao2020searching}.
\FOCG is a \emph{global} algorithm that requires access to the full graph and
enumerates nearly-balanced communities. \polarSeeds is a local algorithm that
requires some seed nodes as input and explores the graph locally to find a
subgraph with small signed bipartiteness ratio.
We used the implementations provided by the authors and ran them with the
default parameters.

Since our focus was on practically efficient oracle data structures, we did not
compare against the oracle by \citet{gluch21spectral}, as
it is of highly theoretical nature, and involves several subroutines that hinder
the implementation in practice.\footnote{For instance, Alg.~10 in the arxiv
	version of \citet{gluch21spectral} samples a set of $\Omega(k^4)$ vertices
	and then enumerates all possible partitions of this set. This would be
	infeasible in practice.}

We consider two types of experiments:
(1)~\emph{Clustering} a graph $G$ into polarized communities $U_1,\dots,U_k$ (as
per Def.~\ref{def:clustering}) and (2)~\emph{biclustering} $G$ into
opposing polarized groups $(V_{1},V_{2}),(V_3,V_4),\dots,(V_{2k-1},V_{2k})$. For
the biclustering setting, we consider a heuristic of our oracle which does not
take absolute values when computing $\mm_x$ (see
App.~\ref{sec:implementation} for details). In both cases, we use the
corresponding versions of the algorithms; to avoid blowing up the
notations, we use the same algorithm names for the clustering and
biclustering versions.

\textbf{Evaluation on synthetic data.} Due to lack of space, we present our
experiments on synthetic data in App.~\ref{sec:experiments-synthetic}.
Our experiments show that our oracles outperform the baselines
when the clusters are large. The unsigned oracle works well for clustering
but only the signed oracle can recover the biclusters $(V_{2i-1},V_{2i})$.
Also, the seeded oracles outperform the unseeded oracles and our
oracle scale linearly in the number the number of steps and the walk lengths.

We also note that \FOCG{} and \polarSeeds{} do not perform very well on the
synthetic datasets.  We believe the main reason is that these algorithms
were built to find ``small'' clusters which do not necessarily partition the
graph; in contrast, our method is strongest in the presence of large clusters
that partition the graph.  This large-vs-small cluster intuition is also
corroborated by our experiments on synthetic data in
Figure~\ref{fig:results_randomGraph_k_accuracy} in the appendix: as the
number~$k$ of clusters increases, the clusters get smaller and the performance
of \polarSeeds{} improves. If we increased~$k$ further, the performance of
\polarSeeds{} would improve further and eventually outperform our methods.

\textbf{Evaluation on real-world data.}
Since we are not aware of public signed graph datasets with a small number of
large ground-truth communities, we created our own real-world data. We make them
available on github.\footref{fn:github}

We obtained
our graphs from English-language Wikipedia by considering Wikipedia pages about
politicians and the articles linked on their pages.  We selected five countries
(UK, Germany, Austria, Spain and US) and for each we selected a number
of politicians (UK:~2307, Germany:~1444, Austria:~190, Spain:~546, US:~5053);
this gives the ground-truth clusters $U_i$.  The politicians from the clusters
$U_i$ belong to one of two opposing parties, which splits each $U_i$ into
opposing groups $V_{2i-1}$ and $V_{2i}$.  In our graphs, the vertices
correspond to Wikipedia pages of politicians and articles that are linked
on the politicians' pages.  An edge~$(u,v)$ indicates that page~$u$ has a link
to page~$v$.  The sign for an edge~$(u,v)$ is $-$ if $u$ and $v$ are
politicians from the same country and they are in opposing parties (e.g.,
Democrats and Republicans in the US); otherwise, we set the sign to~$+$.

\begin{table*}[t]
  \caption{Statistics for real-world datasets. Here, $\abs{E^-}$ denotes the number
	  of negative edges, $\overline{\text{deg}}$ and $\text{deg}_{\max}$
	  denote the average degree and maximum degree, and
	  $\abs{V}_{\text{labeled}}$ denotes the number of vertices with
	  ground-truth labels.
  }
  \centering
  \label{tab:exp:properties}
  \smallskip
  \begin{tabular}{lrrrrrr}
    \toprule
    Dataset & $\abs{V}$ & $\abs{E}$ & $\abs{E^-}/\abs{E}$ &
	$\overline{\text{deg}}$ &  $\text{deg}_{\max}$ & $\abs{V}_{\text{labeled}}$ \\
    \midrule
	\WikiSmall & 9\,211 & 395\,038 & 0.39 & 140.3 & 1\,503 & 9\,211 \\
	\WikiMedium & 34\,404 & 904\,768 & 0.28 & 52.6 & 3\,407 & 9\,453 \\
	\WikiLarge & 258\,259 & 3\,187\,096 & 0.08 & 24.7 & 6\,017 & 9\,540 \\
    \bottomrule
  \end{tabular}
\end{table*}

We created three dataset. \WikiLarge contains all politicans and all
articles linked on their pages; we included all edges that contain at least one
politician. %
On \WikiLarge, the signed bipartiteness ratios of the three large communities
(UK, Germany, US) is $\approx0.66$ and for the smaller communities (Austria,
Spain) it is $\approx0.9$.  We also consider two smaller versions of \WikiLarge:
\WikiSmall is the largest connected component of $G[P]$, where $G$ is the graph
given by \WikiLarge and $P$ is the set of politician nodes in \WikiLarge;
$\WikiMedium$ is the largest connected component of $G[P \cup V']$, where $V'$
is a randomly sampled set containing 10\% of the non-politician nodes from
\WikiLarge.
Note that for \WikiLarge and
\WikiMedium we only have a ground-truth clustering for a \emph{subset} of the vertices
(namely, for the politician nodes). We present statistics of the datasets
in Table~\ref{tab:exp:properties}.  
In our experiments, we use the undirected versions
of these datasets, i.e., each directed edge is made undirected.

\begin{figure*}[!htb]
  \begin{center}
  \subfigure{
	  \includegraphics[width=4\smallfigwidth]{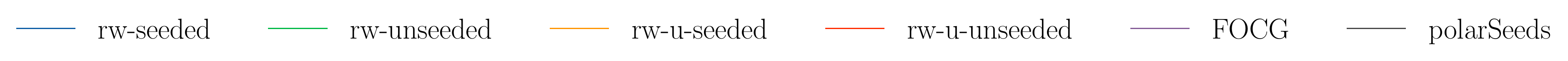}
  }
  \addtocounter{subfigure}{-1}

  \centering
  \subfigure[\small Vary \#steps: clustering]{
    \label{fig:wikiSmall_vary_numSteps_accuracy}
    \includegraphics[height=\smallfigheight,width=\smallfigwidth]{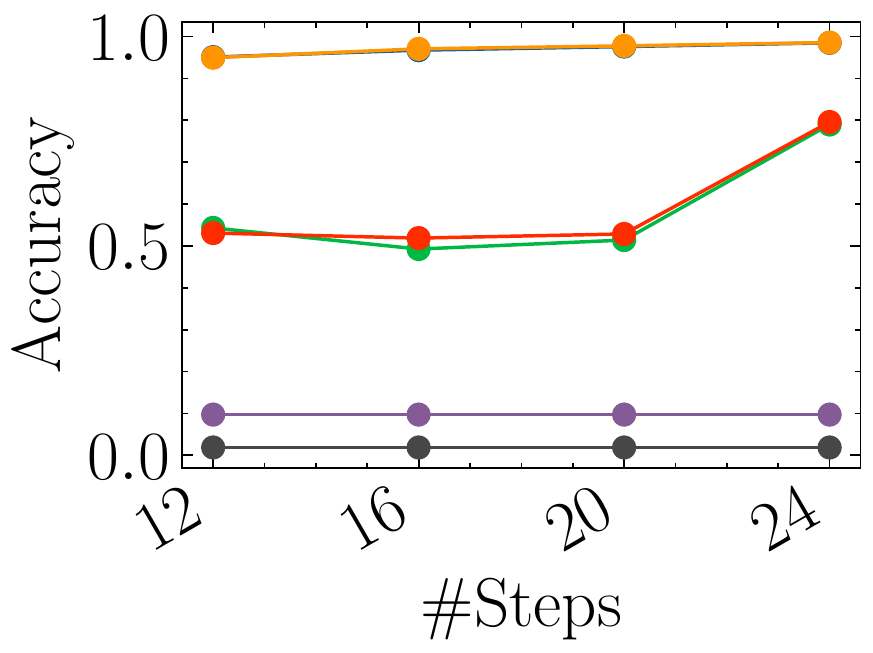}
  }\hspace*{\smallfigsep}
  \subfigure[\small Vary \#walks: clustering]{
    \label{fig:wikiSmall_vary_numWalks_accuracy}
    \includegraphics[height=\smallfigheight,width=\smallfigwidth]{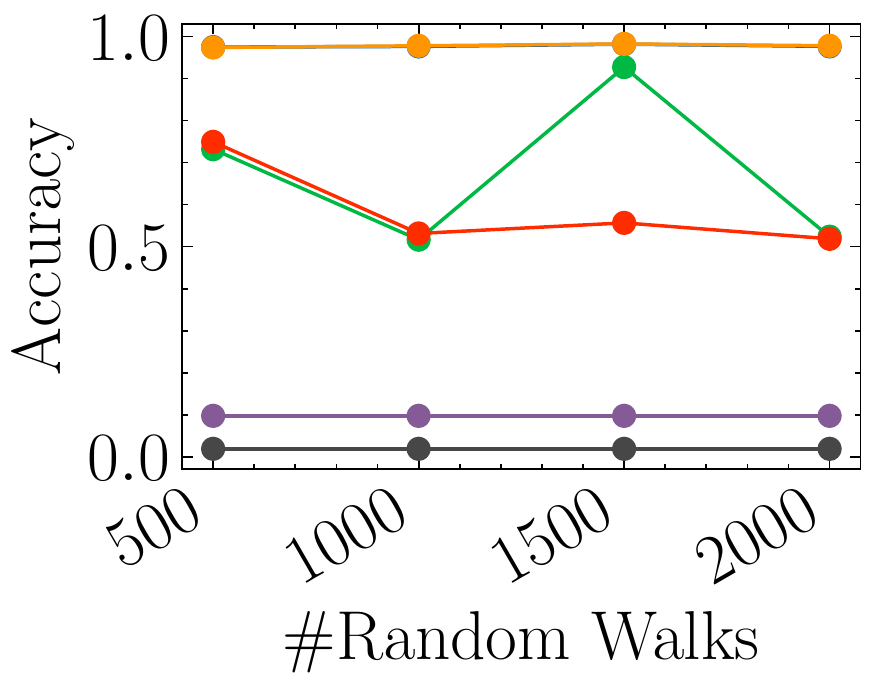}
  }\hspace*{\smallfigsep}
  \subfigure[\small Vary \#steps: biclustering]{
    \label{fig:wikiSmall_vary_numSteps_accuracy_biclustering}
    \includegraphics[height=\smallfigheight,width=\smallfigwidth]{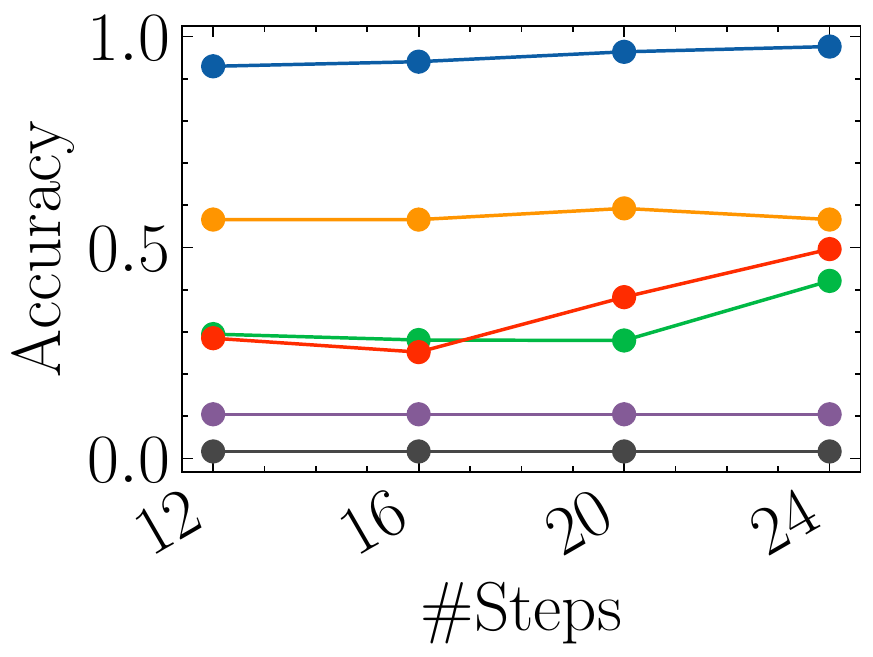}
  }\hspace*{\smallfigsep}
  \subfigure[\small Vary \#walks: biclustering]{
    \label{fig:wikiSmall_vary_numWalks_accuracy_biclustering}
    \includegraphics[height=\smallfigheight,width=\smallfigwidth]{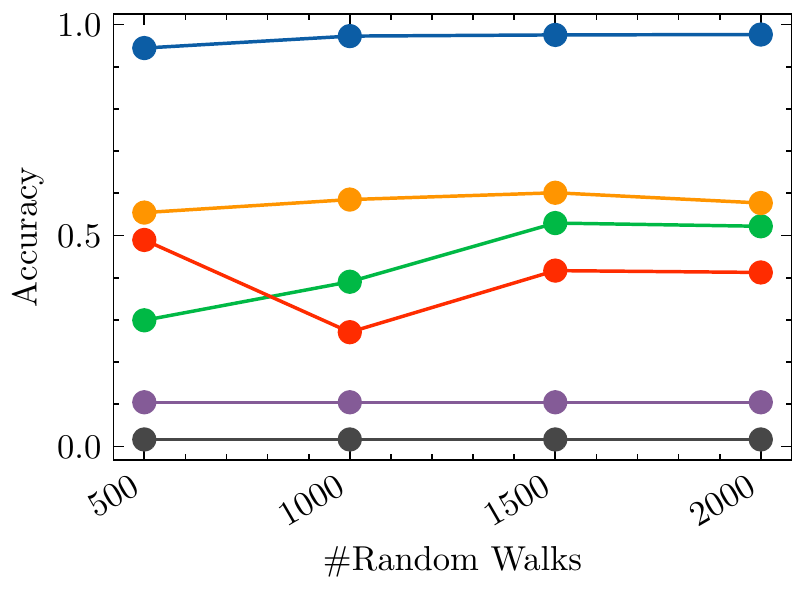}
  } \\
  \end{center}
  \caption{Accuracies of the algorithms on \WikiSmall.
	  We consider clustering
		  (Figs.~\subref{fig:wikiSmall_vary_numSteps_accuracy} and~\subref{fig:wikiSmall_vary_numWalks_accuracy})
		and biclustering
		(Figs.~\subref{fig:wikiSmall_vary_numSteps_accuracy_biclustering}
		 and~\subref{fig:wikiSmall_vary_numWalks_accuracy_biclustering}) experiments,
		with varying number of steps
			(Figs.~\subref{fig:wikiSmall_vary_numSteps_accuracy} and~\subref{fig:wikiSmall_vary_numSteps_accuracy_biclustering})
		and varying number of random walks
		(Figs.~\subref{fig:wikiSmall_vary_numWalks_accuracy} and~\subref{fig:wikiSmall_vary_numWalks_accuracy_biclustering}).
		Since \FOCG and \polarSeeds do not use the number of steps and random
		walks as input, we only ran them once.
  }
  \label{fig:wikiS-experiments}
\end{figure*}

\begin{figure*}[!htb]
  \begin{center}
  \subfigure{
	  \includegraphics[width=3\smallfigwidth]{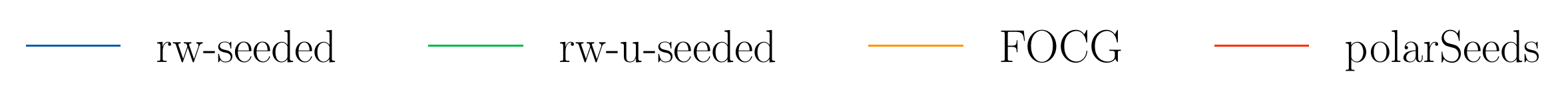}
  }
  \addtocounter{subfigure}{-1}

  \centering
  \subfigure[\small \WikiLarge: clustering]{
    \label{fig:wikiLarge_vary_numWalks_accuracy}
    \includegraphics[height=\smallfigheight,width=\smallfigwidth]{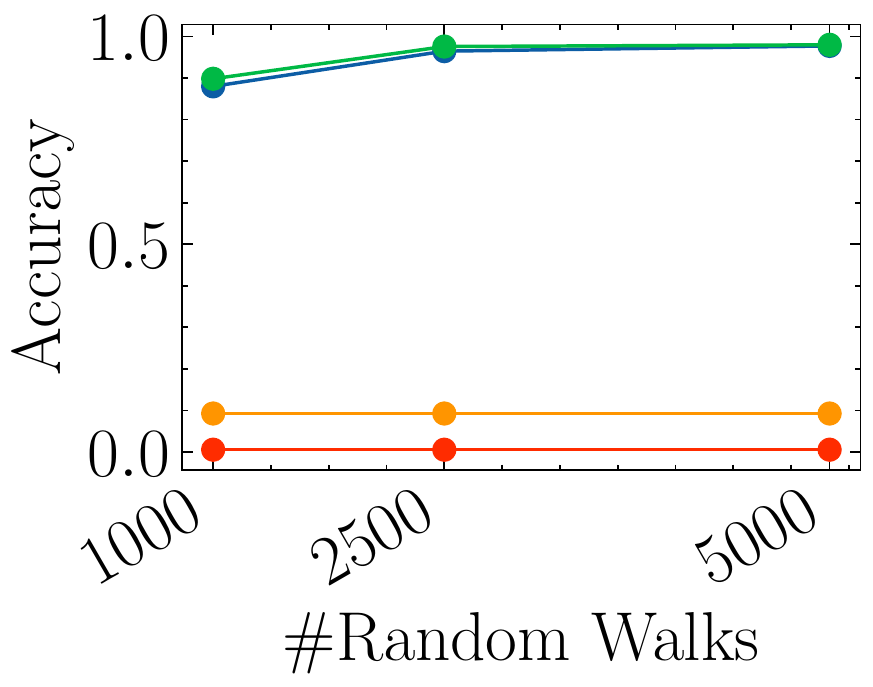}
  }\hspace*{\smallfigsep}
  \subfigure[\small \WikiLarge: avg.\ query time]{
    \label{fig:wikiLarge_vary_numWalks_time-norm}
    \includegraphics[height=\smallfigheight,width=\smallfigwidth]{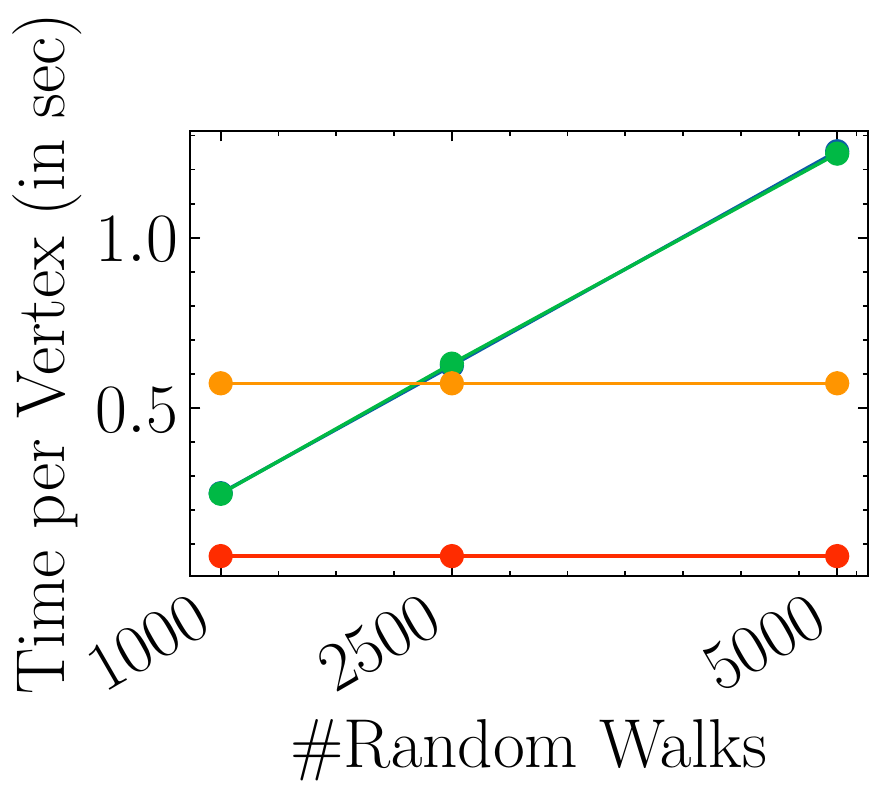}
  }\hspace*{\smallfigsep}
  \subfigure[\small \WikiMedium: biclustering]{
    \label{fig:wikiMedium_vary_numWalks_accuracy_biclustering}
    \includegraphics[height=\smallfigheight,width=\smallfigwidth]{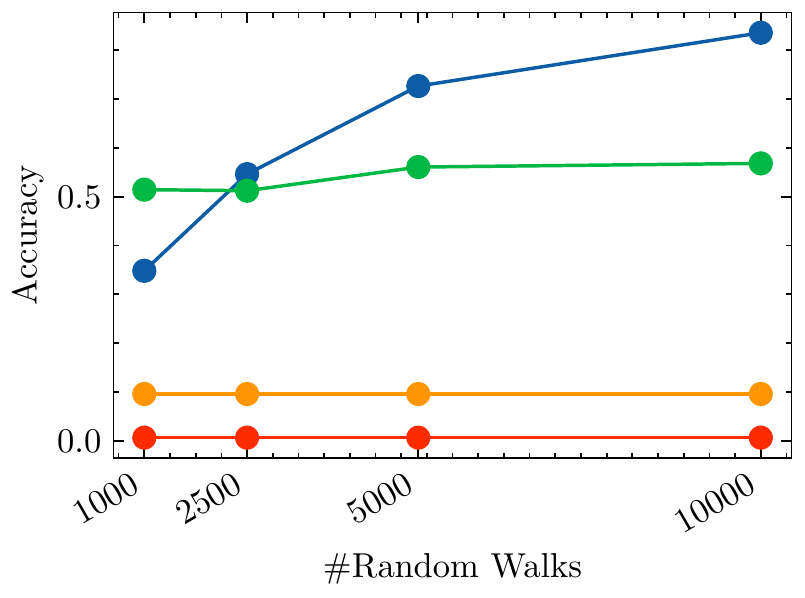}
  }\hspace*{\smallfigsep}
  \subfigure[\small \WikiLarge: biclustering]{
    \label{fig:wikiLarge_vary_numWalks_accuracy_biclustering}
    \includegraphics[height=\smallfigheight,width=\smallfigwidth]{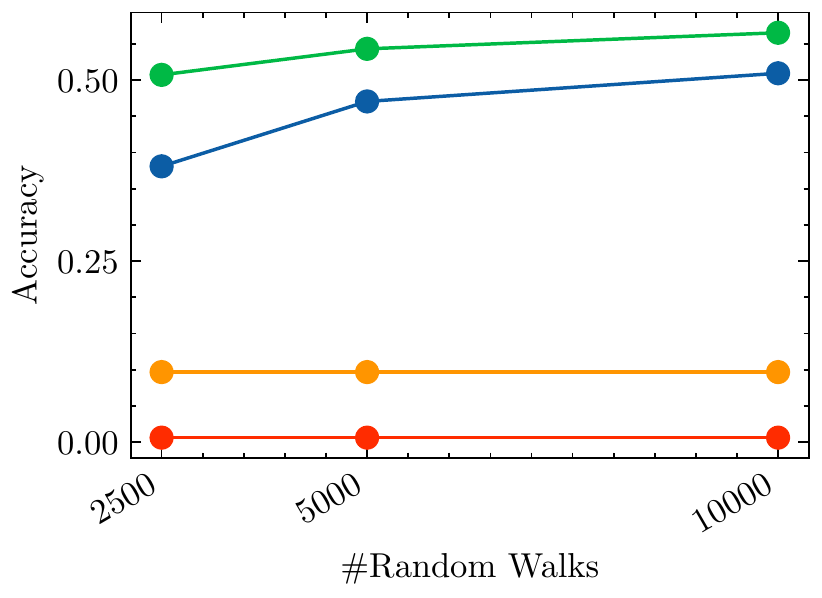}
  } \\
  \end{center}
  \caption{Results on \WikiMedium and \WikiLarge for varying numbers of random
	  walks.  We present accuracies for clustering on \WikiLarge
		  (Fig.~\subref{fig:wikiLarge_vary_numWalks_accuracy}),
		  and for biclustering in \WikiMedium (Fig.~\subref{fig:wikiMedium_vary_numWalks_accuracy_biclustering})
		  and in \WikiLarge (Fig.~\subref{fig:wikiLarge_vary_numWalks_accuracy_biclustering}).
	  The running time per vertex for clustering in 
	  \WikiLarge is given in Fig.~\subref{fig:wikiLarge_vary_numWalks_time-norm}.
		Since \FOCG and \polarSeeds do not use random
		walks as input, we only ran them once.
	}
  \label{fig:wikiML-experiments}
\end{figure*}

\emph{Experiments on \WikiSmall.}
Fig.~\ref{fig:wikiS-experiments} shows our results on \WikiSmall.
Unless stated otherwise, our
oracles used $1000$~random walks of length~$20$. For clustering (biclustering)
experiments, the seeded oracles obtained
$10$~($5$)~seeds for each $U_i$ ($V_i$); the unseeded oracles
sampled $5k$~vertices, where $k=5$ for clustering and $k=10$ for biclustering.  For the clustering
experiments~(Figs.~\ref{fig:wikiSmall_vary_numSteps_accuracy}
		and~\ref{fig:wikiSmall_vary_numWalks_accuracy}), the seeded oracles
achieve almost perfect accuracy; the unseeded methods are worse and benefit from
longer random walks (Fig.~\ref{fig:wikiSmall_vary_numSteps_accuracy}).
For the biclustering
experiments~(Figs.~\ref{fig:wikiSmall_vary_numSteps_accuracy_biclustering}
		and~\ref{fig:wikiSmall_vary_numWalks_accuracy_biclustering}),
\rwseeded is by far the best method and achieves excellent
accuracies. \rwuseeded consistently achieves accuracies above 50\% but below
60\%; this suggests that \rwuseeded successfully finds
the clusters $U_i$ (as shown by the clustering results) but, as it
ignores the edge signs, it places the vertices into the biclusters $V_{2i-1}$ and
$V_{2i}$ only slightly better than random.  For biclustering, the
unseeded methods do not perform well. \FOCG and
\polarSeeds achieve low accuracies, since they return too small clusters.

\emph{Experiments on \WikiMedium and \WikiLarge.}
Fig.~\ref{fig:wikiML-experiments} presents our results on the larger
datasets. We did not run the unseeded oracles, since in \WikiMedium and
\WikiLarge not all vertices are contained in ground-truth communities.
We used random walks of length~$24$ for \WikiMedium and of length~$20$ for
\WikiLarge; we initialized the seeds as for \WikiSmall.
Fig.~\ref{fig:wikiLarge_vary_numWalks_accuracy}
shows that even on \WikiLarge, the seeded oracles find the clusters $U_i$ with
almost perfect accuracy.  Furthermore, the algorithms
scale linearly in the number of random walks and on average queries takes less
than 1.4~seconds
(Fig.~\ref{fig:wikiLarge_vary_numWalks_time-norm}).
However, compared to \WikiSmall, we obtain lower accuracies for the biclustering
experiments: while on \WikiMedium, \rwseeded still achieves accuracies over 83\%
with enough random walks
(Fig.~\ref{fig:wikiMedium_vary_numWalks_accuracy_biclustering}),
for \WikiLarge even with 10\,000 random walks, \rwseeded only achieves an
accuracy of 50\%
(Fig.~\ref{fig:wikiLarge_vary_numWalks_accuracy_biclustering}).
We blame this on the fact that in \WikiLarge 
the ground-truth clusters are relatively small (they contain only 3.7\% of the
vertices). Additionally, in \WikiLarge the fraction of negative edges is
only~8\%, and thus only few random walks will encounter a negative edge and
\rwseeded cannot benefit from the edge-sign information. As before, \FOCG and
\polarSeeds have low accuracies.%

\emph{Conclusion.} Our oracles %
are successful for finding polarized communities $U_i$, even when the graphs do
not have bounded degrees (as in our theoretical analysis).  \rwseeded is
successful in finding the biclusters $(V_{2i-1},V_{2i})$, as long as they are
large enough and there are enough negative edges.  

\section{Conclusion}
\label{sec:conclusion}
We presented a local clustering oracle for signed graphs. Given a vertex~$u$, the oracle can
return the cluster membership of $u$ in sublinear time. Such a data structure is
desirable when the input graphs are large and the cluster membership is
only required for a small number of vertices. We proved that if the graph
satisfies a clusterability assumption, then the oracle returns correct cluster
memberships for a $(1-\varepsilon)$-fraction of the vertices w.r.t.\ a hidden
set of ground-truth clusters. We also evaluated the oracle practically,
showing that it achieves good results for large clusters.

In the future it will be interesting to provide a theoretical analysis for our
biclustering heuristic; here, ideas from \citet{trevisan2012max} might be
helpful. From a practical point of view it would be interesting to obtain
improvements for our biclustering heuristic, which allow to find small
biclusters $(V_{2i-1},V_{2i})$ when there are few negative edges.  Two
directions for this might be as follows: (1)~For a node $u$, first identify the
cluster $U_i$ of $u$ using the unsigned oracle and then use auxiliary
information to decide whether $u$ belongs to $V_{2i-1}$ or to $V_{2i}$. (2)~When
the underlying graph contains few negative edges, bias the random walks of
\rwseeded such that it takes disproportionately many negative edges.

\section*{Acknowledgements}
This research is supported by the the ERC Advanced Grant REBOUND~(834862), the
EC H2020 RIA project SoBigData++~(871042), and the Wallenberg AI, Autonomous
Systems and Software Program~(WASP) funded by the Knut and Alice Wallenberg
Foundation. P.P. is supported by ``the Fundamental Research Funds for the Central Universities''.

\bibliographystyle{icml2022}
\bibliography{main}

\pagebreak

\appendix

\section{Overview of the Appendix}
The appendix is organized as follows:
\begin{itemize}
	\item Appendix~\ref{sec:motivation-inner}: We provide further motivation
		for our choice of the inner signed bipartiteness ratio.
	\item Appendix~\ref{sec:pseudocode}: We present the pseudocode for our
		algorithms.
	\item Appendix~\ref{sec:analysis-overview}: We give an overview of our proof
		strategy.
	\item Appendix~\ref{sec:proofs}: We present the full proofs for all claims in
		the main text.
	\item Appendix~\ref{sec:implementation}: We give details on the implementations
		of our algorithms, including parameter tuning.
	\item Appendix~\ref{sec:experiments-synthetic}: We evaluate our algorithm on
		synthetically generated data.
\end{itemize}

\section{Further Motivation of the Inner Signed Bipartiteness Ratio}
\label{sec:motivation-inner}
We provide further motivation for the inner signed bipartiteness ratio.
Recall that we set 
$$ \betainner(G)
	:= \min_{\emptyset \neq U \subseteq V \colon \vol(U) \leq \frac{1}{2} \vol(G)}
		\beta_{G}(U).$$

First, let us justify our intuition that if $\betainner(G)$ is large, then
$\betainner(G)$ cannot be decomposed into two nearly-balanced communities.  To
make this more formal, recall the definition of
$\beta_2(G) = \min_{U_1,U_2} \max_{i=1,2} \beta_G(U_i)$, where the minimum is
taken over all partitions $U_1,U_2$ of $V$ with $U_1,U_2\neq\emptyset$.  Now
note that if we could split $G$ into two nearly-balanced communities, then we
would have that $\beta_2(G)$ is small.  Thus, we show that if $\betainner(G)$ is
large then $\beta_2(G)$ is at least as large. This justifies our informal
intuition above.
\begin{lemma}
\label{lem:beta-inner-beta-two}
	It holds that $\beta_2(G) \geq \betainner(G)$.  In particular, if
	$\betain \in \mathbb{R}_{\geq 0}$ and $\betainner(G) \geq \betain$ then
	$\beta_2(G)\geq\betain$.
\end{lemma}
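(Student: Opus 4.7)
The plan is to unfold both definitions and use a simple volume pigeonhole argument. Let $(U_1, U_2)$ be any two disjoint non-empty subsets of $V$ witnessing $\beta_2(G)$; since they are disjoint we have $\vol(U_1) + \vol(U_2) \leq \vol(G)$, so at least one of them, say $U_1$ (relabelling if needed), satisfies $\vol(U_1) \leq \tfrac{1}{2}\vol(G)$. Then $U_1$ is a valid candidate in the minimization defining $\betainner(G)$, which immediately gives $\beta_G(U_1) \geq \betainner(G)$, and consequently $\max\{\beta_G(U_1), \beta_G(U_2)\} \geq \betainner(G)$.

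Taking the minimum over all admissible pairs $(U_1, U_2)$ yields $\beta_2(G) \geq \betainner(G)$. The ``in particular'' clause then follows by transitivity: if $\betainner(G) \geq \betain$, then $\beta_2(G) \geq \betainner(G) \geq \betain$. There is essentially no technical obstacle here; the only point worth stating carefully is the volume pigeonhole that selects the ``small side'' of the pair, which is what makes $\betainner$ (whose minimization is restricted to subsets of volume at most $\tfrac{1}{2}\vol(G)$) comparable to $\beta_2$ (whose minimization ranges over all pairs of disjoint non-empty subsets). The argument does not require either set to partition $V$, so it works with the definition of $\beta_k$ given in Section~\ref{sec:preliminaries} verbatim.
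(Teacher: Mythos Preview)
Your proposal is correct and follows essentially the same pigeonhole-on-volumes idea as the paper's proof; the paper argues by contradiction while you argue directly, but the core step is identical. Your version is in fact slightly cleaner, since you correctly observe that the definition of $\beta_2(G)$ only requires $U_1,U_2$ to be disjoint (not a full partition of $V$), and the inequality $\vol(U_1)+\vol(U_2)\le\vol(G)$ still gives the needed small side---the paper's phrasing ``a partition $U_1,U_2$ of $V$'' is a minor imprecision that your argument avoids.
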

\begin{proof}
	We prove the first claim by contradiction, i.e., suppose that
	$\betainner(G)=\betain$ and $\beta_2(G)<\betain=\betainner(G)$. Then there
	exists a partition $U_1,U_2$ of $V$ such that $\beta_G(U_1) < \betain$ and
	$\beta_G(U_2) < \betain$. Next, since $U_1$ and $U_2$ form a partition of
	$V$, we must have that $\vol(U_1)\leq\frac{1}{2}\vol(G)$ or
	$\vol(U_2)\leq\frac{1}{2}\vol(G)$.  This implies that there exists a set
	$U\in\{U_1,U_2\}$ such that $\vol(U)\leq\frac{1}{2}\vol(G)$ and
	$\beta_G(U) < \betain$. Thus, $\betainner(G)<\betain$. But this contradicts
	our assumption that $\betainner(G)=\betain$.

	The second claim of the lemma immediately follows from applying the first
	claim.
\end{proof}

\textbf{Why do we not assume that $\beta(G[U_i])$ is large?}
Next, we discuss why we cannot replace the inner signed bipartiteness ratio
$\betainner(G)$ with the (classic) signed bipartiteness ratio $\beta(G[U_i])$ in
Def.~\ref{def:clustering}.  To see this, consider
Def.~\ref{def:clustering} with $\betainner(G[U_i]) \geq \betain$ replaced
by $\beta(G[U_i])\geq \betain$ for all $i=1,\dots,k$. Again, we consider the the
setting with $\betain > \betaout$.  Intuitively, this would mean that subgraph
$G[U_i]$ is ``far from balanced'' on its inside (since $\beta(G[U_i])$ is large)
while outside it is close to balanced (since $\beta_G(U_i)$ is small). 

Unfortunately, we show that if $\betain > \betaout$ then no graph can satisfy
this new definition. Before we give a more general proof, consider the following
example. Consider a graph $G$ which consists of two positive cliques among
vertices $V_1$ and $V_2$ and in between $V_1$ and $V_2$ there is biclique
consisting only of negative edges. Then $\beta(G)=0$ and $\beta(V_1,V_2)=0$ but
also $\beta(G[V_1])=0$ and $\beta(G[V_2])=0$ (since graphs with only positive
edges are balanced).

Now we give a more general result showing that for any $U\subseteq V$, we have
that $\beta(G[U])\leq \beta_G(U)$. Therefore, the previously proposed definition
that avoids the inner signed bipartiteness ratio cannot work: it implies that
$\betain \leq \beta(G[U_i]) \leq \beta_G(U_i)\leq \betaout$ but this contradicts
our assumption that $\betain > \betaout$.
\begin{lemma}
\label{lem:betain-betaout}
For any two disjoint subsets $V_1,V_2\subseteq V$, it holds that
$\beta(G[V_{1}\cup V_{2}]) \leq \beta_G(V_{1}, V_{2})$.
Furthermore, for any $U\subseteq V$, it holds that 
	$\beta(G[U]) \leq \beta_G(U)$.
\end{lemma}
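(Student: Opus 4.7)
The plan is to reduce the second claim to the first by simply taking the partition of $U$ that realizes $\beta_G(U)$, and to prove the first claim by a careful bookkeeping that isolates the contribution of the boundary $B := \abs{E_G(V_1 \cup V_2, \overline{V_1 \cup V_2})}$ to both the numerator $e_G(V_1,V_2)$ and the denominator $\vol_G(V_1,V_2)$. The key algebraic fact to exploit is the mediant inequality: if $a \leq b$ and $c \geq 0$, then $\tfrac{a}{b} \leq \tfrac{a+c}{b+c}$. So the strategy is to show that passing from the induced subgraph to $G$ adds exactly the same quantity $B$ to both the top and the bottom of the signed bipartiteness ratio, and then check that the ``internal'' numerator does not exceed the ``internal'' denominator.

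The first step is to decompose both sides. Directly from the definition of $e_G$, the only term that changes when we pass from $G$ to $G[V_1 \cup V_2]$ is the boundary term, so $e_G(V_1,V_2) = e_{G[V_1 \cup V_2]}(V_1,V_2) + B$. For the volume side, each edge with both endpoints in $V_1 \cup V_2$ contributes $2$ to both $\vol_G(V_1,V_2)$ and $\vol_{G[V_1 \cup V_2]}(V_1,V_2)$, whereas each boundary edge contributes $1$ to the former and $0$ to the latter; hence $\vol_G(V_1,V_2) = \vol_{G[V_1 \cup V_2]}(V_1,V_2) + B$. The second step is to verify the internal inequality $e_{G[V_1 \cup V_2]}(V_1,V_2) \leq \vol_{G[V_1 \cup V_2]}(V_1,V_2)$. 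Writing the two sides out, this reduces to a term-by-term comparison: $2\abs{E_G^+(V_1,V_2)} \leq 2\abs{E_G(V_1,V_2)}$ because positive cross edges are a subset of all cross edges, and $\abs{E_G^-(V_i)} \leq \abs{E_G(V_i)}$ for $i = 1,2$ for the same reason.

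Combining these two steps with the mediant inequality yields $\beta_{G[V_1 \cup V_2]}(V_1,V_2) \leq \beta_G(V_1,V_2)$; since $(V_1,V_2)$ is in particular a sub-bipartition of the vertex set of $G[V_1 \cup V_2]$, we obtain $\beta(G[V_1 \cup V_2]) \leq \beta_{G[V_1 \cup V_2]}(V_1,V_2) \leq \beta_G(V_1,V_2)$, which is the first claim. The second claim follows immediately: pick $(V_1^\ast,V_2^\ast)$ realizing $\beta_G(U) = \beta_G(V_1^\ast, V_2^\ast)$, and apply the first claim to get $\beta(G[U]) = \beta(G[V_1^\ast \cup V_2^\ast]) \leq \beta_G(V_1^\ast, V_2^\ast) = \beta_G(U)$. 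The main place requiring care is the convention that $\abs{E(V_i)}$ denotes \emph{twice} the number of edges inside $G[V_i]$, which must be tracked consistently so that the internal volume decomposes as $2\abs{E_G(V_1,V_2)} + \abs{E_G(V_1)} + \abs{E_G(V_2)}$; once that convention is respected, the rest of the argument is essentially a one-line mediant-inequality computation.
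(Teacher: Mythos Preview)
Your proof is correct and follows essentially the same route as the paper: both isolate the boundary term $B=\abs{E_G(V_1\cup V_2,\overline{V_1\cup V_2})}$, observe that it is added to both numerator and denominator when passing from $G[V_1\cup V_2]$ to $G$, and then apply the mediant inequality $\tfrac{a}{b}\le\tfrac{a+c}{b+c}$. You are slightly more explicit than the paper in verifying the hypothesis $a\le b$ (i.e.\ that $e_{G[V_1\cup V_2]}(V_1,V_2)\le \vol_{G[V_1\cup V_2]}(V_1,V_2)$), which the paper simply asserts; the second claim is derived identically in both.
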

\begin{proof}
	We have that
	\begin{align*}
		&\beta(G[V_{1}\cup V_{2}]) \\
		&\leq \beta_{G[V_{1}\cup V_{2}]}(V_{1},V_{2}) \\
		&=
			\frac{
				2\abs{E_G^+(V_{1},V_{2})}
				+ \abs{E_G^-(V_{1})}
				+ \abs{E_G^-(V_{2})}
			}{ \vol_{G[V_{1}\cup V_{2}]}(V_{1}, V_{2}) } \\
		&\leq 
			\frac{
				2\abs{E_G^+(V_{1},V_{2})}
				+ \abs{E_G^-(V_{1})}
				+ \abs{E_G^-(V_{2})}
				+ \abs{E_G(V_{1}\cup V_{2}, \overline{V_{1}\cup V_{2}})}
			}{
				\vol_{G[V_{1}\cup V_{2}]}(V_{1}, V_{2})
				+ \abs{E_G(V_{1}\cup V_{2}, \overline{V_{1}\cup V_{2}})}
			} \\
		&= \beta_G(V_{1}, V_{2}),
	\end{align*}
	where we have used that for $a,b,c > 0$ and $a \leq b$, it
	holds that $\frac{a}{b} \leq \frac{a+c}{b+c}$.

Now for any subset $U$, let $V_1,V_2$ be a partition of $U$ such that $\beta_G(U)=\beta_{G}(V_1,V_2)$. Then by the above calculation, we have that 
\[
\beta(G[U])\leq \beta_{G[U]}(V_1,V_2)\leq \beta_G(V_1,V_2) =\beta_G(U).
\qedhere
\]
\end{proof}

\textbf{Is the underlying unsigned graph clusterable?}
Next, we argue that our condition from Def.~\ref{def:clustering} is
not implied by previous definitions that were based on the conductance of
\emph{unsigned} graphs (e.g., in~\cite{czumaj15testing}).
In other words, it is possible that our methods finds the planted clusters in
the signed graph while this would not be possible by only looking at the
underlying unsigned graph.

First, recall that for an unsigned graph $\Gun=(V,E)$ and a set of vertices
$S\subseteq V$, the \emph{conductance} of $S$ is given by
$\phi_{\Gun}(S)=\frac{\abs{E(S,V\setminus S)}}{\vol(S)}$. Furthermore, we let
$\phi(\Gun)$ denote the conductance of $\Gun$ which is given by
$\phi(\Gun) = \min\{\phi_{\Gun}(S) \colon \vol(S) \leq \vol(\Gun)/2 \}$.

Now consider the unsigned graph $\Gun$ that is obtained by removing
all the signs on the edges from $G$. We argue that a
$(k,\betain,\betaout)$-clustering of $G$ does not imply a
$(k,\betain,\betaout)$-clustering of $\Gun$, i.e., $\Gun$ contains a
$k$-partition $C_1,\dots,C_k$ such that the \emph{inner conductance} of $C_i$,
denoted $\phi(\Gun[C_i])$, is at least $\betain$ and the \emph{outer
conductance} of $C_i$, denoted $\phi_G(C_i)$, is at most $\betaout$ for each
$i$. Therefore, \emph{one could \textbf{not} apply the previous clustering
oracle for conductance-based clustering of $\Gun$ and recover the underlying
clusters in our problem}. We give a brief explanation next.

Suppose that $U_1,\dots,U_k$ is a $(k,\betain,\betaout)$-clustering of $G$.
Then, indeed, it is true that each $U_i$ has small outer conductance in $\Gun$,
since we have that $\phi_{\Gun}(U_i)\leq \beta_G(U_i)\leq \betaout$.  However,
the inner conductance of $U_i$ in $\Gun$ can be arbitrarily small: Even though
the inner signed bipartiteness ratio of $U_i$ is large, it can happen that there
is a very small subset $S_i\subseteq U_i$ (say, of size $O(\log n)$) such that
there are almost no edges leaving $S_i$ but all edges in $S_i$ have sign $-$
and $S_i$ is far from being balanced. Thus, there exists a subset $S_i$ in $U_i$
whose (outer) conductance is almost $0$ in $\Gun$ but the inner signed
bipartiteness ratio of $S_i$ and $U_i$ is large.

Note that the previous example with the set $S_i$ also shows that it is possible
that $G$ contains a sparse cut and, therefore, $\ppp_u^t$ does not converge to
the uniform distribution of $V$.

\section{Pseudocode for Our Algorithms}
\label{sec:pseudocode}

\begin{algorithm}[H]
\caption{Estimating the dot product $\langle \ppp_u^t \DD^{-1/2}, \ppp_v^t \DD^{-1/2}\rangle$}
\label{alg:estimate-dot-product}
	\begin{algorithmic}[1]
	\Procedure{\textsc{EstDotProd}}{$u,v,t,\alpha$} 
	\State $R\gets\frac{40000d^2 k^{1.5}\sqrt{n}}{\alpha^{1.5}}$
	\ForAll{$i=1,\cdots,h=O(\log n)$}
		\For{$x\in \{u,v\}$}
			\State Perform $R$ lazy signed random walks of length $t$ starting at vertex~$x$ with sign $+$.
			\For{each $w\in V$}
				\State $\mm^+_x(w)\gets $ the fraction of walks that end at $w$ with sign~$+$.
				\State $\mm^-_x(w)\gets $ the fraction of walks that end at $w$ with sign~$-$.
				\State\label{line:m_x} $\mm_x \gets (\mm_x^+ - \mm_x^-) \DD^{-1/2}$.
			\EndFor
		\EndFor
		\State $\chi_i\gets \langle \mm_u, \mm_v \rangle$.
	\EndFor
	\State Let $X_{uv}$ be the median value of $\chi_1,\cdots,\chi_h$. 
	\State \Return $X_{uv}$
	\EndProcedure
	\end{algorithmic}
\end{algorithm}

\begin{algorithm}[H]
\caption{Preprocessing: Constructing a clustering oracle}
\label{alg:mainoracle}
	\begin{algorithmic}[1]
		\Procedure{\textsc{BuildOracle}} {$G,k,d,\betain,\varepsilon,\gamma$}
			\State $s\gets \frac{20k \log({k})}{\gamma}$,
				$\alpha\gets\frac{\varepsilon}{90s}$,
				$t\gets \frac{C''k^6 d^3 \log n}{\varepsilon\cdot\betain^2}$
				for constant $C''$
			\State Sample a set $S$ of $s$ vertices independently and uniformly at
					random from~$V$.

		\For{$v\in S$} \label{alg:l2norm}
			\Comment{test if $||\ppp_v^t \DD^{-1/2}||_2^2=O(\frac{k^2\log(k)}{\gamma\varepsilon\cdot n})$}
			\State $X_{vv}\gets $\textsc{EstDotProd}($v,v,t,\alpha$) 
			\If{$X_{vv} \geq \frac{4000k^2\log(k)}{\gamma\varepsilon\cdot n}$}
				\State\label{alg:l22normest} abort and \Return \textbf{Fail}
			\EndIf
		\EndFor
		\State let $H$ be an empty graph with vertex set $S$
		\For{each pair $u, v \in S$}  \label{alg:distribution-closeness}
			\Comment{test if $\Delta_{uv}\le \frac{1}{4nd}$}
			\State $X_{vv}\gets $\textsc{EstDotProd}($v,v,t,\alpha$) 
			\State $X_{uu}\gets $\textsc{EstDotProd}($u,u,t,\alpha$) 
			\State $X_{uv}\gets $\textsc{EstDotProd}($u,v,t,\alpha$) 
			\State\label{alg:deltaestimate} $\delta_{uv}\gets \min\{X_{vv}+X_{uu}-2X_{uv},X_{vv}+X_{uu}+2X_{uv}\}$
			\If{$\delta_{uv}\leq \frac{1}{2dn}$}
				\State\label{alg:addedge} add edge $(u,v)$ to $H$
			\EndIf
		\EndFor
		\If{$H$ is the union of $k$ connected components (CCs)}
		\State label the components by ``$1,2,\dots, k$' 
		\State label each vertex $u\in V_H$ with the same index as its component, denoted $\ell(u)$
		\State \Return $H$ and its vertex labeling $\ell$
		\Else
		\State \Return \textbf{Fail}
		\EndIf
		\EndProcedure
	\end{algorithmic}
\end{algorithm}  

\begin{algorithm}[H]
\caption{Answering the community membership of a vertex $v$}
\label{alg:answeringquery}
\begin{algorithmic}[1]
\Procedure{\textsc{WhichCluster}}{$G,v,H,\ell$} 
	\For{$u\in V_H$} 
		\State $X_{vv}\gets $\textsc{EstDotProd}($v,v,t,\alpha$) 
		\State $X_{uu}\gets $\textsc{EstDotProd}($u,u,t,\alpha$) 
		\State $X_{uv}\gets $\textsc{EstDotProd}($u,v,t,\alpha$) 
		\State\label{alg:estimatedeltaagain} $\delta_{uv}\gets \min\{X_{vv}+X_{uu}-2X_{uv},X_{vv}+X_{uu}+2X_{uv}\}$
		\If{$\delta_{uv}\leq \frac{1}{2dn}$}
		\State abort and \Return the label $\ell(u)$
		\EndIf
	\EndFor
	\State \Return a random number from $\{1,2,\dots,k\}$
\EndProcedure
\end{algorithmic}
\end{algorithm}

\section{Analysis Overview}
\label{sec:analysis-overview}
In this section, we give an overview of the analysis and provide the main
technical lemmas of our analysis.  All missing proofs can be found in
App.~\ref{sec:proofs}.

\textbf{Intuition.}
\label{sec:analysis-intuition}
We begin by providing some intuition for our algorithm and our analysis.

We start by establishing some properties of lazy signed random walks.
First, suppose that we perform the lazy random walks \emph{without the signs}
on the underlying unsigned graph $\Gun$ with the corresponding
transition probability matrix
$\WW^{\textrm{un}}:=\frac{\II+\DD^{-1}\AAA^{\textrm{un}}}{2}$, where $\AAA^{\textrm{un}}$ is
the adjacency matrix of $\Gun$. Then the probability that a random
walk started at vertex~$u$ ends in vertex~$v$ is $\Prob{v_t=v}=\qq_u^t(v)$, where
$\qq_u^t = \1_u (\WW^{\textrm{un}})^t$.  Second, when we \emph{add the sign}, then we
are interested in the quantity
\begin{align*}
	\ppp_{u}^t(v) = \Prob{v_t=v, s_t=+} - \Prob{v_t=v, s_t=-},
\end{align*}
i.e., $\ppp_u^t(v)$ is the probability of reaching $v$ with a positive sign
\emph{minus} the probability of reaching $v$ with a negative sign.  Observe that
$\ppp_{u}^t(v)$  can be described by the walk matrix
$\WW=\frac{\II+\DD^{-1}\AAA^\sigma}{2}$, i.e., $\ppp_{u}^t(v)=[\1_{u}\WW^t](v)$ for all $v$.
Note that while $\qq_u^t$ (for unsigned random walks)
gives a distribution, this is not the case for $\ppp_u^t$. In fact, $\ppp_u^t$ can
potentially even contain negative entries and $\ppp_u^t$ does not necessarily
converge to the uniform distribution of $V$ as it can happen that $V$ contains a
sparse cut (we discuss this in App.~\ref{sec:motivation-inner}).
In the following,
we call such a vector $\ppp_{u}^t$ the \emph{discrepancy vector} of a lazy
signed random walk of length $t$ starting from $u$.  

Next, consider a signed graph $G$ and a $(k,\betain,\betaout)$-clustering
$U_1,\dots,U_k$ of $G$ as per Def.~\ref{def:clustering}. Let
$U\in\{U_1,\dots,U_k\}$ be one of the clusters.
Since by assumption we have that $\beta_G(U)\leq \betaout$, there exists a
partition of $U$ into subsets $V_1$ and $V_2$ with $\beta_G(V_1,V_2)\leq \betaout$. 
Then, intuitively, for a typical vertex $u\in U$, a \emph{short} random walk
starting from $u$ of has the following properties:
\begin{enumerate}[label=(\roman*)]
	\item\label{item:trapped}
		Since the walk is short (this is crucial), the walk will be ``trapped''
		in $U$, as there are only few edges leaving $U$. Thus, for the
		discrepancy vector it should hold that
		$\sum_{v\in U} \abs{\ppp_u^t(v)} \gg \sum_{v\in V\setminus U} \abs{\ppp_u^t(v)}$.
	\item\label{item:discrepancy-signs}
	   	If $u\in V_1$ then most walks ending at vertices $v\in V_1$
		should have a positive sign and most walks ending at vertices $V\in V_2$
		should have negative sign. Thus, for the discrepancy vector $\ppp_u^t$ it
		should hold that $\ppp_u^t(v) > 0$ if $v\in V_1$ and 
		$\ppp_u^t(v) < 0$ if $v\in V_2$. Similarly, if $u\in V_2$ then the same
		holds with flipped signs.
	\item\label{item:largeinnerbeta} Let $u\in U$. If two vertices $x,y$ are
		from the same sub-communities (i.e., $x,y\in V_1$ or $x,y\in V_2$), then
		$\ppp_u^t(x)\approx \ppp_u^t(y)$, i.e., the discrepancy on $x$ is close
		to the discrepancy on $y$.
\end{enumerate}

Now let us discuss how we can use the discrepancy vectors $\ppp_u^t$ and $\ppp_v^t$ to
decide whether $u$ and $v$ are in the same cluster or not. Our goal is to find a
distance function $\Delta_{uv} = \Delta_{uv}(\ppp_u^t,\ppp_v^t)$ such that
$\Delta_{uv}$ is small iff $u$ and $v$ are from the same cluster $U$.

First, consider vertices $u\in U_i$ and $v\in U_j$, $i\neq j$, from
different clusters.  Then Property~\ref{item:trapped} suggests that $\ppp_u^t$ and
$\ppp_v^t$ have most of their mass in different entries and thus
$\norm{\ppp_u^t - \ppp_v^t}_2^2$ should be large. 

Second, consider vertices $u,v\in U$ from the same cluster. Let $U=V_1 \cup V_2$
as above. If $u,v\in V_1$ or $u,v\in V_2$ then the properties %
above suggest 
$\ppp_u^t \approx \ppp_v^t$ and thus $\norm{\ppp_u^t - \ppp_v^t}_2^2 \approx 0$ is small.
However, if $u\in V_1$ and $v\in V_2$, then $\ppp_u^t \approx -\ppp_v^t$
by Property~\ref{item:discrepancy-signs} and \ref{item:largeinnerbeta} and thus $\norm{\ppp_u^t - \ppp_v^t}_2^2
\approx \norm{2 \ppp_u^t}_2^2$ will still be large.  However, this issue can be
mitigated if we use $\mathbf{r}_u^t = \abs{\ppp_u^t}$ and $\mathbf{r}_v^t = \abs{\ppp_v^t}$ instead of
$\ppp_u^t$ and $\ppp_v^t$. Here, the absolute values are applied component-wise, i.e.,
$\mathbf{r}_u^t(v) = \abs{\ppp_u^t(v)}$ for all $v$.

Therefore, in our analysis we would like to use $\norm{\mathbf{r}_u^t - \mathbf{r}_v^t}^2_2$ as a
distance measure. However, due to some technical difficulties in the analysis of
this quantity, we cannot use the vectors $\mathbf{r}_u^t$ and instead we consider
$\min\{\norm{\ppp_u^t - \ppp_v^t}^2_2,\norm{\ppp_u^t+ \ppp_v^t}^2_2\}$ which behaves
similar to taking the absolute values and also fixes the issue with
Property~\ref{item:discrepancy-signs}. After adding some degree corrections, we
arrive at our final distance measure
$\Delta_{uv}=\min\{\norm{\ppp_u^t \DD^{-1/2}- \ppp_v^t \DD^{-1/2}}^2_2,\norm{\ppp_u^t \DD^{-1/2}+ \ppp_v^t \DD^{-1/2}}^2_2\}$.
Note that $\Delta$ is a pseudometric distance, i.e., one may have that
$\Delta_{uv}=0$ for distinct vectors $\ppp_u^t$ and $\ppp_v^t$.

\subsection{Main Technical Lemmas}
\label{sec:main-technical-lemmas}
We give a technical overview of our analysis. Note that while our high-level
proof strategy is relatively similar to the one used in~\cite{czumaj15testing}
for unsigned graphs, the concrete proofs are often quite different. It required
a substantial amount of work and new ideas to obtain our results for signed
graphs.

\textbf{Random walks from the same cluster.} First, we show that if $U$ is a
polarized community with large inner signed
bipartiteness ratio and small (outer) bipartiteness ratio, then for most of the
vertices $u,v\in U$ their distance $\Delta_{uv}$ is small.
\begin{restatable}{lemma}{vectorclose}
\label{lem:close}
	Let $\alpha,\gamma\in(0,1)$.
	Let $G=(V,E,\sigma)$ be a signed, $d$-bounded degree,
	$(k,\betain,\betaout)$-clusterable graph.
	Let $U$ be a subset of $V$ such that
	$\abs{U}\geq \gamma n$, $\betainner(G[U])\geq\betain$
	and $\beta_G(U)\leq \betaout$.
	Then for all $t\geq \frac{4 C_1^2 (k+1)^6 \lg n}{\betain^2}$,
	$\betaout<\alpha_1\betain^2$ 
	where $\alpha_1=\frac{\alpha \gamma}{800000 C_1^2 k d^3}$, there exists a
	subset $\widetilde{U}\subseteq U$ with
	$\abs{\widetilde{U}} \geq (1-\alpha)\abs{U}$ and for all
	$u,v \in \widetilde{U}$, it holds that
		$\Delta_{uv} \leq \frac{1}{4nd}$.
\end{restatable}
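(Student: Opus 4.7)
The plan is to diagonalise the signed random walk in the eigenbasis of the normalised Laplacian $\LL^\sigma$ and combine two structural facts already at hand: the spectral gap between $\lambda_k$ and $\lambda_{k+1}$ from Lem.~\ref{lem:cheeger-eigenvalues}, and the ``approximate $\pm$-copy'' shape of the first $k$ eigenvectors on a polarised cluster from Lem.~\ref{lem:small-diff}. A short computation using $\WW = \DD^{-1/2}(\II-\tfrac{1}{2}\LL^\sigma)\DD^{1/2}$, together with the expansion of $\1_u\DD^{-1/2}$ in the orthonormal row eigenbasis $\vv_1,\dots,\vv_n$ of $\LL^\sigma$, yields
\[
\ppp_u^t\DD^{-1/2} \;=\; \sum_{i=1}^n \trvv_i(u)\bigl(1-\tfrac{\lambda_i}{2}\bigr)^{t}\vv_i,
\]
so by orthonormality
\[
\norm{\ppp_u^t\DD^{-1/2}\pm \ppp_v^t\DD^{-1/2}}_2^2 \;=\; \sum_{i=1}^n \bigl(\trvv_i(u)\pm\trvv_i(v)\bigr)^2\bigl(1-\tfrac{\lambda_i}{2}\bigr)^{2t}.
\]
The proof reduces to bounding this sum, for the better choice of sign, by $\tfrac{1}{4dn}$.

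The next step is to split the sum at index $k$. For the tail $i>k$, Lem.~\ref{lem:cheeger-eigenvalues} gives $\lambda_i\geq \betain^2/(C_1^2(k+1)^6)$, so the hypothesis $t\geq 4C_1^2(k+1)^6\log n/\betain^2$ forces $(1-\lambda_i/2)^{2t}\leq n^{-4}$; combined with the trivial bound $\sum_i \trvv_i(u)^2 = 1/d_u \leq 1$ the tail contributes at most $4n^{-4}$, negligible compared to $1/(4dn)$. For the head $i\leq k$ one has $(1-\lambda_i/2)^{2t}\leq 1$, so it suffices to bound $\sum_{i\leq k}(\trvv_i(u)\pm\trvv_i(v))^2$. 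Applying Lem.~\ref{lem:small-diff} with the given $\alpha$ supplies a subset $\widetilde U\subseteq U$ with $|\widetilde U|\geq (1-\alpha)|U|$, a bipartition $(V_1,V_2)$ of $U$, and constants $|c_i|\leq 3d$ such that $\trvv_i(u)$ is $\frac{64dC_1}{\betain}\sqrt{\betaout/(\alpha\mu_U)}$-close to $+c_i/\sqrt{\mu_U}$ on $V_1\cap\widetilde U$ and to $-c_i/\sqrt{\mu_U}$ on $V_2\cap\widetilde U$. For $u,v\in\widetilde U$, picking the sign in $\Delta_{uv}$ to be $-$ when $u,v$ lie on the same side of $(V_1,V_2)$ and $+$ otherwise cancels the $c_i$'s exactly, and the triangle inequality yields a head contribution of at most $\frac{128^2 kd^2 C_1^2\,\betaout}{\betain^2\,\alpha\,\mu_U}$.

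Finally, because the maximum degree is $d$ and at most $\betaout\vol(U)$ edges leave $U$, one obtains $\mu_U\geq (1-\betaout)\vol(U)\geq \gamma n/2$; substituting this together with the hypothesis $\betaout\leq \alpha\gamma\betain^2/(800000\,C_1^2 kd^3)$ bounds the head by $\tfrac{1}{8dn}$, and combining with the tail gives $\Delta_{uv}\leq\tfrac{1}{4dn}$ as required. The main obstacle I anticipate is threading the constants: the $d^3$ factor in the hypothesis on $\betaout$ is exactly what is needed to match the $kd^2 C_1^2/\betain^2$ blow-up arising from squaring and summing the Lem.~\ref{lem:small-diff} error, together with the $\mu_U=\Omega(\gamma n)$ lower bound, and verifying this numerically is the one delicate piece. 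A secondary subtlety, specific to the signed setting, is the sign alignment: the $\min$ in the definition of $\Delta_{uv}$ is precisely what lets one handle ``$u,v$ on the same side of $(V_1,V_2)$'' and ``on opposite sides'' uniformly, and this is the key feature that distinguishes the signed analysis from its unsigned counterpart.
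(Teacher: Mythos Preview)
Your proposal is correct and follows essentially the same route as the paper: the spectral decomposition of $\ppp_u^t\DD^{-1/2}$, the head/tail split at index $k$, Lem.~\ref{lem:cheeger-eigenvalues} for the tail, Lem.~\ref{lem:small-diff} together with the sign choice in $\Delta_{uv}$ for the head, and the final constant chase are all exactly as in the paper's argument. The only cosmetic differences are that the paper bounds $\mu_U\geq |U|\geq \gamma n$ directly (rather than via $(1-\betaout)\vol(U)$) and writes the tail as $4\exp(-\betain^2 t/(C_1^2(k+1)^6))$ before invoking the lower bound on $t$.
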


Proving Lem.~\ref{lem:close} was one of the main obstacles for obtaining our
results. We will give an overview of its proof and our main technical
contribution in Sec.~\ref{sec:main-technical-contribution}.

\textbf{Random walks from two different clusters.} %
We further show that if we have two disjoint
communities $U_1$ and $U_2$, then for most vertices $u\in U_1$ and $v\in U_2$
their distance $\Delta_{uv}$ must be large.
Note that while Lem.~\ref{lem:close} assumes a lower bound on the walk
length~$t$, Lem.~\ref{lem:dissimilar-distributions} assumes an upper bound
on~$t$. Thus, it is crucial that the random walks have the correct
length~$\Theta(\lg n)$. %
\begin{restatable}{lemma}{vectorfaraway}
\label{lem:dissimilar-distributions}
	Let $U_1$ and $U_2$ be two disjoint subsets with
	$\beta_G(U_1),\beta_G(U_2)\leq \betaout$. Let $0<\alpha<1$. For any
	$0\leq t\leq \frac{\alpha}{8\betaout}$, there exist subsets
	$\widehat{U_1}\subseteq U_1,\widehat{U_2}\subseteq U_2$ such that
	$|\widehat{U_1}|\geq (1-\alpha)|U_1|$,
	$|\widehat{U_2}|\geq (1-\alpha)|U_2|$, and for any $u\in \widehat{U_1}$ and
	$v\in \widehat{U_2}$, it holds that
		$\Delta_{uv} \geq \frac{1}{nd}$.
\end{restatable}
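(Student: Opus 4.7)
The plan is to show that when $t\le\alpha/(8\betaout)$, a lazy signed random walk from a typical $u\in U_1$ stays inside $U_1$ with probability close to $1$, and the rescaled discrepancy vector $\ppp_u^t\DD^{-1/2}$ carries non-negligible $L^2$ mass on $U_1$; symmetrically for $v\in U_2$. Since $U_1\cap U_2=\emptyset$, the supports of $\ppp_u^t$ and $\ppp_v^t$ are then essentially disjoint, so both $\norm{\ppp_u^t\DD^{-1/2}\pm\ppp_v^t\DD^{-1/2}}_2^2$ pick up the two self-norms with only a tiny cross term, which yields $\Delta_{uv}\ge 1/(nd)$.

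The first ingredient is an escape-probability bound. The assumption $\beta_G(U_j)\le\betaout$ gives at most $\betaout\vol(U_j)$ edges leaving $U_j$, so a union bound over the $t$ lazy steps shows that the average (under the degree-weighted distribution on $U_j$) of the probability of leaving $U_j$ within $t$ steps is at most $t\betaout/2\le\alpha/16$. Markov's inequality combined with the $d$-bounded-degree assumption then produces subsets $\widehat{U_j}\subseteq U_j$ with $\abs{\widehat{U_j}}\ge(1-\alpha)\abs{U_j}$ such that every $u\in\widehat{U_j}$ has escape probability at most an absolute constant $\epsilon_0$; the slack in the hypothesis $t\le\alpha/(8\betaout)$ lets me take $\epsilon_0$ as small as needed later.

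The key and most delicate step is lower bounding $\norm{\ppp_u^t\DD^{-1/2}}_2^2$ for $u\in\widehat{U_1}$ by a quantity of order $1/\vol(U_1)\ge 1/(nd)$. By reversibility of the lazy signed walk, $\norm{\ppp_u^t\DD^{-1/2}}_2^2=(M^{2t})_{u,u}/d_u$ with $M=\II-\LL^\sigma/2$, and I would exploit this identity via a spectral test vector built from the bicluster structure of $U_1$: taking $(V_1',V_2')$ to be the sub-bipartition of $U_1$ realizing $\beta_G(U_1)$, define $\x_1\in\mathbb{R}^V$ by $\x_1(w)=\pm\sqrt{d_w/\vol(U_1)}$ on $V_1'\cup V_2'$ (signs given by the bipartition) and $0$ off $U_1$; a direct Laplacian quadratic-form computation gives $\norm{\x_1}_2=1$ and $\x_1^\top\LL^\sigma\x_1\le 2\betaout$. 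Expanding $\x_1$ in the eigenbasis of $\LL^\sigma$, Markov in the spectral domain shows most of its spectral mass sits on eigenvalues $\le 4\betaout/\alpha$, and on those the bound on $t$ yields $(1-\lambda_i/2)^{2t}\ge 1/2$; combining, $\norm{M^t\x_1}_2^2\ge 1/4$. Writing $\norm{M^t\x_1}_2^2=\sum_u d_u(\WW^t\y_1)(u)^2$ with $\y_1=\DD^{-1/2}\x_1$, and using reversibility together with Step~1 to trim the $u\notin U_1$ contribution to $O(\epsilon_0)$, one gets $\sum_{u\in U_1}d_u(\WW^t\y_1)(u)^2=\Omega(1)$. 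Cauchy--Schwarz then gives $\norm{\ppp_u^t\DD^{-1/2}}_2^2\ge(\WW^t\y_1)(u)^2$ at every $u$, and combining with the pointwise upper bound $\norm{\ppp_u^t\DD^{-1/2}}_2^2\le 1/d_u$ together with an analogous argument using the companion test vector $\x_2$ for $U_2$ (which is orthogonal to $\x_1$ by disjoint support), I can further restrict $\widehat{U_1}$ while keeping $(1-\alpha)\abs{U_1}$ vertices on which $\norm{\ppp_u^t\DD^{-1/2}}_2^2\ge c/(nd)$; symmetrically for $\widehat{U_2}$.

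To assemble $\Delta_{uv}$ I would use the identity $\langle\ppp_u^t\DD^{-1/2},\ppp_v^t\DD^{-1/2}\rangle=\ppp_u^{2t}(v)/d_v$ (again by reversibility), whose absolute value is at most $\qq_u^{2t}(v)/d_v\le 2\epsilon_0$ because reaching $v\in U_2$ from $u\in U_1$ within $2t$ steps requires leaving $U_1$. Expanding the two squared norms therefore gives $\norm{\ppp_u^t\DD^{-1/2}\pm\ppp_v^t\DD^{-1/2}}_2^2\ge\norm{\ppp_u^t\DD^{-1/2}}_2^2+\norm{\ppp_v^t\DD^{-1/2}}_2^2-4\epsilon_0\ge 1/(nd)$ for both signs once $\epsilon_0$ is chosen small enough. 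The hard part is the $L^2$-norm lower bound: in the unsigned setting it is immediate from $\sum_{w\in U_1}\qq_u^t(w)\ge 1-\epsilon_0$ via Cauchy--Schwarz, but in the signed case the mass $\sum_{w\in U_1}\abs{\ppp_u^t(w)}$ can be essentially wiped out by sign cancellations, which is exactly why the spectral test-vector argument tailored to the bicluster structure of each $U_j$ is needed.
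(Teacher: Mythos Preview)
Your proposal has two genuine gaps that keep the argument from closing.

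\textbf{The cross term is far too large.} Expanding $\norm{\ppp_u^t\DD^{-1/2}\pm\ppp_v^t\DD^{-1/2}}_2^2$ into self-norms $\pm$ cross term only yields $\Delta_{uv}\ge 1/(nd)$ if the cross term is $o(1/(nd))$, since the self-norms you obtain are themselves only $\Theta(1/(nd))$. But your bound $\abs{\langle\ppp_u^t\DD^{-1/2},\ppp_v^t\DD^{-1/2}\rangle}\le\qq_u^{2t}(v)/d_v\le 2\epsilon_0$ is an absolute constant: when $t$ is near the allowed maximum $\alpha/(8\betaout)$, the degree-weighted average escape probability from $U_1$ is $\Theta(\alpha)$, and Markov over the start vertex can push $\epsilon_0$ down only to something like $1/16$, not to $O(1/n)$. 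Thus $c/(nd)+c/(nd)-4\epsilon_0<0$ and the inequality is vacuous. There is no ``slack'' in the hypothesis that fixes this.

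\textbf{The per-vertex $\ell_2$ lower bound does not follow.} From $\sum_{u\in U_1}d_u(\WW^t\y_1)(u)^2=\Omega(1)$ together with the cap $d_u(\WW^t\y_1)(u)^2\le 1$ you cannot conclude that $(\WW^t\y_1)(u)^2\ge c/(nd)$ on a $(1-\alpha)$-fraction of $U_1$: Markov runs the wrong way for lower bounds, and all of the $\ell_2$ mass could be carried by $O(1)$ vertices. Orthogonality of the companion vector $\x_2$ does not help, since for $u\in U_1$ the projection onto $\x_2$ is itself negligible.

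The paper sidesteps both issues with a \emph{first-moment} argument using the very same test vector. In your notation, $\langle\x_1,M^t\x_1\rangle=\y_{V_1,V_2}\WW^t\1_{V_1,V_2}^\top$, and a one-step Laplacian computation shows this drops by at most $2\betaout$ per step, so $\langle\x_1,M^t\x_1\rangle\ge 1-2t\betaout$. This quantity is a $d_v/\vol(U_1)$-weighted average of terms each in $[-1,1]$ and each bounded above by $\sum_{w\in U_1}\abs{\ppp_v^t(w)}$; now Markov \emph{does} go the right way and gives, on a $(1-\alpha)$-fraction of $U_1$, the $\ell_1$ bound $\sum_{w\in U_1}\abs{\ppp_v^t(w)}\ge 1-O(dt\betaout/\alpha)$. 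In other words, your worry that ``$\sum_{w\in U_1}\abs{\ppp_u^t(w)}$ can be wiped out by sign cancellations'' is precisely what the bicluster structure rules out. The finish then avoids cross terms altogether: with $\qq_x^t:=\abs{\ppp_x^t}$, the coordinatewise identity $\min\{(a-b)^2,(a+b)^2\}=(\abs{a}-\abs{b})^2$ gives $\Delta_{uv}\ge\norm{(\qq_u^t-\qq_v^t)\DD^{-1/2}}_2^2$, and Cauchy--Schwarz against $\DD^{1/2}\1$ turns the $\ell_1$ separation on the disjoint sets $U_1,U_2$ into the claimed $1/(nd)$ lower bound.
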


\textbf{$\ell_2^2$-norm of the vector $\ppp_v^t \DD^{-1/2}$.} Based on the previous two lemmas, our goal will be to test whether $\Delta_{uv}
\leq \frac{1}{4nd}$ or $\Delta_{uv}\geq \frac{1}{nd}$. To this end, we wish to
use \textsc{EstDotProd} to approximate $\Delta_{uv}$ via
Eqn.s~\eqref{eq:Delta_uv_dot_product} and~\eqref{eq:delta_uv}.
To prove this, we first give a useful bound on the $\ell_2^2$-norm of the
vectors $\ppp_v^t \DD^{-1/2}$.
\begin{restatable}{lemma}{vectorsmallnorm}
\label{lem:small-norm}
	Let $\alpha \in (0,1)$. Suppose $G=(V,E,\sigma)$ is a signed
	and $(k,\betain,\betaout)$-clusterable graph. Then there exists a set
	$V'\subseteq V$ of size $\abs{V'}\geq(1-\alpha)\abs{V}$ such that for all
	$u\in V'$ and all $t\geq \frac{C_1^2 (k+1)^6 \lg n}{\betain^2}$, we have that
		$\norm{\ppp_v^t \DD^{-1/2}}^2_2 \leq \frac{2k}{\alpha n}$.
\end{restatable}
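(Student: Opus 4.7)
\textbf{Proof proposal for Lemma~\ref{lem:small-norm}.}

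The plan is to decompose $\ppp_v^t \DD^{-1/2}$ in the eigenbasis of $\LL^\sigma$ and then split the squared norm into a low-frequency part (eigenvalues $\lambda_1,\dots,\lambda_k$) and a high-frequency tail. First I would use the identity $\DD^{1/2}\WW\DD^{-1/2}=\II-\tfrac{1}{2}\LL^\sigma$, which yields
\[
	\ppp_v^t\DD^{-1/2}
	= \1_v\WW^t\DD^{-1/2}
	= \1_v\DD^{-1/2}\bigl(\II-\tfrac{1}{2}\LL^\sigma\bigr)^t.
\]
Expanding $\1_v\DD^{-1/2}=\sum_{i=1}^n \trvv_i(v)\,\vv_i$ in the orthonormal eigenbasis $\{\vv_i\}$ of $\LL^\sigma$ and using orthonormality gives the clean identity
\[
	\norm{\ppp_v^t\DD^{-1/2}}_2^2
	= \sum_{i=1}^n \trvv_i(v)^2\bigl(1-\tfrac{\lambda_i}{2}\bigr)^{2t}.
\]

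Next I would handle the tail $i\geq k+1$. By Lem.~\ref{lem:cheeger-eigenvalues}, $\lambda_i\geq \betain^2/(C_1^2(k+1)^6)$ for $i\geq k+1$, and using $(1-x)^{2t}\leq e^{-2tx}\leq e^{-\lambda_i t}$ together with the assumed lower bound $t\geq C_1^2(k+1)^6\log n/\betain^2$ we get $(1-\lambda_i/2)^{2t}\leq 1/n$. Bounding the remaining factor by Parseval,
\[
	\sum_{i=k+1}^n \trvv_i(v)^2
	\;\leq\; \sum_{i=1}^n \trvv_i(v)^2
	\;=\; \norm{\1_v\DD^{-1/2}}_2^2
	\;=\; \tfrac{1}{d(v)}
	\;\leq\; 1,
\]
so the tail contributes at most $1/n$.

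For the low-frequency part $i\leq k$, I would use $(1-\lambda_i/2)^{2t}\leq 1$ and control $S(v):=\sum_{i=1}^k \trvv_i(v)^2$ in aggregate. Since each $\vv_i$ is a unit vector, $\sum_v d(v)\trvv_i(v)^2=\norm{\vv_i}_2^2=1$, hence
\[
	\sum_{v\in V} S(v)
	\;\leq\; \sum_{v\in V} d(v)\, S(v)
	\;=\; \sum_{i=1}^k \norm{\vv_i}_2^2
	\;=\; k,
\]
where we used $d(v)\geq 1$. A Markov-style averaging argument then shows that the set $V':=\{v: S(v)\leq k/(\alpha n)\}$ has size at least $(1-\alpha)n$. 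Combining with the tail bound, for any $v\in V'$,
\[
	\norm{\ppp_v^t\DD^{-1/2}}_2^2
	\;\leq\; \tfrac{k}{\alpha n}+\tfrac{1}{n}
	\;\leq\; \tfrac{2k}{\alpha n},
\]
since $\alpha\leq 1\leq k$. The only slightly delicate step is the spectral rewriting in the first paragraph (one must be careful that $\WW$ is not symmetric but is conjugate via $\DD^{1/2}$ to a symmetric matrix with the same spectrum as $\II-\LL^\sigma/2$); the remaining pieces are the direct application of Lem.~\ref{lem:cheeger-eigenvalues} and a one-line Markov bound, so no further obstacle is expected.
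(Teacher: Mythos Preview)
Your proposal is correct and follows essentially the same argument as the paper: both expand $\ppp_v^t\DD^{-1/2}$ in the eigenbasis of $\LL^\sigma$ to obtain $\sum_i \trvv_i(v)^2(1-\lambda_i/2)^{2t}$, bound the tail $i\geq k+1$ via Lemma~\ref{lem:cheeger-eigenvalues} and the choice of $t$, and control the head $\sum_{i\leq k}\trvv_i(v)^2$ on a $(1-\alpha)$-fraction of vertices by the averaging/Markov argument using $\sum_v\sum_{i\leq k}\trvv_i(v)^2\leq k$. The only cosmetic difference is the order of presentation.
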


\textbf{Estimating the dot product.} Now we prove that
	\textsc{EstDotProd}($u$,$v$,$t$,$\alpha$) estimates $\langle
\ppp_u^t \DD^{-1/2}, \ppp_v^t \DD^{-1/2}\rangle$ with small error.
\begin{restatable}{lemma}{dotproduct}
\label{lem:estimate-dot-product}
	Let $\alpha\in (0,1)$ be a number such that $\frac{2k}{\alpha}\leq n$.
	Suppose $G=(V,E,\sigma)$ is a signed and
	$(k,\betain,\betaout)$-clusterable graph.
	Let $t \geq \frac{C_1^2 (k+1)^6 \lg n}{\betain^2}$.
	Let $V'\subseteq V$ be the set of vertices satisfying the property given by
	Lem. \ref{lem:small-norm}.
	Then %
	\textsc{EstDotProd}($u,v,t,\alpha$) %
	outputs $X_{uv}$ such that with probability $1-1/n^3$,
	it holds that
	\[
		\abs{ X_{uv}  - \langle \ppp_v^t \DD^{-1/2}, \ppp_u^t \DD^{-1/2}\rangle} \leq \frac{1}{20nd}
	\]
	for all $u,v\in V'$.
	Furthermore, \textsc{EstDotProd}($u,v,t,\alpha$) %
	runs in time
	$O(\frac{d^2k^{1.5}t\log n}{\alpha^{1.5}}\cdot\sqrt{n})$.
\end{restatable}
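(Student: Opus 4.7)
The plan is to verify that each individual trial $\chi_i$ computed by Alg.~\ref{alg:estimate-dot-product} is a nearly unbiased estimator of $Y:=\langle\ppp_u^t\DD^{-1/2},\ppp_v^t\DD^{-1/2}\rangle$ with sufficiently small variance, then to amplify confidence via the median-of-means principle, and conclude with a union bound over the at most $n^2$ pairs $u,v\in V'$. Unbiasedness is immediate: for one lazy signed random walk of length $t$ starting at $x$, unrolling $\WW=(\II+\DD^{-1}\AAA^\sigma)/2$ shows that the signed indicator $\mathbf{1}[v_t=w,s_t=+]-\mathbf{1}[v_t=w,s_t=-]$ has expectation $[\1_x\WW^t](w)=\ppp_x^t(w)$, so averaging over the $R$ independent walks gives $\Exp{\mm_x}=\ppp_x^t\DD^{-1/2}$, and independence of the walks from $u$ and from $v$ yields $\Exp{\chi_i}=\langle\Exp{\mm_u},\Exp{\mm_v}\rangle=Y$.

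For the variance, I set $\delta_x:=\mm_x-\ppp_x^t\DD^{-1/2}$ and use the decomposition $\chi_i-Y=\langle\delta_u,\ppp_v^t\DD^{-1/2}\rangle+\langle\ppp_u^t\DD^{-1/2},\delta_v\rangle+\langle\delta_u,\delta_v\rangle$, whose three pieces are mutually uncorrelated mean-zero random variables (since $\delta_u\perp\delta_v$). Each linear piece is an average of $R$ i.i.d.\ mean-zero summands whose second moment equals $\sum_w\qq_u^t(w)\ppp_v^t(w)^2/d_w^2$; I will bound this by combining (i) the norm bound $\|\ppp_v^t\DD^{-1/2}\|_2^2\le 2k/(\alpha n)$ from Lem.~\ref{lem:small-norm} (which applies to both $u,v\in V'$), (ii) the reversibility identity $\qq_u^t(w)/d_w=\qq_w^t(u)/d_u$ for the lazy unsigned walk, and (iii) the bounded-degree hypothesis. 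The bilinear remainder $\langle\delta_u,\delta_v\rangle$ has variance $O(1/R^2)$ via a trace/operator-norm bound on $\mathrm{Cov}(\mm_u)$ and $\mathrm{Cov}(\mm_v)$. After plugging in $R=\Theta(d^2k^{1.5}\sqrt n/\alpha^{1.5})$ from the pseudocode, the combined bound is at most $\varepsilon_0^2/3$ for $\varepsilon_0:=1/(20nd)$, so Chebyshev gives $\Prob{|\chi_i-Y|>\varepsilon_0}\le 1/3$. A Chernoff bound on the median of $h=\Theta(\log n)$ independent trials reduces the failure probability to $n^{-5}$, and a union bound over the at most $n^2$ pairs in $V'$ yields the claimed $1-n^{-3}$ guarantee.

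For the running time, each lazy signed walk of length $t$ takes $O(t)$ time, and we perform $2R$ walks per trial over $h=O(\log n)$ trials. Since $\mm_u$ and $\mm_v$ each have at most $R$ nonzero coordinates (one per ending vertex), $\chi_i=\langle\mm_u,\mm_v\rangle$ is computable in $O(R)$ time via a sparse representation, for a total cost of $O(Rth)=O(d^2k^{1.5}t\log n\cdot\sqrt n/\alpha^{1.5})$. The main obstacle will be making the variance bound on the linear pieces tight enough: naive estimates such as $\ppp_v^t(w)^2\le 1$ only yield $\Var{\chi_i}\lesssim k/(\alpha Rn)$, which at the scale $\varepsilon_0\sim 1/(nd)$ is too large when $R=\tO(\sqrt n)$. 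The tighter bound must exploit Lem.~\ref{lem:small-norm} in full strength together with the fact that for typical vertices of $V'$ the signed discrepancy $|\ppp_v^t(w)|$ is spread over a support of size $\Omega(n/k)$ rather than concentrated, so that the effective mass under the measure $\qq_u^t(w)/d_w^2$ remains small.
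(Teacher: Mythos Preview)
Your high-level plan (unbiasedness of $\chi_i$, a variance bound, Chebyshev for a single trial, median amplification over $h=O(\log n)$ trials, and the running-time count) matches the paper's proof exactly, as does your decomposition of $\chi_i-Y$ into two linear pieces and a bilinear remainder. The gap you explicitly flag at the end is real and is the crux of the argument: the bound $\sum_w\qq_u^t(w)\,\ppp_v^t(w)^2/d_w^2\le\|\ppp_v^t\DD^{-1/2}\|_2^2\le 2k/(\alpha n)$ on the linear-piece variance is too weak by a factor of roughly $\sqrt n$, and neither the reversibility identity $\qq_u^t(w)/d_w=\qq_w^t(u)/d_u$ nor an informal ``spread of $|\ppp_v^t|$'' argument produces the missing factor. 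So as written, your proposal does not close.

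The paper's way past this obstacle is different from what you sketch. It expands $\Exp{\chi_i^2}$ directly, splits into cases according to whether $w=w'$, $r_u=r_u'$, $r_v=r_v'$, and then controls the dominant $w=w'$ cross terms by Cauchy--Schwarz combined with the norm monotonicity $\|\cdot\|_4\le\|\cdot\|_2$, obtaining for instance
\[
\sum_{w}\frac{|\ppp_u^t(w)|\,\ppp_v^t(w)^2}{d_w^2}
\;\le\;\|\ppp_u^t\DD^{-1/2}\|_2\cdot\|\ppp_v^t\DD^{-1/2}\|_4^2
\;\le\;\|\ppp_u^t\DD^{-1/2}\|_2\cdot\|\ppp_v^t\DD^{-1/2}\|_2^2
\;\le\;\Bigl(\tfrac{2k}{\alpha n}\Bigr)^{3/2},
\]
and it is this extra $\sqrt{2k/(\alpha n)}$ factor that makes $R=\Theta(d^2k^{1.5}\sqrt n/\alpha^{1.5})$ sufficient. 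One caveat if you adopt this route: in your (correct) computation the linear-piece second moment carries $\qq_u^t(w)$, whereas the paper's case analysis writes $|\ppp_u^t(w)|$ in the corresponding slot; since $|\ppp_u^t(w)|\le\qq_u^t(w)$, transplanting the Cauchy--Schwarz step verbatim would require $\|\qq_u^t\DD^{-1/2}\|_2$ rather than $\|\ppp_u^t\DD^{-1/2}\|_2$, and Lemma~\ref{lem:small-norm} as stated bounds only the latter. You should be aware of this wrinkle when filling in the details.
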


To prove Thm.~\ref{thm:oracle}, we first show that an
overwhelming fraction of the vertices are ``well-behaved'' in the senses of
Lem.s~\ref{lem:close}--\ref{lem:small-norm}. Then, if we only consider these
``well-behaved'' vertices, we can apply Lem.~\ref{lem:estimate-dot-product} and
this will classify all of these vertices correctly with high probability.

\section{Deferred Proofs}\label{appsec:analysis}
\label{sec:proofs}

\subsection{Proof of Lemma~\ref{lem:cheeger-eigenvalues}}\label{app:proofofeigenvalues}
Now we prove Lem. \ref{lem:cheeger-eigenvalues}, which is restated in the
following for the sake of readability.
\eigenvalues*

\begin{proof}[Proof of Lemma~\ref{lem:cheeger-eigenvalues}]
	Since $G$ is $(k,\betain,\betaout)$-clusterable, there exists
	a partition of $V$ into clusters
	$(C_1,\dots,C_k)$
	such that $\betainner(G[C_i]) \geq \betain$ and
	$\beta_G(C_i)\leq\betaout$ for all $i\in[k]$.
	Therefore, $\beta_k(G) \leq \max_{i} \beta_G(C_i) \leq \betaout$.
	Now Thm.~\ref{thm:signed-cheeger} implies that 
	$\lambda_k \leq 2 \beta_k(G) \leq 2\betaout$.
	Thus, $\lambda_1\leq\cdots\leq\lambda_k\leq 2\betaout$.

	The next part shows that $\lambda_{k+1}\geq \frac{\betain^2}{C_1^2 (k+1)^6}$.
	Consider any $k+1$ disjoint subsets $V_1,\dots,V_{k+1}$.

	Now observe that there must exist a subset $V_{i_0}\in\{V_1,\dots,V_{k+1}\}$
	with following property: for all $i\in[k]$,
	$\vol(V_{i_0}\cap C_i) \leq \frac{1}{2} \vol(C_i)$.
	To see that this is the case, suppose that the statement is false, i.e., no such subset exists.
	Then by pigeonhole principle there must exist indices $j_1$, $j_2$ with $j_1\neq j_2$ and index $j_3$
	such that  $V_{j_1}\cap C_{j_3}$ and
	$V_{j_2}\cap C_{j_3}$ have volume more than
	$\frac{1}{2} \vol(C_{j_3})$.
	This is a contradiction since the sets $V_{j_1}$ and
	$V_{j_2}$ are mutually disjoint.

	For the rest of the proof, consider the subset $V_{i_0}$ with the above property. %
Consider an arbitrary partition $L,R$ of $V_{i_0}$, such that $L\cup R=V_{i_0}$ and $L\cap R=\emptyset$. Let $L_i=C_i\cap L$ and $R_i=C_i\cap R$. Observe that 	$V_{i_0} = \bigcup_i (L_i\cup R_i)$ since $V = \bigcup_i 	C_i$. Since $\betainner(G[C_i])\geq \betain$ and 
	$\vol(L_i\cup R_i) \leq \frac{1}{2} \vol(C_i)$  for all $i\leq k$,
	we get that
	$\beta_{G[C_i]}(L_i \cup R_i) \geq \betain$ 
	for all $i\in[k]$.
	In particular, this implies that
	$\beta_{G[C_i]}(L_i, R_i) = \frac{e_{G[C_i]}(L_i,R_i)}{\vol(L_i\cup R_i)} \geq \betain$ 
	for all $i\in[k]$.
	Thus,
	\begin{align*}
		\beta_G(L,R)
		&= \frac{e_G(L,R)}{\vol(L \cup R)} \\
		&= \frac{2 \abs{E^+_G(L,R)} + \abs{E^-_G(L)} + \abs{E^-_G(R)}
				+ \abs{E_G(L\cup R,V\setminus(L\cup R))}}{\vol(L \cup R)} \\
		&\geq 
			\frac{1}{\vol(L \cup R)}
			\sum_{i\in[k]} 
			\Big[
				2 \abs{E^+_G(L_{i},R_{i})}
				+ \abs{E^-_G(L_{i})}
				+ \abs{E^-_G(R_{i})}
			+ \abs{E_G(L_{i}\cup R_{i},C_i\setminus(L_{i}\cup R_{i}))} \Big] \\
		&= \frac{\sum_{i\in[k]} e_{G[C_{i}]}(L_{i},R_{i})}
			{\sum_{i\in[k]} \vol(L_{i}\cup R_{i})} \\
		&\geq \frac{\sum_{i\in[k]} \betain \vol(L_{i} \cup R_{i})}
			{\sum_{i\in[k]} \vol(L_{i}\cup R_{i})} \\
	   	&\geq \betain.
	\end{align*}
Since $L,R$ is an arbitrary partition of $V_{i_0}$, it holds that $\beta_G(V_{i_0})\geq \betain$. 
	Thus, $\beta_{k+1}(G)\geq \betain$. Now Thm.~\ref{thm:signed-cheeger}
	implies that $\lambda_{k+1} \geq \frac{\betain^2}{C_1^2 (k+1)^6}$.
\end{proof}

\subsection{Proof of Lemma~\ref{lem:small-diff}}\label{app:proofsmalldiff}
Now we prove Lem. \ref{lem:small-diff}, which is restated in the
following for the sake of readability.
\smalldiff*

\begin{proof}
	The proof is based on the following intuition. Recall the definitions of vectors $\trvv_i,\1_{V_1,V_2},\ww$ from the above discussion. Since both  $\trvv_i$ and a scalar
	multiplications of $\1_{V_1,V_2}$ have small total `discrepancy'  over the
	set of all edges (i.e., Ineq. (\ref{eq:lambda-i-betaout}) and (\ref{eqn:about_indicator_vector})), and the ratio between the total `discrepancy' over all the edges
	and the total `discrepancy' over all vertices (w.r.t. some centers defined by $\ww$) is large (i.e., Ineq. (\ref{eqn:lambda_2_H})), one
	can guarantee both that $\vv_i'$ and a scaled multiplication of $\1_{V_1,V_2}$
	are close to (another scalar multiplication of) $\ww$. We now give the details.  

	Since $G$ is $(k,\betain,\betaout)$-clusterable, we can apply
	Lem.~\ref{lem:cheeger-eigenvalues} to obtain
	$\lambda_{k+1} \geq \frac{\betain^2}{C_1^2 (k+1)^6}$ and
	$\lambda_i \leq 2\betaout$ for all $i\in[k]$.
	
	Since for any $i\leq k$, $\vv_i \LL^\sigma=\lambda_i \vv_i$, we have 
	$\vv_i \DD^{-1/2} (\DD-\AAA^\sigma) \DD^{-1/2} =\lambda_i \vv_i$, and thus 
\begin{align*}
\lambda_i &=\vv_i \DD^{-1/2} (\DD-\AAA^\sigma) \DD^{-1/2}\vv_i^\top \\
			&=\trvv_i (\DD-\AAA^\sigma) {\trvv_i}^\top \\
			&=\sum_{(u,v)\in E} \left(\trvv_i(u) -\sigma(uv)\trvv_i(v)\right)^2.
\end{align*}

Thus, for any $i\leq k$, 
\begin{align}
\label{eq:lambda-i-betaout}
\sum_{(u,v)\in E} \left(\trvv_i(u) -\sigma(uv)\trvv_i(v)\right)^2\leq 2\betaout.
\end{align}

	Now consider the subgraph $H=G[U]$ induced by the
	subset $U$. Denote the set of vertices of $H$ as
	$V_H=U$ and the set of edges of $H$ as $E_H$.
	By assumption on $U$, 
	$\betainner(H)\geq\betain$ and $\beta_G(V[H])\leq \betaout$.

	Let $\lambda_1(H),\lambda_2(H)$ be the first and second eigenvalues of the
	normalized Laplacian matrix $\LL_H^\sigma$ of $H$. Next, let $\DD_H$ be
	the degree matrix of $H$ and let $\AAA_H^\sigma$ denote the signed adjacency
	matrix of $H$.
	
Now we consider a partition $(V_1,V_2)$ of $U$ such that
$\beta_G(U)=\beta_G(V_1,V_2)\leq \betaout$.
Then it holds that $\beta(H)\leq \beta_H(V_1,V_2) 
 \leq  \beta_G(V_1,V_2) \leq  \betaout$. Therefore, by Thm.
\ref{thm:signed-cheeger}, 
\[\lambda_1(H)\leq 2\betaout.\]

Let $\1_{V_1,V_2}\in \mathbb{R}^{V_H}$ such that $\1_{V_1,V_2}(u)=1$ if $u\in
V_1$ and $-1$ otherwise. Set $\trii :=\frac{\1_{V_1,V_2}}{\sqrt{\mu_U} }$.
Then it holds that
\begin{align}
\sum_{(u,v)\in E}(\trii(u)-\sigma(uv)\trii(v))^2
&= \frac{\1_{V_1,V_2}}{\sqrt{\mu_U}}(\DD_H-\AAA_H^\sigma)\frac{\1_{V_1,V_2}\top}{\sqrt{\mu_U}}\nonumber\\
&=\frac{1}{\mu_U}\sum_{(u,v)\in E(H)}(\1_{V_1,V_2}(u)-\sigma(u,v)\1_{V_1,V_2}(v))^2\nonumber \\
&=\frac{2|E^+(V_1,V_2)|+|E^-(V_1)|+|E^-(V_2)|}{\mu_U} \\
&= \beta_H(V_1,V_2) \\
&\leq \betaout.
  \label{eqn:about_indicator_vector}
\end{align}
	
Let $\ww\in \mathbb{R}^{V_H}$ be an eigenvector corresponding to $\lambda_1(H)$ such that $\ww \ww^\top =1$, then it holds that 
	$\ww \LL_H^\sigma = \lambda_1(H) \ww$.

Now since $\betainner(H)=\betainner(G[U])\geq \betain$, we know
(see App.~\ref{sec:motivation-inner})
that
$\beta_2(H) = \beta_2(G[U])\geq \betainner(G[U])\geq \betain$. Thus, by
Thm.~\ref{thm:signed-cheeger}, 
\[
\lambda_2(H)\geq \frac{\beta_2(H)^2}{64C_1^2}\geq \frac{\betain^2}{64C_1^2}.
\]

Now by the variational characterization of the eigenvalues (see, e.g.,
Eqn.~(1.7) in \cite{chung97spectral}), we have 
\begin{align}
\label{eqn:lambda_2_H}
\lambda_2(H) = \min_{\f\in \mathbb{R}^{V_H} } \max_{c\in \mathbb{R}}
				\frac{\sum_{(u,v)\in E_H}  (\f(u) - \sigma(uv)\f(v))^2}
					{\sum_u (\f(u)-c\cdot \ww(u))^2 d_H(u)}.
\end{align}

Now for any $i\leq k$, if we let $\f= {\trvv_i}_{| H}$, i.e., the function
$\trvv_i$ that is restricted on $H$, in Inequality~\eqref{eqn:lambda_2_H}, and
let $$c_{1,i}:=\arg \min_{c\in \mathbb{R}: c\neq 0}{\sum_u (\f(u)-c\cdot \ww(u))^2 d_H(u)},$$
then we have that
\begin{align}
\label{eqn:lambda_2_v_i_H}
    \frac{\sum_{(u,v)\in E_H}  (\trvv_i(u) - \sigma(uv)\trvv_i(v))^2}
					{\sum_u (\trvv_i(u)-c_{1,i}\cdot \ww(u))^2 d_H(u)}
	\geq \lambda_2(H)
	\geq \frac{\betain^2}{64C_1^2}.
\end{align}

If we let $\f= {\trii}$ in Inequality~\eqref{eqn:lambda_2_H}, and let
$$c_2:=\arg \min_{c\in \mathbb{R}: c\neq 0}{\sum_u (\trii(u)-c\cdot \ww(u))^2 d_H(u)}.$$
Thus,
\begin{align}
\label{eqn:lambda_2_indicator_H}
    \frac{\sum_{(u,v)\in E_H}  (\trii(u) - \sigma(uv)\trii(v))^2}
					{\sum_u (\trii(u)-c_2\cdot \ww(u))^2 d_H(u)}
	\geq \lambda_2(H)
	\geq \frac{\betain^2}{64C_1^2}.
\end{align}

Therefore, by Inequalities~\eqref{eq:lambda-i-betaout},
\eqref{eqn:about_indicator_vector}, \eqref{eqn:lambda_2_v_i_H} and
\eqref{eqn:lambda_2_indicator_H}, it holds that 
\begin{align}
\label{eqn:difference}
\begin{split}
	&\sum_u (\trvv_i(u)-c_{1,i}\cdot \ww(u))^2 d_H(u) \leq \frac{128C_1^2\betaout}{\betain^2},
	\text{ and } \\
	&\sum_u (\trii(u)-c_2\cdot \ww(u))^2 d_H(u) <\frac{128C_1^2\betaout}{\betain^2}.
\end{split}
\end{align}
Equivalently, 
\begin{align*}
 &\norm{\trvv_i \DD^{1/2} - c_{1,i} \ww \DD^{1/2}}_2^2\leq \frac{128C_1^2\betaout}{\betain^2}<\frac14,
	\text{ and } \\
 &\norm{c_2 \ww \DD^{1/2}-\trii \DD^{1/2}}_2^2\leq  \frac{128C_1^2\betaout}{\betain^2} <\frac14,
\end{align*}
where we make use the fact that $512C_1^2\betaout< \betain^2$.

Recall that $\trvv_i= \vv_i\DD^{-1/2}$ and $\trii =\frac{\1_{V_1,V_2}}{\sqrt{\mu_U} }$. Therefore,
\begin{align*}
	\norm{\trvv_i \DD^{1/2}}_2^2
	&=\sum_{u\in V_H} {\trvv_i}^2(u) d_H(u) \\
	&=\sum_{u\in V_H} {\vv_i}^2(u) d_H(u)^{-1} \cdot d_H(u) \\
	&\leq \sum_{u\in V} \vv_i^2(u) \\
	&= 1
\end{align*}
and 
\[
	\norm{\trii \DD^{1/2}}_2^2=\sum_{u\in V_H}\trii^2(u)d_H(u)=1.
\]
By the above inequalities, we have 
\[|c_{1,i}|\cdot \norm{\ww \DD^{1/2}}_2\leq \frac{1}{2} + \norm{\trvv_i \DD^{1/2}}_2=\frac32,\]
and
\[\frac12=\norm{\trii \DD^{1/2}}_2-\frac{1}{2}\leq |c_2|\cdot \norm{\ww \DD^{1/2}}_2\leq \frac{1}{2} + \norm{\trii \DD^{1/2}}_2=\frac32.
\]
By the fact that $d\geq d_u\geq 1$ for any vertex $u\in H$ and that $\norm{\ww \DD^{1/2}}_2^2=\sum_u \ww^2(u)d_u$, we have 
\[d=\sum_u d\ww^2(u)\geq \norm{\ww \DD^{1/2}}_2^2\geq \sum_u \ww^2(u)=1.
\]
Furthermore, we have that 
\[
	|c_{1,i}|\leq \frac{3}{2},\quad
\]
and
\[
	\frac{1}{2d}\leq |c_2|\leq \frac{3}{2}.
\]

Let $B_1=\{u \colon \abs{\trvv_i(u)-c_{1,i}\cdot \ww(u)}^2 \geq \frac{256C_1^2\betaout}{\alpha\betain^2\mu_U}\}$, and
$B_2=\{u \colon \abs{\trii(u)-c_2\cdot \ww(u)}^2 \geq \frac{256C_1^2\betaout}{\alpha\betain^2\mu_U}\}$.
By Inequality~\eqref{eqn:difference}, the fact that $d_H(u)\geq 1$ for any $u\in
V_H$ and an averaging argument, we have $|B_1|\leq \frac{\alpha |U|}{2}$, and $|B_2|\leq
\frac{\alpha |U|}{2}$. Therefore, by letting $\widetilde{U}=U\setminus (B_1\cup
		B_2)$, we have $|\widetilde{U}|\geq (1-\alpha)|U|$, and
it holds that
\begin{align*}
\abs{\trvv_i(u)-c_{1,i}\cdot \ww(u)}^2 \leq \frac{256C_1^2\betaout}{\alpha\betain^2\mu_U},
\end{align*}
and
\begin{align*}
\abs{\trii(u)-c_2\cdot \ww(u)}^2 \leq \frac{256C_1^2\betaout}{\alpha\betain^2\mu_U}.
\end{align*}
for any $u\in \widetilde{U}$. Thus, 
\begin{align*}
	\abs{\trvv_i(u)-c_{1,i} \cdot \ww(u)} \leq \frac{16C_1}{\betain}\cdot\sqrt{\frac{\betaout}{\alpha \cdot \mu_U}},
\end{align*}
and
\begin{align*}
	\abs{\frac{c_{1,i}}{c_2}\trii(u)-c_{1,i} \cdot \ww(u)} 
		&\leq \frac{|c_{1,i}|}{|c_2|}\frac{16C_1}{\betain}\cdot\sqrt{\frac{\betaout}{\alpha \mu_U}} \\ 
		&\leq \frac{3/2}{1/(2d)} \frac{16C_1}{\betain}\cdot\sqrt{\frac{\betaout}{\alpha \mu_U}} \\
		&\leq \frac{48dC_1}{\betain}\cdot\sqrt{\frac{\betaout}{\alpha \mu_U}}.
\end{align*}
Now for each $i\leq k$ we let $c_{i}:=\frac{c_{1,i}}{c_2}$. Then $|c_i|\leq
\frac{3/2}{1/(2d)}=3d$, and 
for any vertex $u\in \widetilde{U}$, it holds that 
\begin{align*}
	\abs{\trvv_i(u)-c_i \cdot \trii(u)}
	&\leq \abs{\trvv_i(u)-c_{1,i} \cdot \ww(u)} + \abs{\frac{c_{1,i}}{c_2}\trii(u)-c_{1,i} \cdot \ww(u)} \\
	&\leq \frac{64d C_1}{\betain}\cdot\sqrt{\frac{\betaout}{\alpha \mu_U}}.
\end{align*}
Finally, by the definition of $\trii=\frac{\1_{V_1,V_2}}{\sqrt{\mu_U}}$, we know that for each $i\leq k$, 
\begin{itemize}
\item if $u\in V_1\cap \widetilde{U}$, then 
$\abs{\trvv_i(u)-c_i \cdot \frac{1}{\sqrt{\mu_U}}} \leq \frac{64d C_1}{\betain}\cdot\sqrt{\frac{\betaout}{\alpha \mu_U}}$,
\item if $u\in V_2\cap \widetilde{U}$, then 
$\abs{\trvv_i(u)+ c_i \cdot \frac{1}{\sqrt{\mu_U}}} \leq \frac{64d
	C_1}{\betain}\cdot\sqrt{\frac{\betaout}{\alpha \mu_U}}$.
	\qedhere
\end{itemize}
\end{proof}

\subsection{Proof of Lemma~\ref{lem:close}}
\label{sec:proof-lem-close}
Now we prove Lem.~\ref{lem:close}, which is restated in the
following for the sake of readability.
\vectorclose*
\begin{proof}
Since
\begin{align*}
\ppp_u^t=\1_u \WW^t & = \1_u\left(\frac{\II+\DD^{-1}\AAA^\sigma}{2}\right)^t\\ &= \1_u \left(\frac{\DD^{-1/2}(\II+ \DD^{-1/2}\AAA^\sigma \DD^{-1/2})\DD^{1/2}}{2}\right)^t \\
&= \1_u \DD^{-1/2}\left(\frac{\II+\DD^{-1/2}\AAA^\sigma \DD^{-1/2}}{2}\right)^t \DD^{1/2}
\\
&= \1_u \DD^{-1/2}\left(\II-\frac{\LL_G}{2}\right)^t \DD^{1/2} \\
&= \1_u \DD^{-1/2} \sum_{i}(1-\lambda_i/2)^t \vv_i^\top \vv_i \DD^{1/2}.
\end{align*}
Recall that $\trvv_i=\vv_i \DD^{-1/2}$. We have that, 
\begin{align*}
\ppp_u^t \DD^{-1/2}
&= \1_u\DD^{-1/2} \sum_{i}(1-\lambda_i/2)^t \vv_i^\top \vv_i \\
&= \sum_{i}(1-\lambda_i/2)^t \frac{\vv_i(u)}{\sqrt{d_G(u)}} \vv_i \\
&=\sum_{i}(1-\lambda_i/2)^t \trvv_i(u) \vv_i.
\end{align*}
Therefore we get that
\begin{align*}
&\norm{\ppp_u^t \DD^{-1/2}- \ppp_v^t \DD^{-1/2}}^2_2\\
		&=\norm{\sum_{i=1}^n  (\trvv_i(u)-\trvv_i(v)) (1-\lambda_i/2)^t \vv_i}_2^2\\
		&=\sum_{i=1}^n  (\trvv_i(u)-\trvv_i(v))^{2} (1-\lambda_i/2)^{2t}\\
		&\leq \sum_{i=1}^k (\trvv_i(u)-\trvv_i(v))^{2}
			+ \sum_{i=k+1}^n (\trvv_i(u)-\trvv_i(v))^{2} \left(1 - \lambda_i/2\right)^{2t} \\
		&\leq \sum_{i=1}^k (\trvv_i(u)-\trvv_i(v))^{2}
			+ \left(1 - \lambda_{k+1}/2\right)^{2t}\sum_{i=k+1}^n 2\left(\frac{\vv_i(u)^2}{d_G(u)}+\frac{\vv_i(v)^2}{d_G(v)}\right) \\
&\leq \sum_{i=1}^k (\trvv_i(u)-\trvv_i(v))^{2}
			+ 4 \left(1 - \frac{\betain^2}{2C_1^2 (k+1)^6} \right)^{2t} \\
	&\leq \sum_{i=1}^k (\trvv_i(u)-\trvv_i(v))^{2}
			+ 4 \exp\left( - \frac{ \betain^2 t}{C_1^2 (k+1)^6} \right),
\end{align*}
	where in the fourth step we used that
	$(\trvv_i(u)-\trvv_i(v))^2
		\leq 2 (\trvv_i(u)^2+\trvv_i(v)^2)$
	by the Cauchy--Schwarz inequality
	and the definition of $\trvv_i$.
	In the fifth step we used that 
	$\sum_{i=k+1}^n \vv_i(u)^2 \leq
	\sum_{i=1}^n \vv_i(u)^2=1$
	and that
	$\lambda_{k+1} \geq \frac{\betain^2}{C_1^2 (k+1)^6}$ by
	Lem.~\ref{lem:cheeger-eigenvalues}, as well as $d_G(u)\geq 1$, for any $u\in V$.
Similarly, 
\begin{align*}
\norm{\ppp_u^t \DD^{-1/2}+ \ppp_v^t \DD^{-1/2}}^2_2
\leq \sum_{i=1}^k (\trvv_i(u)+\trvv_i(v))^{2}
			+  4 \exp\left( - \frac{ \betain^2 t}{C_1^2 (k+1)^6} \right).
\end{align*}
We apply Lem. \ref{lem:small-diff} on the set $U$ and let
$\widetilde{U}\subseteq U$,  $V_1,V_2$ be the partition of $U$ with the property
specified in Lem. \ref{lem:small-diff}. Now we consider two vertices
$u,v\in \widetilde{U}$. We now distinguish four cases:
\begin{itemize}
\item Case~1: If $u,v\in V_1$, then for all $i\in [k]$,
	\begin{align*}
		\abs{\trvv_i(u)-\trvv_i(v)}
			&\leq \abs{\trvv_i(u)-c_i \cdot \frac{1}{\sqrt{\mu_U}}} + \abs{\trvv_i(v)-c_i \cdot \frac{1}{\sqrt{\mu_U}}} \\ 
			&\leq \frac{128d C_1}{\betain}\cdot\sqrt{\frac{\betaout}{\alpha \mu_U}}.
	\end{align*}
	Thus, 
	\begin{align*}
		\norm{\ppp_u^t \DD^{-1/2}- \ppp_v^t \DD^{-1/2}}^2_2
		&\leq \sum_{i=1}^k (\trvv_i(u)-\trvv_i(v))^{2}
					+  4 \exp\left( - \frac{ \betain^2 t}{C_1^2 (k+1)^6} \right) \\
		&\leq \frac{20000k d^2 C_1^2\betaout}{\betain^2\cdot \alpha \mu_U}	+  4 \exp\left( - \frac{ \betain^2 t}{C_1^2 (k+1)^6} \right)\\
		&\leq \frac{20000k d^2 C_1^2\betaout}{\betain^2\cdot \alpha |U|}	+  4 \exp\left( - \frac{ \betain^2 t}{C_1^2 (k+1)^6} \right)\\
		&\leq \frac{20000k d^2 C_1^2\betaout}{\betain^2\cdot \alpha \gamma n}	+  4 \exp\left( - \frac{ \betain^2 t}{C_1^2 (k+1)^6} \right)\\
		& \leq\frac{1}{4 nd},
	\end{align*}
	where in the second to last inequality, we use the assumption that $\betaout < \alpha_1 \betain^2$ such that 
		$\alpha_1 = \alpha_1(k,\alpha,\gamma,d) = \frac{\alpha \gamma}{800000 C_1^2 k d^3}$ and that
		$t \geq \frac{4 C_1^2 (k+1)^6 \lg n}{\betain^2}$.
\item Case~2: If $u,v\in V_2$, then for all $i\in [k]$, 
	\begin{align*}
		\abs{\trvv_i(u)-\trvv_i(v)}
			&\leq \abs{\trvv_i(u)+ c_i \cdot \frac{1}{\sqrt{\mu_U}}} + \abs{\trvv_i(v)+ c_i \cdot \frac{1}{\sqrt{\mu_U}}} \\
			&\leq \frac{128d C_1}{\betain}\cdot\sqrt{\frac{\betaout}{\alpha \mu_U}}.
	\end{align*}
	Thus, similarly as above, 
	\begin{align*}
		\norm{\ppp_u^t \DD^{-1/2}- \ppp_v^t \DD^{-1/2}}^2_2
		&\leq \frac{20000k d^2 C_1^2\betaout}{\betain^2\cdot \alpha \mu_U}	+  4 \exp\left( - \frac{ \betain^2 t}{C_1^2 (k+1)^6} \right) \\
		&\leq \frac{1}{4nd}.
	\end{align*}
\item Case~3: If $u\in V_1,v\in V_2$, then for all $i\in [k]$, 
	\begin{align*}
		\abs{\trvv_i(u)+\trvv_i(v)}
			&\leq \abs{\trvv_i(u)- c_i \cdot \frac{1}{\sqrt{\mu_U}}} + \abs{\trvv_i(v)+ c_i \cdot \frac{1}{\sqrt{\mu_U}}} \\
			&\leq \frac{128d C_1}{\betain}\cdot\sqrt{\frac{\betaout}{\alpha \mu_U}}.
	\end{align*}
	Thus, similarly as above, 
	\begin{align*}
		\norm{\ppp_u^t \DD^{-1/2}+ \ppp_v^t \DD^{-1/2}}^2_2
		&\leq \sum_{i=1}^k (\trvv_i(u)+\trvv_i(v))^{2}
				+  4 \exp\left( - \frac{ \betain^2 t}{C_1^2 (k+1)^6} \right) \\
		&\leq \frac{20000k d^2 C_1^2\betaout}{\betain^2\cdot \alpha \mu_U}	+  4 \exp\left( - \frac{ \betain^2 t}{C_1^2 (k+1)^6} \right)\\
		&\leq \frac{1}{4nd}.
	\end{align*}

\item Case~4: If $u\in V_2,v\in V_1$ then for all $i\in [k]$, 
	\begin{align*}
		\abs{\trvv_i(u)+\trvv_i(v)}
		&\leq \abs{\trvv_i(u)+ c_i \cdot \frac{1}{\sqrt{\mu_U}}} + \abs{\trvv_i(v)-c_i \cdot \frac{1}{\sqrt{\mu_U}}} \\
		&\leq \frac{128d C_1}{\betain}\cdot\sqrt{\frac{\betaout}{\alpha \mu_U}}.
	\end{align*}

	Thus, similarly as above
	\begin{align*}
		\norm{\ppp_u^t \DD^{-1/2}+ \ppp_v^t \DD^{-1/2}}^2_2
		&\leq \frac{20000k d^2 C_1^2\betaout}{\betain^2\cdot \alpha \mu_U}	+  4 \exp\left( - \frac{ \betain^2 t}{C_1^2 (k+1)^6} \right) \\
		&\leq \frac{1}{4nd}.
	\end{align*}
	\end{itemize}

Therefore, for any two vertices $u,v\in \widetilde{U}$, we have that 
\begin{align*}
&\Delta_{uv}=\min\{\norm{\ppp_u^t \DD^{-1/2}- \ppp_v^t \DD^{-1/2}}^2_2,\norm{\ppp_u^t \DD^{-1/2}+ \ppp_v^t \DD^{-1/2}}^2_2\}
\leq \frac{1}{4nd}.
\end{align*}

\end{proof}

\subsection{Proof of Lemma~\ref{lem:dissimilar-distributions}}

Now we prove Lem. \ref{lem:dissimilar-distributions}, which is restated in the
following for the sake of readability.
\vectorfaraway*
\begin{proof}
Let $0<\alpha<1$. Consider a subset $C=(V_1,V_2)$ with $\beta_G(V_1,V_2)\leq \betaout$. We first show that for any $t\geq 0$, there exists a subset $\widehat{C}\subseteq C$ such that $\vol(\widehat{C})\geq (1-\alpha)\vol(C)$ and for any $v \in \widehat{C}$, $\sum_{w\in C}\abs{\1_v \WW^t(w)}\ge 1 - \frac{2 t \betaout}{\alpha}$. 
To do so, we first introduce some notations. For any vertex subset $C=(V_1,V_2)\subseteq V$, %
we define vectors $y_{V_1,V_2}$ and $1_{V_1,V_2}$ as 
\begin{align*}
\y_{V_1,V_2}(u) =
\begin{cases}
\frac{d_u}{\vol(C)}       & \quad \text{if } u\in V_1,\\
-\frac{d_u}{\vol(C)}  & \quad \text{if } u\in V_2,\\
0 & \quad \text{otherwise},
\end{cases}
\\
\1_{V_1,V_2}(u) =
\begin{cases}
1       & \quad \text{if } u\in V_1,\\
-1 & \quad \text{if } u\in V_2,\\
0 & \quad \text{otherwise}.
\end{cases}
\end{align*}

We first show the following result.
\begin{claim}
For all $t\geq 0$, $\y_{V_1,V_2} \WW ^t \1_{V_1,V_2}^\top \geq 1-t\betaout$.
\end{claim}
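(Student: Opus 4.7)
The key observation is that $\y_{V_1,V_2}$ can be written as $\frac{1}{\vol(C)}\1_{V_1,V_2}\DD$, so the quantity we want to bound is a symmetric bilinear form in disguise. Set $\w := \1_{V_1,V_2}\DD^{1/2}$ and $S := \DD^{1/2}\WW\DD^{-1/2} = \tfrac{1}{2}(\II+\DD^{-1/2}\AAA^\sigma\DD^{-1/2}) = \II - \tfrac{1}{2}\LL^\sigma$. Since $\LL^\sigma$ has eigenvalues in $[0,2]$, the symmetric matrix $S$ has eigenvalues in $[0,1]$. Using $\WW^t = \DD^{-1/2}S^t\DD^{1/2}$ gives $\DD\WW^t = \DD^{1/2}S^t\DD^{1/2}$, and therefore
\[
\y_{V_1,V_2}\WW^t \1_{V_1,V_2}^\top
= \frac{1}{\vol(C)}\, \1_{V_1,V_2}\DD^{1/2}S^t \DD^{1/2}\1_{V_1,V_2}^\top
= \frac{1}{\vol(C)}\, \w S^t \w^\top.
\]
Since $\w\w^\top = \sum_{u\in V_1\cup V_2}d_u = \vol(C)$, the claim reduces to showing $\w(\II-S^t)\w^\top \leq t\,\betaout\,\vol(C)$.

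I will first handle the case $t=1$, which amounts to a direct calculation with the quadratic form of the signed Laplacian. Since $\II - S = \tfrac{1}{2}\LL^\sigma$ and $\w\LL^\sigma\w^\top = \1_{V_1,V_2}\LLL^\sigma\1_{V_1,V_2}^\top$, we get
\[
\w(\II-S)\w^\top \;=\; \tfrac{1}{2}\sum_{\{u,v\}\in E}\bigl(\1_{V_1,V_2}(u)-\sigma(uv)\1_{V_1,V_2}(v)\bigr)^2.
\]
Evaluating this by edge type (negative edges inside $V_1$ or $V_2$ contribute $4$ each; positive edges between $V_1$ and $V_2$ contribute $4$ each; edges leaving $V_1\cup V_2$ contribute $1$ each; all other edges contribute $0$), a direct comparison with the definition of $e_G(V_1,V_2)$ yields the bound $\tfrac{1}{2}\sum_{\{u,v\}\in E}(\cdots)^2 \leq e_G(V_1,V_2) = \beta_G(V_1,V_2)\vol(C) \leq \betaout\,\vol(C)$. (Care must be taken here with the convention that $|E^-(V_i)|$ counts twice the number of negative edges inside $V_i$, which is precisely what cancels the factor of $2$ coming from the squared differences.)

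To pass from $t=1$ to general $t$, I will use the spectral decomposition $\w = \sum_i c_i \vv_i$ in the orthonormal eigenbasis of $S$, with corresponding eigenvalues $\mu_i = 1-\lambda_i/2 \in [0,1]$. Then
\[
\w(\II-S^t)\w^\top = \sum_i c_i^2(1-\mu_i^t) \;\leq\; t\sum_i c_i^2(1-\mu_i) = t\,\w(\II-S)\w^\top \leq t\,\betaout\,\vol(C),
\]
where the middle inequality uses the elementary fact $1-\mu^t \leq t(1-\mu)$ for $\mu\in[0,1]$ (equivalently, $1-\mu^t = (1-\mu)(1+\mu+\dots+\mu^{t-1})$). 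Dividing by $\vol(C)$ and rearranging yields $\y_{V_1,V_2}\WW^t\1_{V_1,V_2}^\top \geq 1 - t\betaout$, as required.

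The main obstacle is the bookkeeping in the second paragraph: getting the right constants in the quadratic form of $\LLL^\sigma$ evaluated on $\1_{V_1,V_2}$ and matching them with the (somewhat asymmetric) weights in $e_G(V_1,V_2)$, especially reconciling the paper's convention that $|E(V_i)|$ counts each internal edge twice while the cross-term $|E^+(V_1,V_2)|$ is counted once. Once this calibration is done, both the $t=1$ bound and the spectral telescoping step proceed routinely, and nonnegativity of $S$ (which is exactly where the eigenvalue bound $\lambda_i\leq 2$ on the signed normalized Laplacian is used) ensures that $1-\mu_i^t \leq t(1-\mu_i)$ holds termwise.
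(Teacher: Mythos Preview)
Your proof is correct, and it takes a genuinely different route from the paper's own argument.

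The paper proceeds by a step-by-step telescoping bound: it shows that for every $t\ge 0$ the one-step drop satisfies $\y_{V_1,V_2}\WW^t\1_{V_1,V_2}^\top - \y_{V_1,V_2}\WW^{t+1}\1_{V_1,V_2}^\top \le \betaout$, and then sums. To bound the drop, the paper writes $\y_{V_1,V_2}(\II-\WW)\WW^t\1_{V_1,V_2}^\top$ as the \emph{bilinear} Laplacian form $\tfrac{1}{2\vol(C)}\sum_{(u,v)\in E}(\1_{V_1,V_2}(u)-\sigma(uv)\1_{V_1,V_2}(v))(\x(u)-\sigma(uv)\x(v))$ with $\x^\top=\WW^t\1_{V_1,V_2}^\top$, and controls the unknown factor via $\|\x\|_\infty\le 1$, i.e.\ $|\x(u)-\sigma(uv)\x(v)|\le 2$.

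You instead symmetrize once and for all: after the substitution $S=\II-\tfrac12\LL^\sigma$ and $\w=\1_{V_1,V_2}\DD^{1/2}$, the problem becomes bounding $\w(\II-S^t)\w^\top$, and the operator inequality $\II-S^t\preceq t(\II-S)$ (valid precisely because the eigenvalues of $\LL^\sigma$ lie in $[0,2]$, so those of $S$ lie in $[0,1]$) reduces everything to the single case $t=1$, which is a pure \emph{quadratic} form. This is cleaner: you never have to track the evolving vector $\WW^t\1_{V_1,V_2}$ or invoke its $\ell_\infty$ bound, and your edge-type accounting in fact gives the slightly sharper constant $\tfrac12|E_G(C,\bar C)|$ on the boundary term (which you then relax to $|E_G(C,\bar C)|$ to match $e_G(V_1,V_2)$). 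The paper's approach, by contrast, is more elementary in that it avoids spectral decomposition altogether, at the cost of carrying the extra $\|\x\|_\infty$ estimate and a looser constant.
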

\begin{proof}
We prove for any $t\geq 0$, 
\[
\y_{V_1,V_2} \WW^t \1_{V_1,V_2}^\top - \y_{V_1,V_2} \WW^{t+1} \1_{V_1,V_2}^\top \leq \betaout. 
\]
Note that once the above inequality is proven, the claim follows from the fact that $\y_{V_1,V_2} \WW^0 \1_{V_1,V_2}^\top = \y_{V_1,V_2} \1_{V_1,V_2}^\top=1.$

Let $\x$ be the vector such that $\x^\top=\WW^t \1_{V_1,V_2}^\top$. Note that for any
vertex $w\in V$, it holds that $|\x(w)|\leq 1$.
Therefore,  
\begin{align*}
& \y_{V_1,V_2} \WW ^t \1_{V_1,V_2}^\top - \y_{V_1,V_2} \WW ^{t+1} \1_{V_1,V_2}^\top\\
&= \y_{V_1,V_2} (\II-\WW)W^t \1_{V_1,V_2}^\top \\
&=  \y_{V_1,V_2} \DD^{-1}\frac{\DD-\AAA^{\sigma}}{2}\WW^t \1_{V_1,V_2}^\top\\
&=\frac{1}{2\vol(C)}\1_{V_1,V_2}(\DD-\AAA^{\sigma})\WW^t \1_{V_1,V_2}\\
&=\frac{1}{2\vol(C)}\sum_{(u,v)\in E}\left(\1_{V_1,V_2}(u)-\sigma(u,v)\1_{V_1,V_2}(v)\right)\cdot \left(\x(u)-\sigma(u,v)\x(v)\right) \\
&\leq\frac{1}{2\vol(C)}\sum_{(u,v)\in E}\abs{\1_{V_1,V_2}(u)-\sigma(u,v)\1_{V_1,V_2}(v)}\cdot \abs{\x(u)-\sigma(u,v)\x(v)} \\
&\leq\frac{1}{2\vol(C)}\sum_{(u,v)\in E} 2 \cdot \abs{\1_{V_1,V_2}(u)-\sigma(u,v)\1_{V_1,V_2}(v)} \\
&= \frac{1}{\vol(C)}\cdot (2|E_G^+(V_1,V_2)|+2|E_G^-(V_1)|+2|E_G^-(V_2)|+ |E_G(C,\overline{C})|)\\
&\leq 2\beta_G(V_1,V_2)\leq 2\betaout. \qedhere
\end{align*}

\end{proof}

By the above claim, we have
\begin{align*}
&\sum_{v\in C}\frac{d_v}{\vol(C)}\sum_{w\in C}\abs{\1_v \WW^t (w)} \\
\geq &  \sum_{v\in V_1}\frac{d_v}{\vol(C)}{\1_v \WW^t \1_{V_1,V_2}^\top}-\sum_{v\in V_2}\frac{d_v}{\vol(C)}{\1_v \WW^t \1_{V_1,V_2}^\top} \\
= &
\y_{V_1,V_2}\WW^t \1_{V_1,V_2}^\top 
	   \ge  1-2t\betaout.
\end{align*}
Thus, 
\begin{align*}
	\sum_{v\in C}\frac{d_v}{\vol(C)}(1-\sum_{w\in C}|\1_v\WW^t(w)|)
	&=1-\sum_{v\in C}\frac{d_v}{\vol(C)}\sum_{w\in C}\abs{\1_v \WW^t (w)} \\
	&\leq 2t\betaout.
\end{align*}

Let $Q_C = \{v : \sum_{w\in C}\abs{\1_v \WW^t(w)}\le 1 - \frac{2d t \betaout}{\alpha}\}$. Then,
\begin{align*}
\sum_{v\in C}\frac{d_v}{\vol(C)}(1-\sum_{w\in C}\abs{\1_v \WW^t (w)})
&\ge  \sum_{v \in Q_C}\frac{d_v}{\vol(C)}(1 - \sum_{w\in C}\abs{\1_v \WW^t (w)})\\
&	   \ge
	\frac{\vol(Q_C)}{\vol(C)}\frac{2dt \betaout}{ \alpha}\\
& \geq \frac{|Q_C|}{d |C|} \frac{2dt \betaout}{\alpha}
	\enspace.  
\end{align*}

Thus, $|Q_C|\le \alpha |C|$. Therefore, if we set $\widehat{C} = C\setminus Q_C$, then $|\widehat{C}| \ge (1-\alpha) |C|$, and for any $v \in \widehat{C}$, 
\[\sum_{w\in C}\abs{\1_v \WW^t(w)}\ge 1 - \frac{2 d t \betaout}{\alpha}.
\]

Now for any two disjoint sets $C_1=(V_1,V_2)$ and $C_2=(V_1',V_2')$, we define $\widehat{C_1}$ and $\widehat{C_2}$ for $C_1$ and $C_2$, respectively. Thus, $|\widehat{C_1}|\geq (1-\alpha)|C_1|$ and $|\widehat{C_2}|\geq (1-\alpha)|C_2|$. Furthermore, for any $t \ge 1$ and $0 < \alpha < 1$, for any $u \in \widehat{C_1}$ and $v \in \widehat{C_2}$:%
\begin{align*}
	\sum_{w\in C_1}|\1_u \WW^t(w)| &\ge   1-\frac{2dt\betaout}{\alpha},
\end{align*}
and
\begin{align*}
	\sum_{w\in C_2}|\1_v \WW^t(w)| &\ge   1-\frac{2dt\betaout}{\alpha}.
\end{align*}
	
Since $C_1$ and $C_2$ are disjoint, we have \[
\sum_{w\in C_1}|\1_v \WW^t(w)| \le  1 - \sum_{w\in C_2}|\1_v \WW^t(w)|\le \frac{2d t \betaout}{\alpha},\]
and
\[
\sum_{w\in C_2}|\1_u \WW^t(w)| \le  1 - \sum_{w\in C_1}|\1_u \WW^t(w)|\le \frac{2 dt \betaout}{\alpha}.\]
Let $\qq_u^t$ be the vector such that $\qq_u^t(w)=\abs{\1_v \WW^t(w)}$.
Therefore, for any $t \ge 0$,
\begin{align*}
	&\norm{(\qq_u^t-\qq_v^t)\DD^{-1/2}}_2^2 \\
	&= \sum_{w\in V}(\qq_u^t(w)-\qq_v^t(w))^2\frac{1}{d_G(w)}\\
	&=	\left(\sum_{w\in V}(\qq_u^t(w)-\qq_v^t(w))^2\frac{1}{d_G(w)}\right)
			\cdot \left(\sum_{w\in V}d_G(w))\cdot \frac{1}{\vol(G)}\right) \\
	&\geq \frac{(\sum_{w\in V}\abs{\qq_u^t(w)-\qq_v^t(w)})^2}{\vol(G)}\qquad \textrm{(by Cauchy Schwarz inequality)}\\
	&\geq \frac{(\sum_{w\in V}\abs{\abs{\1_u\WW^t(w)}-\abs{\1_v\WW^t(w)}})^2}{\vol(G)} \\
	&\ge \frac{1}{\vol(G)}
		\left[
			\sum_{w\in C_1}(|\1_u\WW^t(w)|-|\1_v\WW^t(w)|)
				+ \sum_{w\in C_2}(|\1_v\WW^t(w)|-|\1_u\WW^t(w)|) \right]^2\\
	&=\frac{1}{\vol(G)}
		\left[ \sum_{w\in C_1}|\1_u\WW^t(w)|-\sum_{w\in C_1}|\1_v\WW^t(w)|
			+ \sum_{w\in C_2}|\1_v\WW^t(w)|-\sum_{w\in C_2}|\1_u\WW^t(w)| \right]^2
	\\ & \ge \frac{(2\cdot (1 - \frac{2d t \betaout}{ \alpha} -\frac{2d t \betaout}{ \alpha}) )^2}{\vol(G)} \\
	&= \frac{(2\cdot(1 - \frac{4dt \betaout}{ \alpha}))^2}{\vol(G)}.
\end{align*}
	
In particular, if $t \le \frac{\alpha}{8d\betaout}$, then  $\norm{(\qq_u^t - \qq_v^t)\DD^{-1/2}}^2_2 \ge \frac{1}{\vol(G)}\geq \frac{1}{nd}$.

The lemma then follows from the fact that
\begin{align*}
	\Delta_{uv}
	&=\min\{\norm{\ppp_u^t \DD^{-1/2}- \ppp_v^t \DD^{-1/2}}^2_2,\norm{\ppp_u^t \DD^{-1/2}+ \ppp_v^t \DD^{-1/2}}^2_2\} \\
	&\geq \norm{\qq_u^t \DD^{-1/2} - \qq_v^t \DD^{-1/2}}_2^2.
\end{align*}
\end{proof}

\subsection{Proof of Lemma~\ref{lem:small-norm}}

Now we prove Lem. \ref{lem:small-norm}, which is restated in the following for the sake of readability.
\vectorsmallnorm*

\begin{proof}
Recall that $\vv_i$ is the $i$-th eigenvector of $\LL_G^\sigma$, and $\trvv_i = \vv_i \DD^{-1/2}$. For all $u\in V$, we set $\delta(u)=\sum_{i=1}^k \trvv_i(u)^2$. Since we have
	that $\norm{\vv_i}_2^2=1$ and $d_G(u)\geq 1$,
	\begin{align*}
		\sum_{u\in V} \delta(u)
		= \sum_{u\in V} \sum_{i=1}^k \frac{\vv_i(u)^2}{d_G(u)}
		= \sum_{i=1}^k \sum_{u\in V} \frac{\vv_i(u)^2}{d_G(u)}
		\leq k.
	\end{align*}
	Thus, the average value of $\delta(u)$ over all $u$ is $\frac{k}{n}$. This
	implies that there exists a subset of vertices $V'\subseteq V$ of size
	$\abs{V'}\geq(1-\alpha)\abs{V}$ such that $\delta(u) \leq \frac{k}{\alpha n}$
	for all $u\in V'$.

	Furthermore, we have that $\1_u = \sum_{i=1}^n \vv_i(u)\vv_i$ and
	$\ppp_u^t \DD^{-1/2} = \sum_{i=1}^n \trvv_i(u) \left(1-\frac{\lambda_i}{2}\right)^t \vv_i$.
	We now get that
	\begin{align*}
		\norm{\ppp_u^t\DD^{-1/2}}_2^2
		&= \norm{ \sum_{i=1}^n \trvv_i(u) \left(1-\frac{\lambda_i}{2}\right)^t \vv_i }_2^2 \\
		&= \sum_{i=1}^n \trvv_i(u)^2 \left(1-\frac{\lambda_i}{2}\right)^{2t} \\
		&= \sum_{i=1}^k \trvv_i(u)^2 \left(1-\frac{\lambda_i}{2}\right)^{2t}
			+ \sum_{i=k+1}^n \trvv_i(u)^2 \left(1-\frac{\lambda_i}{2}\right)^{2t} \\
		&= \sum_{i=1}^k \trvv_i(u)^2
			+ \left(1-\frac{\lambda_{k+1}}{2}\right)^{2t} \sum_{i=k+1}^n \trvv_i(u)^2 \\
		&\leq \delta(u) + \left(1-\frac{\lambda_{k+1}}{2}\right)^{2t} \\
		&\leq \frac{k}{\alpha n} + \left(1-\frac{\betain^2}{2C_1^2 (k+1)^6}\right)^{2t} \\
		&\leq \frac{k}{\alpha n} + \exp\left( - \frac{ \betain^2 t}{C_1^2 (k+1)^6} \right) \\
		&\leq \frac{2k}{\alpha n},
	\end{align*}
	where we used that $\lambda_{k+1}\geq \frac{\betain^2}{C_1^2 (k+1)^6}$ by
	Lem.~\ref{lem:cheeger-eigenvalues} and in the last step we used that
	$t \geq \frac{C_1^2 (k+1)^6 \lg n}{\betain^2}$. %
\end{proof}

\subsection{Proof of Lemma~\ref{lem:estimate-dot-product}}
Now we prove Lem.~\ref{lem:estimate-dot-product}, which is restated in the following for the sake of readability.
\dotproduct*

\begin{proof}
This proof is based on~\citet[Lem.~19]{chiplunkar18testing}. However, since the
vectors we are analyzing may contain negative entries, we need to give a more refined analysis on the variance of the corresponding estimator.  

Let $u,v\in V'$. Recall that $\ppp_u^t = \1_u \WW^t$ and $\ppp_v^t = \1_v \WW^t$. By Lem.~\ref{lem:small-norm} we have that $\norm{\ppp_u^t\DD^{-1/2}}_2, \norm{\ppp_v^t\DD^{-1/2}}_2  \leq
\sqrt{\frac{2k}{\alpha n}}$.
Let $\eta\in (0,1)$ be a parameter that will be specified later. Let $R$ be an integer such that $R^2 \geq \frac{6}{\eta^2} \cdot \frac{2k}{\alpha n}$
	and $R \geq \frac{24}{\eta^2} \cdot \left( \frac{2k}{\alpha n}\right)^{1.5}$. 
	
For $x\in\{u,v\}$, we perform $R$ lazy signed random walks from $x$ of length $t$. 
Let $X_{x,w}^{r,s}$ be a random variable that is $\frac{1}{\sqrt{d_G(w)}}$ if
the $r$'th walk that starts at vertex $x$ ends at vertex $w$ with sign
$s\in\{+,-\}$.
Set $X_{x,w}^r = X_{x,w}^{r,+} - X_{x,w}^{r,-}$.
Observe that $\Exp{X_{x,w}^r} = \frac{\ppp_x^t(w)}{\sqrt{d_G(w)}}$ for
all $w\in V$. 

Let $\mm^s_x(w)$ be the fraction of walks that start at $x$ and end at $w$ with sign~$s$, for $x\in\{u,v\}$. Let $\mm_x=\mm^+_x \DD^{-1/2} - \mm^-_x \DD^{-1/2}$. 

Now for any pair of vertices $u,v\in V'$, observe that
\begin{align*}
	\langle \mm_u, \mm_v \rangle
	= \frac{1}{R^2}
		\sum_{w \in V}
			\left( \sum_{r_u=1}^{R} X_{u,w}^{r_u} \right)
			\left( \sum_{r_v=1}^{R} X_{v,w}^{r_v} \right).
\end{align*}
This implies that
\begin{align*}
	\Exp{ \langle \mm_u, \mm_v \rangle }
	&= \Exp{ \frac{1}{R^2} \sum_{w \in V} \left( \sum_{r_u=1}^{R} X_{u,w}^{r_u} \right) \left(							 \sum_{r_v=1}^{R} X_{v,w}^{r_v} \right)} \\
	&= \frac{1}{R^2} \sum_{w \in V} R \frac{\ppp_u^t(w)}{\sqrt{d_G(w)}} \cdot R \frac{\ppp_v^t(w)}{\sqrt{d_G(w)}} \\
	&= \sum_{w \in V} \frac{\ppp_u^t(w)}{\sqrt{d_G(w)}}  \frac{\ppp_v^t(w)}{\sqrt{d_G(w)}} 
	= \langle \ppp_u^t \DD^{-1/2}, \ppp_v^t \DD^{-1/2} \rangle.
\end{align*}

Next, we wish to compute
$\Var{\langle \mm_u, \mm_v \rangle} = \Exp{\langle \mm_u, \mm_v \rangle^2}- \Exp{\langle \mm_u, \mm_v \rangle}^2$.
We start by computing $\Exp{\langle \mm_u, \mm_v \rangle^2}$:
\begin{align*}
	\Exp{\langle \mm_u, \mm_v \rangle^2}
	&= \Exp{ \frac{1}{R^4}
			\sum_{w \in V}
			\sum_{w' \in V}
			\sum_{r_u=1}^{R}
			\sum_{r_u'=1}^{R} 
			\sum_{r_v=1}^{R}
			\sum_{r_v'=1}^{R}
			X_{u,w}^{r_u} X_{u,w'}^{r_u'} X_{v,w}^{r_v} X_{v,w'}^{r_v'} } \\
	&= \frac{1}{R^4}
			\sum_{w \in V}
			\sum_{w' \in V}
			\sum_{r_u=1}^{R}
			\sum_{r_u'=1}^{R} 
			\sum_{r_v=1}^{R}
			\sum_{r_v'=1}^{R}
			\Exp{ X_{u,w}^{r_u} X_{u,w'}^{r_u'} X_{v,w}^{r_v} X_{v,w'}^{r_v'} }.
\end{align*}
We perform a case distinction in order to bound
$\Exp{ X_{u,w}^{r_u} X_{u,w'}^{r_u'} X_{v,w}^{r_v} X_{v,w'}^{r_v'} }$:
\begin{itemize}
	\item If $w \neq w'$, then 
	\begin{align*}
	\Exp{ X_{u,w}^{r_u} X_{u,w'}^{r_u'} X_{v,w}^{r_v} X_{v,w'}^{r_v'} }
	=
	\begin{cases}
	\frac{\ppp_u^t(w)}{\sqrt{d_G(w)}} \cdot \frac{\ppp_v^t(w)}{\sqrt{d_G(w)}} \cdot \frac{\ppp_{ u}^t(w')}{{\sqrt{d_G(w')}}} \cdot \frac{\ppp_{v}^t(w')}{\sqrt{d_G(w')}}      & \quad \text{if $r_u\neq r_{u}'$ and $r_v\neq
			 r_{v'}$},\\
	0 & \quad \text{otherwise}.
	\end{cases}
\end{align*}
	\item If $w = w'$, $r_u=r_{u}'$ and $r_v = r_{v}'$ then 
\[\Exp{ X_{u,w}^{r_u} X_{u,w}^{r_u} X_{v,w}^{r_v} X_{v,w}^{r_v} }
	\leq \frac{\abs{\ppp_u^t(w)}}{\sqrt{d_G(w)}} \cdot \frac{\abs{\ppp_v^t(w)}}{\sqrt{d_G(w)}}\cdot \frac{1}{\sqrt{d_G(w)}}\cdot \frac{1}{\sqrt{d_G(w)}}.
	\]
	\item If $w = w'$, $r_u=r_{u}'$ and $r_v \neq r_{v}'$ then 
\[
\Exp{ X_{u,w}^{r_u} X_{u,w}^{r_u} X_{v,w}^{r_v} X_{v,w}^{r_v'} }
\leq \frac{\abs{\ppp_u^t(w)}}{\sqrt{d_G(w)}} \cdot \frac{\abs{\ppp_v^t(w)}}{\sqrt{d_G(w)}}\cdot \frac{1}{\sqrt{d_G(w)}}\cdot \frac{\abs{\ppp_v^t(w)}}{\sqrt{d_G(w)}}.
\]
	\item If $w = w'$, $r_u\neq r_{u}'$ and $r_v = r_{v}'$ then 
\[
\Exp{ X_{u,w}^{r_u} X_{u,w}^{r_u'} X_{v,w}^{r_v} X_{v,w}^{r_v} }
\leq \frac{\abs{\ppp_u^t(w)}}{\sqrt{d_G(w)}} \cdot \frac{\abs{\ppp_v^t(w)}}{\sqrt{d_G(w)}}\cdot \frac{\abs{\ppp_u^t(w)}}{\sqrt{d_G(w)}}\cdot \frac{1}{\sqrt{d_G(w)}}.
\]
	\item If $w = w'$, $r_u\neq r_{u}'$ and $r_v\neq r_{v}'$ then 
\[
\Exp{ X_{u,w}^{r_u} X_{u,w}^{r_u'} X_{v,w}^{r_v} X_{v,w}^{r_v'} }
\leq \frac{\abs{\ppp_u^t(w)}}{\sqrt{d_G(w)}} \cdot \frac{\abs{\ppp_v^t(w)}}{\sqrt{d_G(w)}}\cdot \frac{\abs{\ppp_u^t(w)}}{\sqrt{d_G(w)}}\cdot \frac{\abs{\ppp_v^t(w)}}{\sqrt{d_G(w)}}.
\]
\end{itemize}
Thus, we obtain that
\begin{align*}
	&\quad\Exp{\langle \mm_u, \mm_v \rangle^2}\\
	&= \frac{1}{R^4}
			\sum_{w \in V}
			\sum_{w' \in V}
			\sum_{r_u=1}^{R}
			\sum_{r_u'=1}^{R} 
			\sum_{r_v=1}^{R}
			\sum_{r_v'=1}^{R}
			\Exp{ X_{u,w}^{r_u} X_{u,w'}^{r_u'} X_{v,w}^{r_v} X_{v,w'}^{r_v'} } \\
	&\leq \frac{R^2(R-1)^2}{R^4}\sum_{w\in V} \sum_{w' \neq w} \frac{{\ppp_u^t(w)}}{\sqrt{d_G(w)}} \cdot \frac{{\ppp_v^t(w)}}{\sqrt{d_G(w)}} \cdot \frac{{\ppp_{u}^t(w')}}{\sqrt{d_G(w')}}\cdot \frac{{\ppp_{v}^t(w')}}{\sqrt{d_G(w')}} \\
		&\quad+ \frac{1}{R^2} \sum_{w\in V} \frac{\abs{\ppp_u^t(w)}\cdot \abs{\ppp_v^t(w)}}{{d_G^2(w)}} 
			+ \frac{1}{R} \sum_{w\in V} \frac{\abs{\ppp_u^t(w)} \cdot \abs{\ppp_v^t(w)}^2}{d_G^2(w)} \\
		&\quad+ \frac{1}{R} \sum_{w\in V} \frac{\abs{\ppp_u^t(w)}^2 \cdot \abs{\ppp_v^t(w)}}{d_G^2(w)}
		 + \sum_{w\in V} \frac{\abs{\ppp_u^t(w)}^2 \cdot \abs{\ppp_v^t(w)}^2}{d_G^2(w)} \\
	&= \sum_{w,w'\in V} \frac{{\ppp_u^t(w)}\cdot {\ppp_v^t(w)}\cdot {\ppp_{u}^t(w')} \cdot {\ppp_{v}^t(w')}}{{d_G(w)} d_G(w')} \\
		&\quad- \left(\frac{2R-1}{R^2}\right)\sum_{w\in V}\sum_{w'\neq w} \frac{{\ppp_u^t(w)}\cdot {\ppp_v^t(w)}\cdot {\ppp_{u}^t(w')} \cdot {\ppp_{v}^t(w')}}{{d_G(w)} d_G(w')} \\
		&\quad+ \frac{1}{R^2} \sum_{w\in V} \frac{\abs{\ppp_u^t(w)}\cdot \abs{\ppp_v^t(w)}}{{d_G^2(w)}}  + \frac{1}{R} \sum_{w\in V} \frac{\abs{\ppp_u^t(w)} \cdot \abs{\ppp_v^t(w)}^2}{d_G^2(w)}\\
		 &\quad+ \frac{1}{R} \sum_{w\in V} \frac{\abs{\ppp_u^t(w)}^2 \cdot \abs{\ppp_v^t(w)}}{d_G^2(w)},
\end{align*}

This implies that
\begin{align*}
	&\quad\Var{\langle \mm_u, \mm_v \rangle}
		= \Exp{\langle \mm_u, \mm_v \rangle^2}- \Exp{\langle \mm_u, \mm_v \rangle}^2 \\
	&\leq \sum_{w,w'\in V} \frac{{\ppp_u^t(w)}\cdot {\ppp_v^t(w)}\cdot {\ppp_{u}^t(w')} \cdot {\ppp_{v}^t(w')}}{{d_G(w)} d_G(w')} \\
		&\quad- \left(\frac{2R-1}{R^2}\right)\sum_{w\in V}\sum_{w'\neq w} \frac{{\ppp_u^t(w)}\cdot {\ppp_v^t(w)}\cdot {\ppp_{u}^t(w')} \cdot {\ppp_{v}^t(w')}}{{d_G(w)} d_G(w')} \\
		&\quad+ \frac{1}{R^2} \sum_{w\in V} \frac{\abs{\ppp_u^t(w)}\cdot \abs{\ppp_v^t(w)}}{{d_G^2(w)}} 
			+ \frac{1}{R} \sum_{w\in V} \frac{\abs{\ppp_u^t(w)} \cdot \abs{\ppp_v^t(w)}^2}{d_G^2(w)} \\
		&\quad+ \frac{1}{R} \sum_{w\in V} \frac{\abs{\ppp_u^t(w)}^2 \cdot \abs{\ppp_v^t(w)}}{d_G^2(w)}
			- \left( \sum_{w \in V} \frac{\ppp_u^t(w)}{\sqrt{d_G(w)}} \frac{\ppp_v^t(w)}{\sqrt{d_G(w)}} \right)^2 \\
	&\leq \frac{1}{R^2} \sum_{w\in V} \frac{\abs{\ppp_u^t(w)}}{\sqrt{d_G(w)}} \cdot \frac{\abs{\ppp_v^t(w)}}{\sqrt{d_G(w)}}
		+ \frac{1}{R} \sum_{w\in V} \frac{\abs{\ppp_u^t(w)}}{\sqrt{d_G(w)}} \cdot \left(\frac{\abs{\ppp_v^t(w)}}{\sqrt{d_G(w)}}\right)^2 \\
		&\quad+ \frac{1}{R} \sum_{w\in V} \left(\frac{\abs{\ppp_u^t(w)}}{\sqrt{d_G(w)}}\right)^2 \cdot \frac{\abs{\ppp_v^t(w)}}{\sqrt{d_G(w)}} \\
	&\quad + \frac{2}{R} \sum_{w,w'\in V}\frac{\abs{\ppp_u^t(w)}\cdot \abs{\ppp_v^t(w)}\cdot \abs{\ppp_{u}^t(w')} \cdot \abs{\ppp_{v}^t(w')}}{{d_G(w)} d_G(w')}\\
	&\leq \frac{1}{R^2} \norm{\ppp_u^t \DD^{-1/2}}_2 \cdot \norm{\ppp_v^t \DD^{-1/2}}_2
			+ \frac{1}{R} \norm{\ppp_u^t \DD^{-1/2}}_2 \cdot \norm{\ppp_v^t \DD^{-1/2}}^2_4 \\
		&\quad+ \frac{1}{R} \norm{\ppp_u^t \DD^{-1/2}}_4^2 \cdot \norm{\ppp_v^t \DD^{-1/2}}_2
			+\frac{2}{R} \left(\sum_{w\in V}\frac{\abs{\ppp_u^t(w)}\cdot \abs{\ppp_v^t(w)}}{{d_G(w)}}\right)^2\\
	&\leq \frac{1}{R^2} \norm{\ppp_u^t \DD^{-1/2}}_2 \cdot \norm{\ppp_v^t \DD^{-1/2}}_2
		 + \frac{1}{R} \norm{\ppp_u^t \DD^{-1/2}}_2 \cdot \norm{\ppp_v^t \DD^{-1/2}}^2_2 \\
	 &\quad+ \frac{1}{R} \norm{\ppp_u^t \DD^{-1/2}}_2^2 \cdot \norm{\ppp_v^t \DD^{-1/2}}_2
		 +\frac{2}{R} \norm{\ppp_u^t \DD^{-1/2}}_2^2 \cdot \norm{\ppp_v^t \DD^{-1/2}}_2^2,
\end{align*}
where in the last step we have used the Cauchy-Schwarz inequality and the monotonicity of the
$\ell_p$-norms\footnote{It is known that for $p,q\in(0,\infty)$ with $p \leq q$,
	it holds that $\norm{\x}_q \leq \norm{\x}_p$ for all vectors $\x$.
}, which gives $\norm{\x}_4 \leq \norm{\x}_2$, for any vector $\x$.

Now recall that $\norm{\ppp_u^t\DD^{-1/2}}_2, \norm{\ppp_v^t\DD^{-1/2}}_2  \leq
\sqrt{\frac{2k}{\alpha n}}$. Recall that $\frac{2k}{\alpha}\leq n$ and thus $\frac{2k}{\alpha n} \leq 1$.
This implies that
\begin{align*}
	\Var{\langle \mm_u, \mm_v \rangle}
	&\leq \frac{1}{R^2} \cdot \frac{2k}{\alpha n}
		+ \frac{2}{R} \cdot \left( \frac{2k}{\alpha n} \right)^{1.5}
		+ \frac{2}{R} \cdot \left( \frac{2k}{\alpha n} \right)^{2} \\
	&\leq \frac{1}{R^2} \cdot \frac{2k}{\alpha n}
		+ \frac{4}{R} \cdot \left( \frac{2k}{\alpha n} \right)^{1.5}.
\end{align*}
Now Chebyshev's Inequality implies that:
\begin{align*}
&\Prob{\abs{ \langle \mm_u, \mm_v \rangle - \langle \ppp_u^t\DD^{-1/2}, \ppp_v^t\DD^{-1/2} \rangle} \geq \eta } \\
	&=  \Prob{ \abs{ \langle \mm_u, \mm_v \rangle - \Exp{\langle \mm_u, \mm_v \rangle}} \geq \eta }\\
	&\leq \frac{\Var{\langle \mm_u, \mm_v \rangle}}{\eta^2} \\
	&\leq \frac{1}{\eta^2} \cdot
		\left( \frac{1}{R^2} \cdot \frac{2k}{\alpha n}
		+ \frac{4}{R} \cdot \left( \frac{2k}{\alpha n} \right)^{1.5}\right) \\
	&\leq \frac{1}{3}.
\end{align*}
In the last inequality we have used that
$R^2 \geq \frac{6}{\eta^2} \cdot \frac{2k}{\alpha n}$
	and $R \geq \frac{24}{\eta^2} \cdot \left( \frac{2k}{\alpha n}\right)^{1.5}$.
	
Now we let $\eta= \frac{1}{20 nd}$ and $R=\frac{40000d^2 k^{1.5}\sqrt{n}}{\alpha^{1.5}}$ so that the above conditions on $R$ are satisfied. Thus, with probability at least $1-\frac13=\frac23$, the estimate $\langle \mm_u, \mm_v\rangle$ satisfies that \[
\abs{ \langle \mm_u, \mm_v\rangle - {\langle \ppp_u^t\DD^{-1/2}, \ppp_v^t\DD^{-1/2} \rangle}} \leq \frac{1}{20nd}. 
\]
Now note that the algorithm \textsc{EstDotProd}($u,v,t,\alpha$) repeatedly invokes the above subroutine for $h=O(\log n)$ times and outputs the median of the corresponding estimates $\langle \mm_u, \mm_v\rangle$, we are guaranteed that with probability at least $1-1/n^3$, the output $X_{uv}$ satisfies that
\[
\abs{ X_{uv} - {\langle \ppp_u^t\DD^{-1/2}, \ppp_v^t\DD^{-1/2} \rangle}} \leq \frac{1}{20nd}.
\]

To obtain the runtime result, observe that for each run of the subroutine, the algorithm only performs $R$ random walks of length $t$ from both $u$ and $v$, which can be done in $O(R t)$ time. Thus, each of the vectors $\mm_u$ and $\mm_v$ has at most
$R$ non-zero entries and the dot product $\langle \mm_u,\mm_v\rangle$ can be computed in time
$O(R)$. Finally, since we run the subroutine for $O(\log n)$ times, the total running time is thus $O(R t \log n)=O(\frac{d^2k^{1.5}t\sqrt{n}\log n}{\alpha^{1.5}})$.

This finishes the proof of the lemma.
\end{proof}

\subsection{Proof of Thm.~\ref{thm:oracle}}
Now we prove Thm.~\ref{thm:oracle}, which is restated in the following for the sake of readability.
\oracle*

\begin{proof}
Given the above lemmas, we can prove our main theorem as follows. Recall that
$C''>0$ is some large constant. Note that we have selected the random walk length
$t=\frac{C'' k^6 d^3 \log n}{\betain^2}$. 

Let
$s=\frac{20k}{\gamma}\log({k})$, $\alpha =\frac{\varepsilon}{90s}$.
Recall that
$\betaout< \frac{\varepsilon\betain^2 }{C'\log(k) k^7d^3\log n}$.
Note that $t\leq \frac{\alpha }{8\betaout}$ by changing appropriately large
$C''$. Note that
$\frac{2k}{\alpha}=\frac{1800k^2\log(k)}{\gamma\varepsilon}\leq n$.

\textbf{Correctness.} 
Let $U_1,\dots,U_k$ be a $(k,\betain,\betaout)$-clustering of $G$ such that
each cluster has size at least $\frac{\gamma n}{k}$, for some universal constant
$\gamma >0$. For any $u\in V$, let $U_u$ be the cluster that contains $u$. We
call a vertex $u$ \emph{bad}, if either:
\begin{itemize}
	\item $u\in V\setminus V'$, where $V'$ is the set as defined in Lem.
		\ref{lem:small-norm} with $\alpha =\frac{\varepsilon}{90s}$,
	\item $u\in U_u \setminus \widetilde{U_u}$, where $\widetilde{U_u}$ is as
		defined in Lem. \ref{lem:close} with $\alpha =\frac{\varepsilon}{90s}$,
		or 
	\item $u\in U_u \setminus \widehat{U_u}$, where $\widehat{U_u}$ is as
		defined in Lem. \ref{lem:dissimilar-distributions} with
		$\alpha=\frac{\varepsilon}{90s}$.
\end{itemize}
Let $B$ denote the set of all bad vertices. Note that  
\[|B|\leq \left( \frac{\varepsilon}{90s} +\frac{\varepsilon}{90s}+\frac{\varepsilon}{90s}\right) \cdot n=\frac{\varepsilon\cdot n}{30s}.
\]
We call a vertex $u$ \emph{good}, if it is not bad. 

Note that since each cluster $U$ has size at least $\frac{\gamma n}{k}$, it
holds that
\[
|B| 
\leq \frac{\gamma \varepsilon \cdot n}{600k\log({k}) }
\leq O\left(\frac{\varepsilon}{\log({k})}\right)|U|.
\]

Thus, with probability
at least $1-\frac{\varepsilon}{30s}\cdot s \geq 1-\frac{1}{30}$, all the
vertices in $S$ are good. In the following, we will assume that this is the
case. 

Note further that since each cluster $U$ satisfies that $|U|\geq  \frac{\gamma n}{k}$ for
some $\gamma=\Omega(1)$, it holds that with probability at least $1-
(1-\frac{\gamma}{k})^s\geq 1-\frac{1}{30k}$, there exists at least one
vertex in $S$ that is from cluster $U$. Thus, for all the $k$ clusters $U$, with
probability at least $1-\frac{1}{30}$, there exists at least one vertex
in $S$ that is from cluster $U$.

By Lem. \ref{lem:small-norm}, we know that for any $v\in S$,
$\norm{\ppp_v^t \DD^{-1/2}}_2^2
	\leq \frac{2k}{n}\cdot \frac{90s}{\varepsilon}
	=\frac{3600k^2\log(k)}{\gamma\varepsilon\cdot n}$.
Let $u,v$ be two different vertices in $S$. By
Lem.~\ref{lem:estimate-dot-product}, with probability at least
$1-\frac{1}{n^3}$, we can estimate each term
$\langle \ppp_x^t \DD^{-1/2}, \ppp_y^t \DD^{-1/2} \rangle$ within an additive error at
most $\frac{1}{20nd}$, for any $\{x,y\}\in \{u,v\}$. This also implies that with
probability at least $1-\frac{|S|^2}{n^3}\geq 1-\frac{1}{n^2}$, for all vertex
pairs $x,y\in S$, we have an estimate $X_{xy}$ such that 
\[
	\abs{ X_{xy}  - \langle \ppp_x^t \DD^{-1/2}, \ppp_y^t \DD^{-1/2}\rangle} \leq \frac{1}{20nd}.
\]
In the following, we will assume the above inequality holds for any $x,y\in S$. 

Since $v$ is good for each $v\in S$, we know that
$X_{vv}\leq \frac{3600k^2\log(k)}{\gamma\varepsilon\cdot n}+\frac{1}{20nd}
	\leq \frac{4000k^2\log(k)}{\gamma\varepsilon\cdot n}$. Thus,
Line~\ref{alg:l22normest} of Alg.~\ref{alg:mainoracle} will not happen. 

Note that 
\begin{align*}
	\Delta_{uv}
	=\min\{\norm{\ppp_u^t \DD^{-1/2}- \ppp_v^t \DD^{-1/2}}^2_2,\norm{\ppp_u^t \DD^{-1/2}+ \ppp_v^t \DD^{-1/2}}^2_2\},
\end{align*}
where
\begin{align*}
	\norm{\ppp_u^t \DD^{-1/2}- \ppp_v^t \DD^{-1/2}}^2_2
	= \langle \ppp_u^t \DD^{-1/2}, \ppp_u^t \DD^{-1/2} \rangle - 2 \langle \ppp_u^t \DD^{-1/2}, \ppp_v^t \DD^{-1/2} \rangle + \langle \ppp_v^t \DD^{-1/2}, \ppp_v^t \DD^{-1/2} \rangle,
\end{align*}
and
\begin{align*}
	\norm{\ppp_u^t \DD^{-1/2}+ \ppp_v^t \DD^{-1/2}}^2_2
	= \langle \ppp_u^t \DD^{-1/2}, \ppp_u^t \DD^{-1/2} \rangle + 2 \langle \ppp_u^t \DD^{-1/2}, \ppp_v^t \DD^{-1/2} \rangle + \langle \ppp_v^t \DD^{-1/2}, \ppp_v^t \DD^{-1/2} \rangle.
\end{align*}
Since our estimates $X_{vv},X_{uv}, X_{uu}$ approximate
$\langle \ppp_u^t \DD^{-1/2}, \ppp_u^t \DD^{-1/2} \rangle$,
$\langle \ppp_u^t \DD^{-1/2}, \ppp_v^t \DD^{-1/2} \rangle$
$\langle \ppp_v^t \DD^{-1/2}, \ppp_v^t \DD^{-1/2}\rangle$ within an additive error
$\frac{1}{20nd}$, respectively, we can approximate $\Delta_{uv}$ within an
additive error at most $4\cdot \frac{1}{20 nd}=\frac{1}{5nd}$, i.e., the
estimate $\delta_{uv}$ (at Line~\ref{alg:deltaestimate} of
Alg.~\ref{alg:mainoracle}) satisfies that 
$\abs{\delta_{uv}-\Delta_{uv}} \leq \frac{1}{5nd}$.

Now recall that each cluster $U$ satisfies that $|U|\geq  \frac{\gamma}{k} n$
for some $\gamma =\Omega(1)$ and that 
$\betaout< \frac{\varepsilon\betain^2 }{C'\log(k) k^7d^3\log n}$.
Note that the precondition of Lem. \ref{lem:close} is satisfied. Now let $S_U=S
\cap U$, and let $u,v\in S$. Then:
\begin{itemize}
	\item If $u,v$ belong to the same cluster, by Lem.~\ref{lem:close}, we know
		that $\Delta_{uv}\leq \frac{1}{4nd}$. Then it holds that $\delta_{uv}\leq
		\Delta_{uv}+\frac{1}{5nd} < \frac{1}{2nd}$. Thus, an edge $(u,v)$ will be
		added to $H$ (at line \ref{alg:addedge} of Alg.~\ref{alg:mainoracle}). 
	\item If $u,v$ belong to two different clusters, by
		Lem.~\ref{lem:dissimilar-distributions}, we know  $\Delta_{uv}\geq
		\frac{1}{nd}$.  Then it holds that
		$\delta_{uv}\geq \Delta_{uv}-\frac{1}{5nd} > \frac{1}{2nd}$. Thus, an
		edge $(u,v)$ will not be added to $H$. 
\end{itemize}

Therefore, with probability at least
$1-\frac{1}{30}-\frac{1}{n^2}-\frac{1}{30}\geq 0.9$,
the similarity graph $H$ has the following properties: 
\begin{enumerate}
    \item all vertices in $V(H)$ (i.e., $S$) are good,
	\item all vertices in $S$ that belong to the same cluster $U$ form a clique, denoted by $H_U$,
	\item there is no edge between any two cliques $H_{U_i}$ and $H_{U_j}$ that
		correspond to two different clusters $U_i,U_j$,
	\item there are exactly $k$ cliques in $H$, each corresponding to some cluster.
\end{enumerate}

Now let us consider a membership query, i.e., the subroutine
\textsc{WhichCluster}({$G,v,H,\ell$}) for some vertex $v\in V$. We will show
that any good vertex $v$ will be correctly classified. In the following, we will
assume that $v$ is good. 

Since all the vertices in $S$ are good, we know that for any vertex $u\in
U_v\cap S$, by Lem. \ref{lem:close}, $\Delta_{uv}\leq \frac{1}{4nd}$, and by
the same argument as above, with probability at least $1-1/n^2$, the estimate
$\delta_{uv}$ (at Line~\ref{alg:estimatedeltaagain} of
Alg.~\ref{alg:answeringquery}) satisfies that $\delta_{uv}<\frac{1}{2nd}$.
Thus, the label for $v$ outputted by \textsc{WhichCluster} will be the same as
the $\ell(u)$, the label of $u$.  

On the other hand, for any other vertex $u\in S\setminus U_v$, by
Lem.~\ref{lem:dissimilar-distributions},  $\Delta_{uv}\geq \frac{1}{nd}$, and
by the same argument as above, with probability at least $1-1/n^2$, the estimate
$\delta_{uv}$ (at Line~\ref{alg:estimatedeltaagain} of
Alg.~\ref{alg:answeringquery}) satisfies that $\delta_{uv}>\frac{1}{2nd}$.
This further implies that the label of $v$ will be different from the label of $u$.

Thus, all good vertices are correctly classified with probability at least $1-\frac{2}{30}-\frac{n}{n^2}\geq 0.9$. Assuming this holds, then the set of misclassified
vertices is a subset of all bad vertices, which implies that there exists a
permutation $\pi:[k]\to[k]$ such that 
\[
	|P_{\pi(i)}\triangle U_i| 
	\leq |B|\leq O\left(\frac{\varepsilon}{\log(k)}\right)\cdot |U_i|. 
\]

\textbf{Running time.} We first note that by
Lem.~\ref{lem:estimate-dot-product}, the subroutine
\textsc{EstDotProd}($u,v,t,\alpha$) (i.e., Alg.~\ref{alg:estimate-dot-product})
runs in time $O(\frac{d^2k^{1.5}t\log n}{\alpha^{1.5}}\cdot\sqrt{n})
	=O(\sqrt{n}\poly(\frac{kd\cdot\log n}{\varepsilon\betain}))$. 

For the algorithm \textsc{BuildOracle}, it invokes the subroutine
\textsc{EstDotProd} for $O(s^2)$ times and uses the outputted estimates to
construct the similarity graph $H$, which in total takes
$O(s^2\cdot \frac{d^2k^{1.5}t\log n}{\alpha^{1.5}}\cdot\sqrt{n})
	=O(\sqrt{n}\poly(\frac{kd\cdot\log n}{\varepsilon\betain}))$ time. 

For the algorithm \textsc{WhichCluster}, it invokes the subroutine
\textsc{EstDotProd} for $O(s)$ times and uses uses the outputted estimates to
answer, which in total takes
$O(s\cdot \frac{d^2k^{1.5}t\log n}{\alpha^{1.5}}\cdot\sqrt{n})
	=O(\sqrt{n}\poly(\frac{kd\cdot\log n}{\varepsilon\betain}))$ time. 
\end{proof}

\section{Implementation Details}
\label{sec:implementation}
We describe the practical implementations of our oracle data structures. We also
discuss an \emph{un}signed oracle and a heuristic algorithm for biclustering.
Furthermore, we discuss how to practically determine the parameters for our
algorithms.

\textbf{Practical changes to our signed oracle.}
We start by giving some details on the implementation of our algorithm and the
changes that we have made compared to the theoretical version.

First, we do not set $\tDelta_{uv}$ as described in
Eqn.~\eqref{eq:delta_uv}. Instead, we follow the intuition from
Sec.~\ref{sec:analysis-intuition} and use the vectors $\mathbf{r}_u^t$ rather
than $\ppp_u^t$ in the definition of $\tDelta_{uv}$.
Recall that $\mathbf{r}_u^t = \abs{\ppp_u^t}$, where the absolute values are
taken component-wise. Therefore, we change Line~\ref{line:m_x} in
Alg.~\ref{alg:estimate-dot-product} to $\mm_x\gets\abs{(\mm_x^+ - \mm_x^-) \DD^{-1/2}}$.
Preliminary experiments (not reported here) showed that this provides slightly
better results than when using the original choice of $\tDelta_{uv}$.
Furthermore, we only run the subroutine \textsc{EstDotProd} once (rather than
$h=O(\lg n)$ times).

Next, for a \textsc{WhichCluster}($v$) query, the theoretical
algorithm returns that $v$ belongs to the cluster of vertex~$u\in S$ if
$\tDelta_{uv}\leq\frac{1}{2dn}$. %
However, in practice the the upper bound $\frac{1}{2dn}$ is not a
suitable choice. Thus, we assign $v$ to the cluster of $u=\arg\min_{w\in S}
\tDelta_{wv}$.

\textbf{Seeded and unseeded initialization.}
We consider two different initialization strategies: (1) when ground-truth
seed nodes are available and (2) when use a randomized initialization.

In Case~(1), when a small set of ground-truth seed nodes is available for each
ground-truth cluster, we skip the preprocessing from
Alg.~\ref{alg:mainoracle} and take the vertex labels provided from the
ground-truth seed nodes; we do not perform any other preprocessing.

In Case~(2), we randomly sample a set $S$ of vertices as in
Alg.~\ref{alg:mainoracle}. However, we build the auxiliary graph $H$
differently. Recall that in Alg.~\ref{alg:mainoracle}, we inserted
all edges $(u,v) \in S\times S$ into~$H$ with $\delta_{uv}\leq \frac{1}{2dn}$.
Preliminary experiments indicated that this upper bound is not a good
choice in practice. Instead, we insert edges into~$H$ until it has $k$
connected components (note that, initially, $H$~has $\abs{S}$~connected
components).  More concretely, we compute the pairwise distances $\delta_{uv}$
for all $u,v\in S$.  Then we iterate over these distances in non-decreasing
order and insert the corresponding edges into $H$ until $H$ has exactly $k$
connected components.  To obtain more robust distance estimates for
$\delta_{uv}$, we compute 5~samples of $\delta_{uv}$, $u,v\in S$, and take the
median; we only do this during the preprocessing phase for this
algorithm (and \emph{not} for queries as per
Alg.~\ref{alg:answeringquery}).

\textbf{Heuristic biclustering oracle.}
So far, we considered oracles for finding polarized communities
$U_1,\dots,U_k$ (see Def.~\ref{def:clustering}). However, we did not consider partitioning each
$U_i$ into biclusters $(V_{2i-1},V_{2i})$, that reveal the polarized
groups in $U_i$. We now present a heuristic \emph{biclustering oracle}
for this purpose.

The heuristic biclustering oracle works exactly as the clustering 
oracle, with the following two changes. First, we do not take absolute values
when computing $\mm_x$, i.e., in Line~\ref{line:m_x} in
Alg.~\ref{alg:estimate-dot-product} we set
$\mm_x\gets (\mm_x^+ - \mm_x^-) \DD^{-1/2}$.
Second, when computing $\delta_{uv}$, we now set
$\delta_{uv} \gets X_{uu}+X_{vv}-2X_{uv}$. 
These changes correspond to our intuition from Sec.~\ref{sec:analysis-intuition} that
$\mm_u$ approximates $\ppp_u^t \DD^{-1/2}$ and that
$\delta_{uv}\approx \norm{\ppp_u^t - \ppp_v^t}_2^2$ is small \emph{iff} $u$
and $v$ are from the same bicluster $V_j$. This intuition is also
supported by our analysis via Lem.s~\ref{lem:dissimilar-distributions}
and~\ref{lem:small-diff}. However, this is only a heuristic because
it appears challenging to prove that our estimate $\delta_{uv}$ is
large if $u\in V_1$ and $v\in V_2$; the main challenge is that we are only
allowed a query time of $\tO(\sqrt{n})$.

\textbf{Unsigned oracle.}
To evaluate our oracles, it will be interesting to compare against an
\emph{unsigned oracle}, i.e., an oracle which ignores the edge signs and only
considers the underlying unsigned graph.  To this end, we consider unsigned
versions of our clustering oracle and our heuristic biclustering oracle.
Algorithmically, the only change is that we assume that all edges have sign~$+$.
The resulting unsigned oracle is almost identical to the oracle
in~\cite{czumaj15testing}.

\textbf{Parameter tuning.}
To run our algorithms, one has to determine two crucial parameters: the length
and the number of random walks. For both of them, our analysis requires the
parameters $\alpha$ and $\betain$, which are not available in practice and it
seems infeasible to estimate them.  Therefore, we briefly describe how parameter
tuning can be performed to obtain good choices for the length and the number of
random walks. %

Given a graph~$G$, suppose that for a small set of vertices $V_\text{labeled}$
we know their ground-truth communities. Now we build the oracle for several
different parameters for the length and number of random walks. For each
parameter setting, we run \textsc{WhichCluster}($v$) for all $v\in
V_\text{labeled}$ and check if $v$ was classified correctly. At the end, we pick
the parameter setting with the most correct answers.

Observe that the above procedure does not require a full clustering of $G$ and
can be used even when $V_\text{labeled}$ is small.  Further observe that we
could also split $V_\text{labeled}$ into a training set (used for the seed
nodes), a validation set (used for determining the best parameters) and a test
set (for estimating the overall accuracy).

\section{Experiments on Synthetic Data}
\label{sec:experiments-synthetic}
We evaluate our algorithms on synthetic datasets.  We generated random graphs by
starting with an empty graph and partitioning $n$ vertices into equally-sized
clusters $U_1,\dots,U_k$ with $U_i = (V_{2i-1},V_{2i})$ for
all $i\in[k]$. We inserted edges~$(u,v)$ with the following probabilities: 
$\pinner$ if $u,v \in V_i$, $\pcross$ if $u \in V_{2i-1}$ and $v\in V_{2i}$, $q$
if $u \in U_i, v\not\in U_i$. For all inserted edges, we
set their sign to $+$ ($-$) with probability $\psign$ if $u,v \in V_i$
($u\in V_{2i-1}, v\in V_{2i}$) and to sign $-$ ($+$) with probability
$1-\psign$ otherwise.  If $u\in U_i, v\not\in U_i$ then we set the sign to $+$ with
probability $\qsign$ and to $-$ with probability $1-\qsign$.  When not stated
otherwise, we set $n=2000$, $k=6$, $\pinner=0.8$, $\pcross=0.4$, $q=0.05$,
$\psign=0.8$ and $\qsign=0.9$.
For each experiment we have created $5$~random graphs and we report average
accuracies and their standard deviations.

\emph{Clustering experiments.}
We present our results for finding clusters
$U_1,\dots,U_k$ in Fig.~\ref{fig:experiments}. 
We ran the oracles with $t=2$~random walk steps and $R=400$~random walks, unless
stated otherwise. The seeded oracles and \polarSeeds obtained 6~seed vertices
from each $U_i$; the unseeded oracles randomly sampled $3k$ seeds.

\begin{figure*}[t!]
  \begin{center}
  \subfigure{
	  \includegraphics[width=3\smallfigwidth]{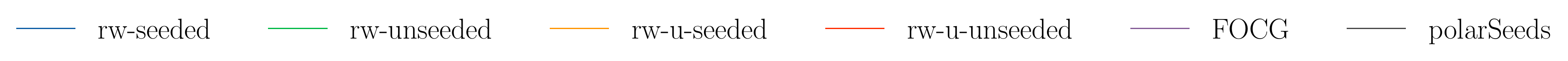}
  }
  \addtocounter{subfigure}{-1}

  \centering
  \subfigure[\small Vary $n$]{
    \label{fig:results_randomGraph_n_accuracy}
    \includegraphics[height=\smallfigheight,width=\smallfigwidth]{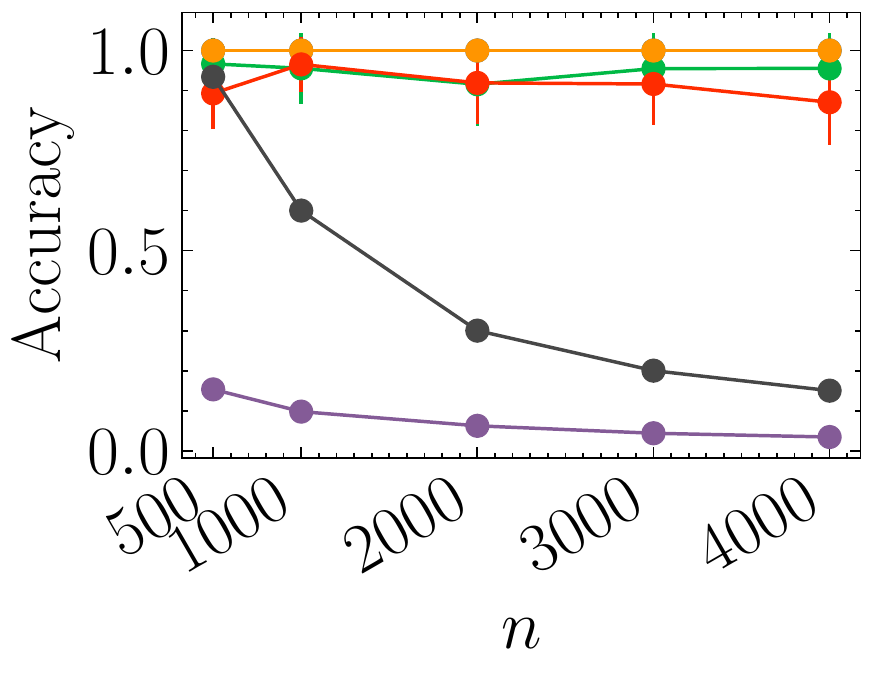}
  }\hspace*{\smallfigsep}
  \subfigure[\small Vary $k$]{
    \label{fig:results_randomGraph_k_accuracy}
    \includegraphics[height=\smallfigheight,width=\smallfigwidth]{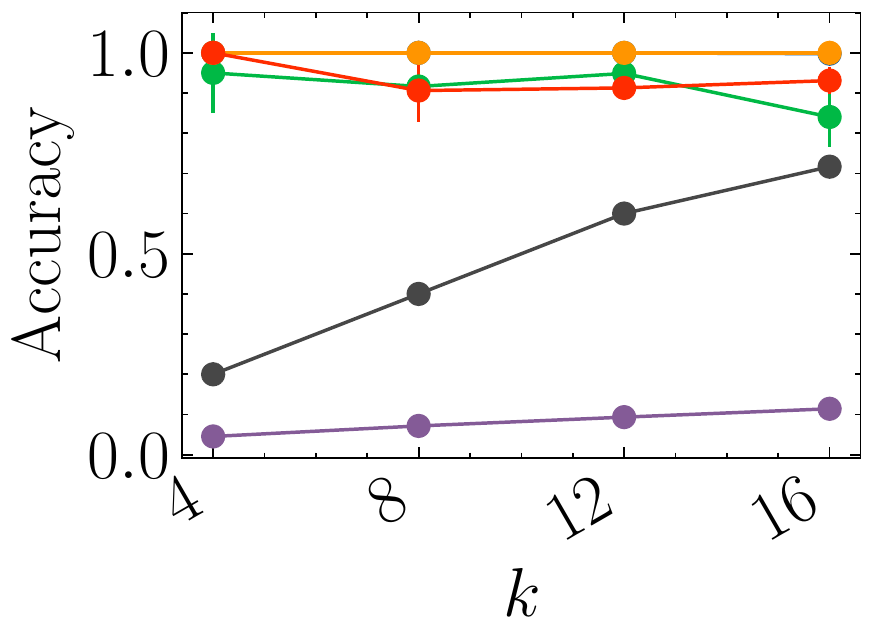}
  }\hspace*{\smallfigsep}
  \subfigure[\small Vary \#steps]{
    \label{fig:results_randomGraph_numSteps_accuracy}
    \includegraphics[height=\smallfigheight,width=\smallfigwidth]{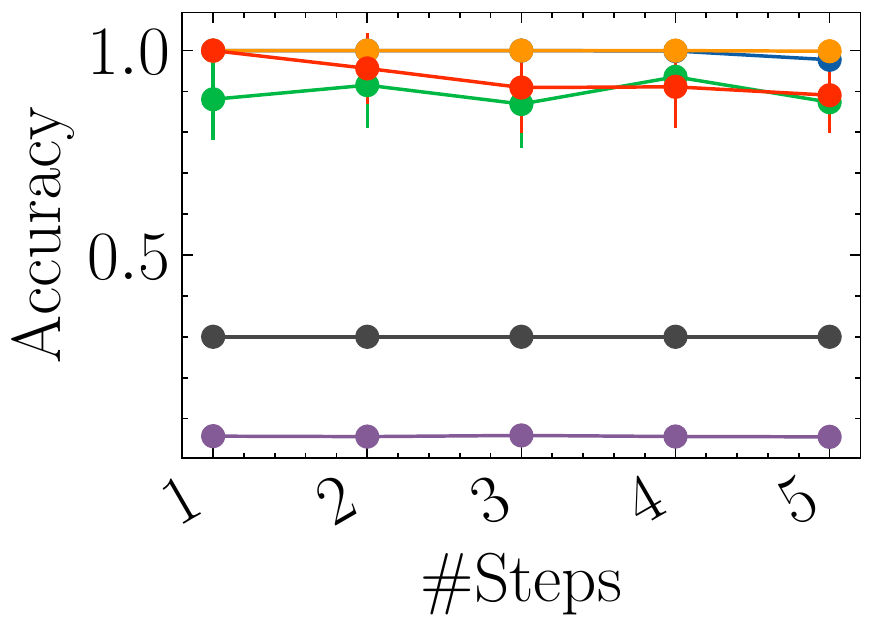}
  }\hspace*{\smallfigsep}
  \subfigure[\small Vary \#walks]{
    \label{fig:results_randomGraph_numWalks_accuracy}
    \includegraphics[height=\smallfigheight,width=\smallfigwidth]{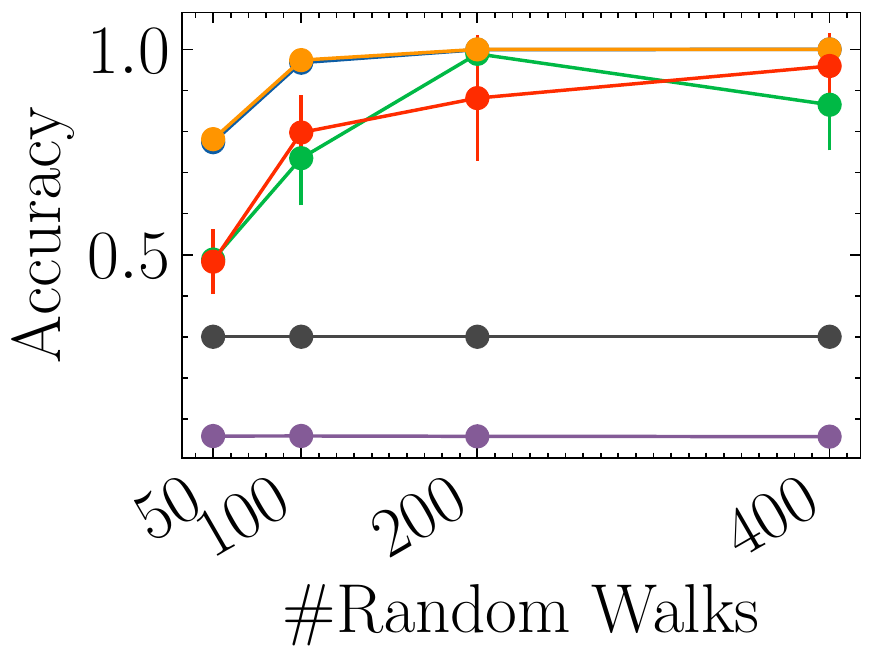}
  } \\
  \subfigure[\small Vary $n$: avg.\ query time]{
    \label{fig:results_randomGraph_n_time}
    \includegraphics[height=\smallfigheight,width=\smallfigwidth]{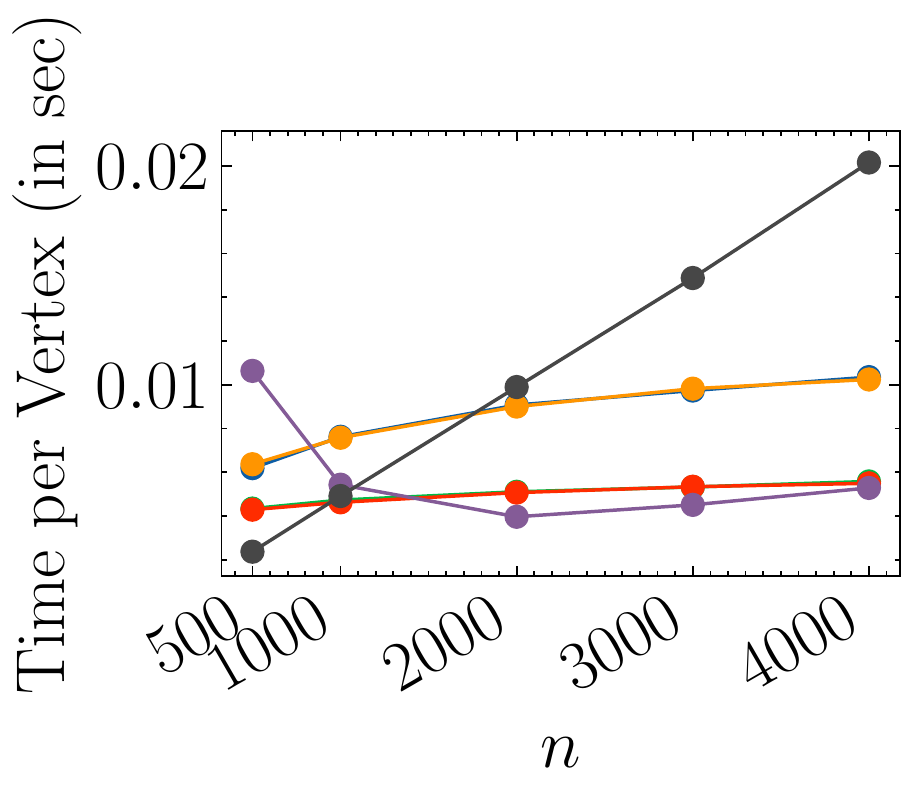}
  }\hspace*{\smallfigsep}
  \subfigure[\small Vary $k$: time]{
    \label{fig:results_randomGraph_k_time}
    \includegraphics[height=\smallfigheight,width=\smallfigwidth]{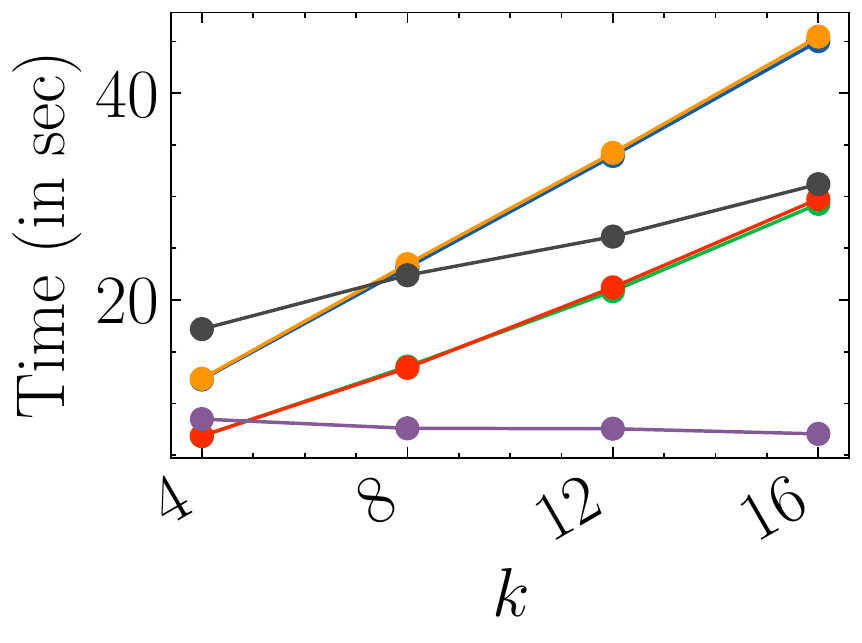}
  }\hspace*{\smallfigsep}
  \subfigure[\small Vary \#steps: time]{
    \label{fig:results_randomGraph_numSteps_time}
    \includegraphics[height=\smallfigheight,width=\smallfigwidth]{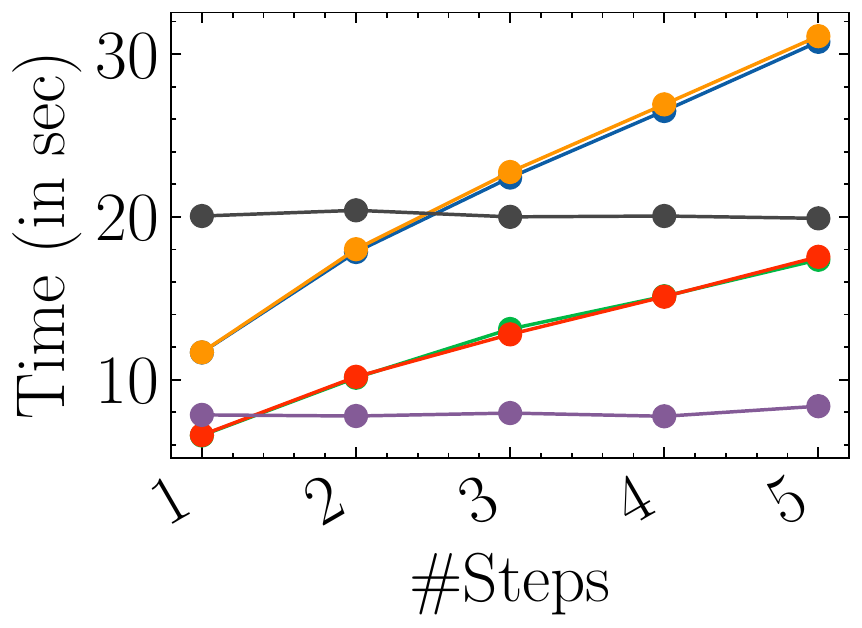}
  }\hspace*{\smallfigsep}
  \subfigure[\small Vary \#walks: time]{
    \label{fig:results_randomGraph_numWalks_time}
    \includegraphics[height=\smallfigheight,width=\smallfigwidth]{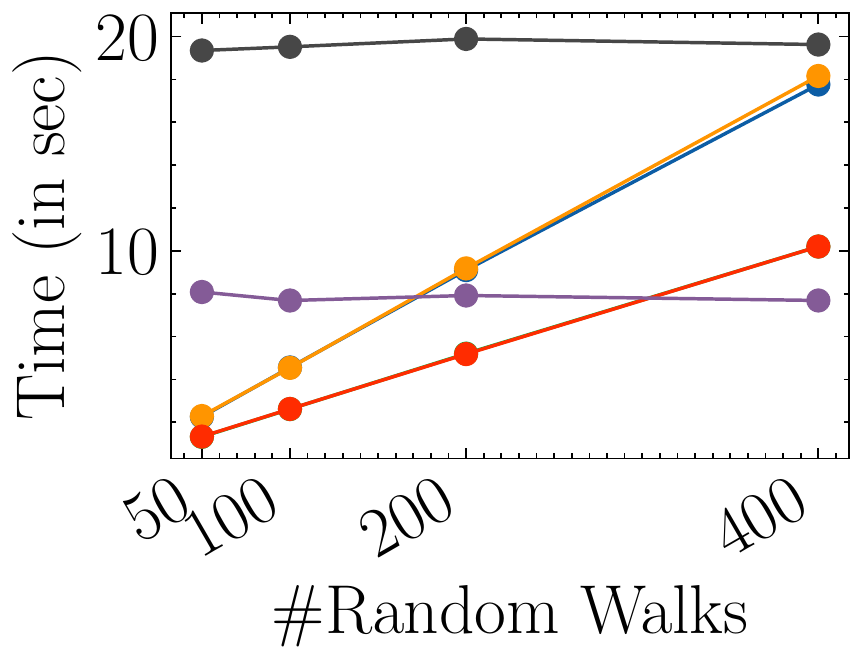}
  } \\

  \end{center}
  \caption{Clustering results on synthetic data.
	  We consider varying number of vertices~$n$
		  (Figs.~\subref{fig:results_randomGraph_n_accuracy}
		  and \subref{fig:results_randomGraph_n_time}),
		varying number of communities~$k$
			(Figs.~\subref{fig:results_randomGraph_k_accuracy} and
			\subref{fig:results_randomGraph_k_time}), 
		varying random walk length
			(Figs.~\subref{fig:results_randomGraph_numSteps_accuracy} and
			\subref{fig:results_randomGraph_numSteps_time}),
		  and varying number of random walks
			  (Figs.~\subref{fig:results_randomGraph_numWalks_accuracy} and
			   \subref{fig:results_randomGraph_numWalks_time}).
	  Figs.~\subref{fig:results_randomGraph_n_accuracy}--\subref{fig:results_randomGraph_numWalks_accuracy}
	  report the clustering accuracies,
	  Fig.~\subref{fig:results_randomGraph_n_time} reports the running time per
	  vertex in seconds, and
	  Figs.~\subref{fig:results_randomGraph_k_time}--\subref{fig:results_randomGraph_numWalks_time}
	  report total running times in seconds.
	  Markers are mean values and error bars are one standard deviation
	  over 5 datasets. }
  \label{fig:experiments}
\end{figure*}
In Fig.~\ref{fig:results_randomGraph_n_accuracy} we vary
the number of vertices~$n$ while keeping $k=6$ fixed. \rwseeded and \rwuseeded
deliver almost perfect accuracy, i.e., they classify almost all vertices
correctly; we note that in the plot, the lines of \rwseeded and \rwuseeded are
essentially identical and thus the line for \rwseeded is hard to see.
\rwunseeded and \rwuunseeded also deliver good results. Furthermore, \polarSeeds
works well when the clusters are small (for $n=500$ there are 83 vertices in
each cluster) but its performance decays as the clusters get larger. \FOCG
generally returns clusterings of low quality because it returns many clusters of
very small sizes.

In Fig.~\ref{fig:results_randomGraph_k_accuracy}, we fix
$n$ and vary $k=4,8,12,16$. Again, we observe a similar behavior as before: our
oracles outperform the competitors, and the competitors improve for smaller
clusters ($k$~larger).

We also varied the parameters for the oracles. In
Fig.~\ref{fig:results_randomGraph_numSteps_accuracy}, we
set the number of random walk steps to $1,2,3,4,5$. We see that even with
very short random walks, the algorithms deliver very good results. However, 
as the number of steps increases, the solution quality slightly decreases (see,
e.g., \rwseeded or \rwuunseeded).  This confirms the theoretical analysis of
Lem.s~\ref{lem:close} and~\ref{lem:dissimilar-distributions}.

In Fig.~\ref{fig:results_randomGraph_numWalks_accuracy},
we set the random walks lengths to $50,100,200,4000$.  \rwseeded and \rwuseeded
return excellent clusterings when at least $200 \approx 4.5\cdot\sqrt{n}$ random
walks are performed.

In 
Figs.~\ref{fig:results_randomGraph_n_time}--\subref{fig:results_randomGraph_numWalks_time}
we report the running times of the algorithms. Our oracles scale
linearly in %
the number of steps
(Fig.~\ref{fig:results_randomGraph_numSteps_time}), and the
number of random walks
(Fig.~\ref{fig:results_randomGraph_numWalks_time}).
Furthermore, since our number of seed nodes depends on the number of
communities~$k$, the oracles scale linearly in~$k$
(Fig.~\ref{fig:results_randomGraph_k_time}).
In
Fig.~\ref{fig:results_randomGraph_n_time} we report the
running time, normalized by the number of vertices in the graph; for our oracle
data structures this corresponds to the time they spend on each query. We observe
that the query times of the oracles increases only very
moderately as the number of vertices~$n$ increases;  we blame this slight
increase on the internal data structures (such as hash maps) that we use to
store our graphs. This is in contrast to \polarSeeds, for which the running time
per vertex is increasing
(Fig.~\ref{fig:results_randomGraph_n_time}).
For \FOCG we observe that it scales linearly in the number of vertices (since in
Fig.~\ref{fig:results_randomGraph_n_time} the average
time per vertex is nearly constant for $n\geq1000$) and its running time
slightly decreases as it finds better communities
(Figs.~\ref{fig:results_randomGraph_k_accuracy}
 and~\ref{fig:results_randomGraph_k_time})

\emph{Biclustering experiments.}
In Fig.~\ref{fig:bi-experiments} we present our results for finding biclusters
$(V_1,V_2),(V_3,V_4),\dots,(V_{2k-1},V_{2k})$. Thus, we run the biclustering
versions of the algorithms.  The oracles used $t=2$~random walk steps and
$R=600$~random walks, unless stated otherwise. The seeded oracles and \polarSeeds
obtained 3~seed vertices from each $V_i$; the unseeded oracles randomly sampled
$6k$~seed vertices in the preprocessing. 

\begin{figure*}[t!]
  \begin{center}
  \subfigure{
	  \includegraphics[width=3\smallfigwidth]{legend_randomGraph_k_accuracy_biclustering}
  }
  \addtocounter{subfigure}{-1}

  \centering
  \subfigure[\small Vary $n$]{
    \label{fig:bi-results_randomGraph_n_accuracy_biclustering}
    \includegraphics[height=\smallfigheight,width=\smallfigwidth]{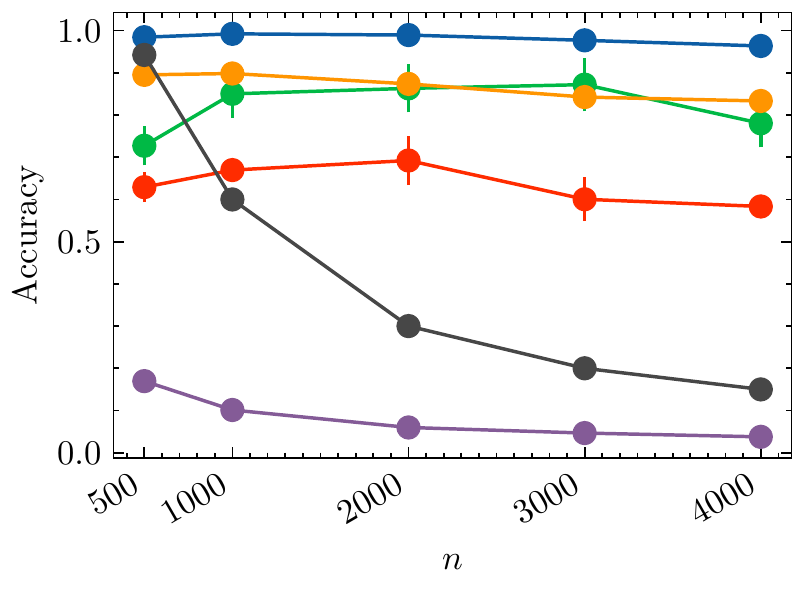}
  }\hspace*{\smallfigsep}
  \subfigure[\small Vary $k$]{
    \label{fig:bi-results_randomGraph_k_accuracy_biclustering}
    \includegraphics[height=\smallfigheight,width=\smallfigwidth]{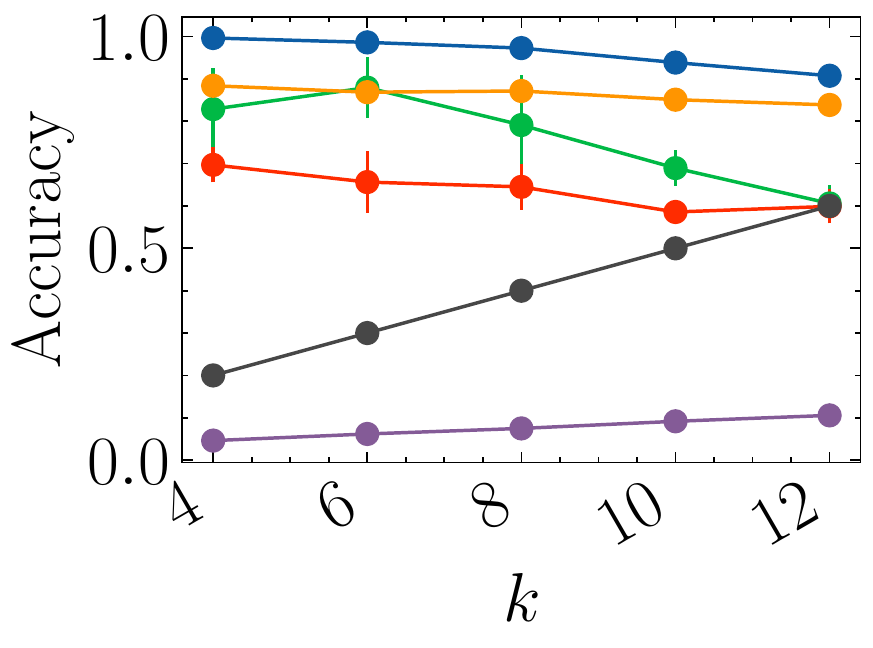}
  }\hspace*{\smallfigsep}
  \subfigure[\small Vary \#steps]{
    \label{fig:bi-results_randomGraph_numSteps_accuracy_biclustering}
    \includegraphics[height=\smallfigheight,width=\smallfigwidth]{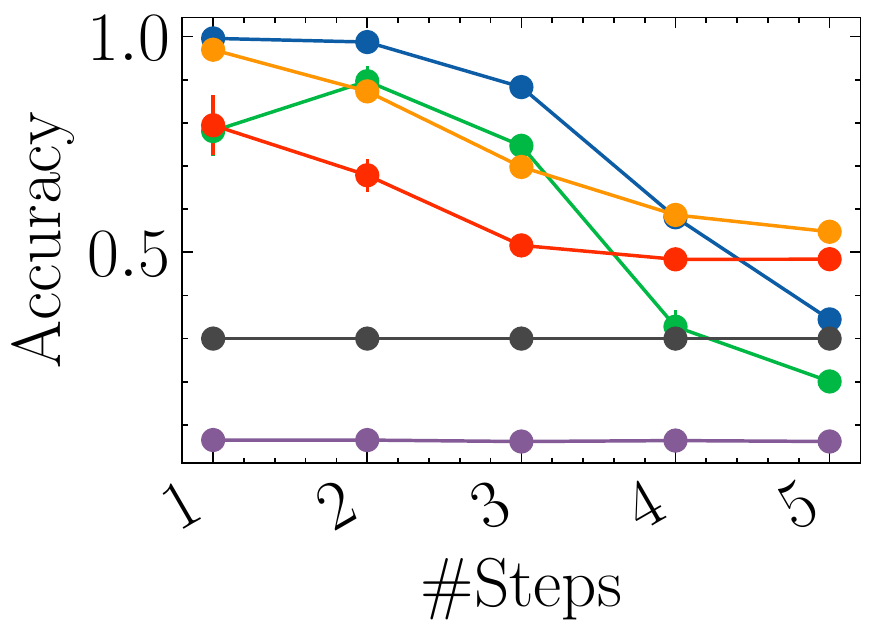}
  }\hspace*{\smallfigsep}
  \subfigure[\small Vary \#walks]{
    \label{fig:bi-results_randomGraph_numWalks_accuracy_biclustering}
    \includegraphics[height=\smallfigheight,width=\smallfigwidth]{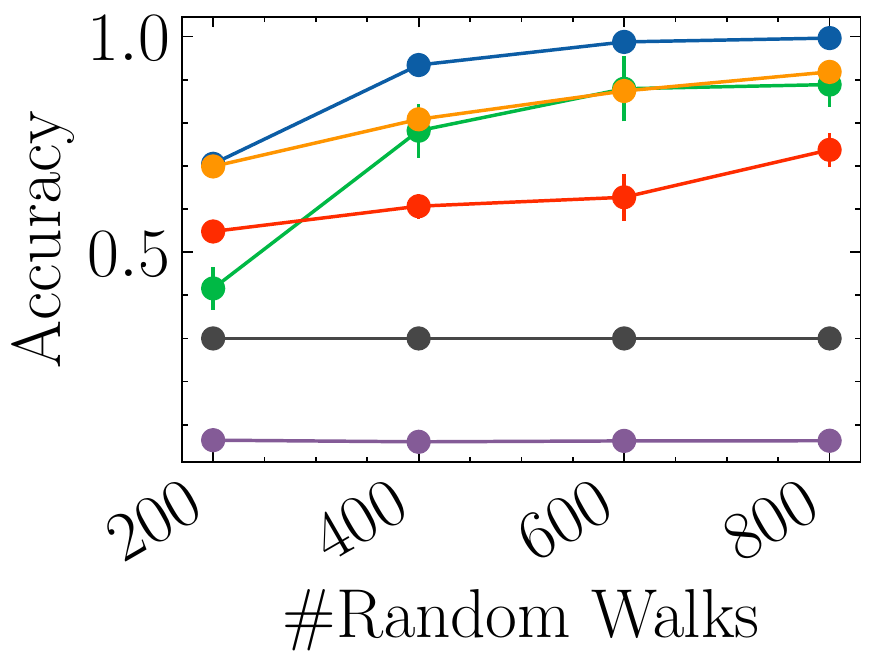}
  } \\
  \end{center}
  \caption{Biclustering results on synthetic data.
	  We vary the number of vertices~$n$
		  (Fig.~\subref{fig:bi-results_randomGraph_n_accuracy_biclustering}),
		 the number of communities~$k$
			 (Fig.~\subref{fig:bi-results_randomGraph_k_accuracy_biclustering}),
		 the random walk length
			 (Fig.~\subref{fig:bi-results_randomGraph_numSteps_accuracy_biclustering})
		and  the number of random walks
		(Fig.~\subref{fig:bi-results_randomGraph_numWalks_accuracy_biclustering}).
	  We report the achieved accuracies;
	  markers are mean values over 5 different datasets, and error bars are one
	  standard deviation over the 5 datasets. }
  \label{fig:bi-experiments}
\end{figure*}

Again, our oracles obtain better results than the baseline algorithms, which
typically return too small clusters.
Furthermore, the signed oracles \rwseeded and \rwunseeded
outperform the unsigned oracles \rwuseeded and \rwuunseeded, resp. This shows that the
edge signs are necessary to split the clusters $U_i$ into
biclusters $V_{2i-1}$ and $V_{2i}$. 
Compared to the clustering setting from before,
the biclustering algorithms are more sensitive to the number steps
(Fig.~\ref{fig:bi-results_randomGraph_numSteps_accuracy_biclustering}),
and they also require more random walks
(Fig.~\ref{fig:bi-results_randomGraph_numWalks_accuracy_biclustering}).

\emph{Conclusion.}
Our experiments suggest that our oracles outperform the baselines when the
clusters are large. Also, to recover the biclusters $(V_{2i-1},V_{2i})$, it is
necessary to use the edge signs.  Furthermore, the seeded methods outperform the
unseeded methods.

\end{document}